\theoremstyle{theorem}
\newtheorem{theorem}{Theorem}
\newtheorem{definition}[theorem]{Definition}
\newtheorem{example}[theorem]{Example}
\newtheorem{corollary}[theorem]{Corollary}
\newtheorem{lemma}[theorem]{Lemma}
\newcommand{\x}{x}
\newcommand{\coloneqq}{:=}
\newcommand{\kl}[1]{\left(#1\right)}
\newcommand\skl[1]{(#1)}
\newcommand\mkl[1]{\big(#1\bigr)}
\newcommand\set[1]{\left\{#1\right\}}
\newcommand\sset[1]{\{#1\}}
\newcommand\norm[1]{\left\Vert#1\right\Vert}
\newcommand\snorm[1]{\lVert#1\rVert}
\newcommand\mnorm[1]{\bigl\lVert#1\bigr\rVert}
\newcommand\abs[1]{\left\lvert#1\right\rvert}
\newcommand\sabs[1]{\lvert#1\rvert}
\newcommand\inner[2]{\left\langle  #1,#2 \right\rangle}
\newcommand\sinner[2]{\langle  #1,#2 \rangle}
\newcommand\R{\mathbb{R}}
\newcommand\N{\mathbb{N}}
\newcommand\F{\mathcal{F}}
\newcommand\rf{R}
\newcommand{\la}{\lambda}
\newcommand{\La}{\Lambda}
\newcommand{\Om}{\Omega}
\newcommand{\eps}{\varepsilon}
\DeclareMathOperator{\wk}{\mathbf P}
\DeclareMathOperator*{\argmin}{argmin}
\DeclareMathOperator{\samp}{\boldsymbol S}
\DeclareMathOperator{\iop}{\boldsymbol E}
\DeclareMathOperator*{\minimize}{min}
\DeclareMathOperator*{\st}{s.t.}
\newcommand{\Id}{\boldsymbol{I}}
\newcommand{\U}{\mathcal U}
\newcommand{\Fs}{\mathcal F}
\newcommand{\X}{\boldsymbol X}
\newcommand{\rE}{\boldsymbol E}
\newcommand{\Po}{\mathbf P}
\DeclareMathOperator{\A}{\boldsymbol K}
\DeclareMathOperator{\KK}{\mathcal A}
\newcommand{\kk}{a}
\DeclareMathOperator{\T}{\boldsymbol \Pi}
\DeclareMathOperator{\base}{\boldsymbol \Phi}
\newcommand{\figref}[1]{\textbf{Figure~\ref{#1}}}
\begin{document}

\title{A Variational View on Statistical Multiscale Estimation}

\author{Markus Haltmeier} 
\affil{Department of Mathematics, University of Innsbruck\\
Technikestra{\ss}e 13, A-6020 Innsbruck, Austria\\
Email: {\tt markus.haltmeier@uibk.ac.at}}

\author{Housen Li, and Axel Munk}
\affil{Institute of Mathematical Stochastics, University of G\"ottingen\\
Goldschmidtstr.\ 7,  37077 G\"ottingen, Germany\\
Email: {\tt housen.li@mathematik.uni-goettingen.de}, {\tt munk@math.uni-goettingen.de}} 

\maketitle

\begin{abstract}
We present a unifying view on various statistical estimation techniques including {penalization}, variational and thresholding methods. These estimators will be analyzed in the context of statistical linear inverse problems including nonparametric and change point regression, and high dimensional linear models as examples.  
 Our approach reveals many seemingly unrelated estimation schemes as special instances of a general class of variational multiscale estimators, named MIND (MultIscale Nemirovskii--Dantzig). These estimators {result from minimizing certain regularization} functionals under convex constraints that can be seen as multiple statistical tests for local hypotheses. 
 
For computational purposes, we recast MIND in terms of simpler unconstraint optimization problems via  Lagrangian penalization as well as Fenchel duality. Performance of several MINDs is demonstrated on numerical examples. 

\bigskip\noindent\textbf{Keywords:}
Fenchel duality, Lagrangian formulation, nonparametric regression, statistical imaging, change points, {wavelets,} high-dimensional linear models, variational estimation.

\end{abstract}

\tableofcontents

\section{Introduction}
Recovering a (typically high dimensional) parameter vector $\beta \in \R^p$ from (possibly indirect) noisy observations  is a fundamental task in {modern data analysis and has been a long standing topic of intense investigation in} statistical, mathematical, and related sciences.  Applications are vast. What we have primarily in mind are imaging and signal recovery problems, such as tomography or spectroscopy. Efficient recovery of these signals (we use the term \emph{signal} in the following for a general $\beta$, including images as well) always relies on structural prior information, which is in many cases given by neighborhood information on the structure of the signal. In the context of imaging a neighborhood often is with respect to spatial distance, e.g.\ expressed in certain smoothness assumptions. In other scenarios (e.g.\ graphs) this should be understood as \enquote{structurally similar regions}. Other assumptions may concern certain features such as peaks of a signal or texture of an image.  Although a serial signal is of physical space dimension one and an image of dimension two or three,  the mathematical  \enquote{effective dimension} of the problem is given through the complexity of the modeling function system of the signal. For example, if a one-dimensional signal is sampled at $n$ points and is assumed to be potentially different at each sampling point, the effective dimension $p$ may be thought of $p=n$. Related to this are problems that are \enquote{truly} high dimensional in the sense that potentially many more coefficients (or parameters) have to be estimated than observations available, {viz.\ $p\gg n$.} Applications include high throughput data in various \enquote{omics} technologies in genetics or large scale networks, to mention a few. 

Probably the most prominent and simplest unifying model in this context assumes a linear relationship $\X \in \R^{n \times p}$ between the unknown  coefficient vector $\beta \in \R^p$ and the observations $Y = \kl{Y_1, \dots, Y_n}$,
\begin{equation}  \label{eq:ip}
	 Y_i
	 =
	 \kl{\X  \beta}_{i}
	 +
	 \eps_i
	 \qquad \text{ for }
	 \;
	  i  = 1,  \dots,  n \,.
\end{equation}
Here  $n$ is  the number of observations, $p$ is the number of unknown parameters, $ Y_i$ are the observed data, and $\eps_i$'s model the error (noise) in the observations. Throughout this paper we assume for the sake of simplicity a white noise error, i.e.\ the $\eps_i$'s  are independent and  normally distributed random variables $\mathcal{N} \skl{0, \sigma^2}$ with mean zero and variance $\sigma^2$. We stress, however, that much of the following can be extended  to other models, e.g.\ to generalized linear exponential family regression, and to data with certain dependencies {or heterogeneous errors.} 

{
Some convention of notation is as follows: Matrices and linear operators are in boldface. Vectors are in the column form and are written in normal font. For a vector $v = (v_i)_{i = 1}^n$ in $\R^n$, we denote the $\ell^0$-quasi-norm by $\norm{v}_0 = \# \set{i : v_i \neq 0, i = 1, \ldots, n}$, the $\ell^2$-norm by $\norm{v}_2 = (\sum_{i=1}^n v_i^2)^{1/2}$, and the $\ell^{\infty}$-norm by $\norm{v}_\infty = \max_{1 \le i \le n} \abs{v_i}$. In case of $n = p$, i.e.\ the number of observation equals the number of parameters, we always use $n$ in place of $p$. Let  $x_+= \max \set{x,0}$ denote the positive part of a real variable $x$. 
}

\subsection{The linear model: Examples}

The linear model in~\eqref{eq:ip}  contains many important instances of statistical models. Concrete examples we shall consider are as follows.

\begin{enumerate}[label=(\Roman*)]
\item
\textsc{Non-parametric regression.\label{M1}}
Here we take {$p = n$, $\X = \Id$ as the $n$-dimensional identity matrix},
and the unknown parameter
\begin{equation*}
\beta = \samp_n \kl{f}  \coloneqq  \kl{f \skl{x_{i}}}_{i=1}^n
\end{equation*}
as values of a regression function $f \colon [0,1]^d \to \R$ at sampling points $x_{i}\in [0,1]^d$ for some dimension $d \ge 1$ (here the $d$-dimensional unit cube is chosen for simplicity). Given {the vector} $\beta$, the full regression function $f$ can be recovered by some interpolation or approximation procedure that relies on additional structural assumptions on $f$, encoded in a function space $\mathcal{F}$,  such as certain smoothness properties. These can be often expressed conveniently in terms of approximation properties with respect to a certain function system, such as radial basis functions \citep{Wen05}, splines \citep{Wah90}, polynomials, trigonometric series \citep{WaSh01}, or wavelets \citep{Mal09}. Many estimators do not explicitly reflect this decomposition and estimation of $\beta$ and interpolation
are performed simultaneously.

\item \textsc{Linear  inverse problems.\label{M2}}
In this example, the unknown $\beta = \samp_p \kl{f} = \kl{f \skl{x_j}}_{j=1}^p$
consists of  values of an element in some function space $\F$ as in model \ref{M1}. It is linked to the observational vector by a \emph{system matrix} 
\begin{equation*}
	\X = \samp_n \circ \A \circ \iop_p \colon \R^p \to \R^n,
\end{equation*}
which is the composition of  an abstract interpolation operator
$\iop_p \colon \R^p \to \F$, a  linear operator
$\A  \colon \F \to \U$ modeling the particular inverse problem,
and  another  sampling operator $\samp_n \colon \U \to \R^n$  (with a slight abuse of notation)
on the model space $\U \supset \A (\F)$. Prominent examples include
the Radon transform in computed tomography \citep{Don95b,NaWu01}, the Fourier transform in magnetic resonance tomography \citep{Eps08} or convolution with a point spread function in optics \citep{BeBDV09,AsEM15}, ranging from astronomical imaging to high resolution microscopy. For $\A$ the identity, we obtain the regression model \ref{M1}.

\item \textsc{Change point detection.\label{M3}}
We consider $f\colon [0,1] \to \R$ as in model \ref{M1} with $d = 1$, and  take $\beta_i  =  f \kl{x_i} - f \kl{x_{i-1}}$,
for $i  \in \set{2, \dots , n}$, as the
jump sizes of the piecewise constant regression function
$f $ sampled at the locations $x_i$. Further we denote by
$ \beta_1 = f \kl{x_1}$ its offset.
The function $f$ then can be recovered from
$\beta = \kl{\beta_i}_{i=1}^n$  by $f \kl{x_i} = \sum_{k=1}^i \beta_k$  for $i = 1, \dots, n  \,.$
The relation between the jump sizes $\beta$ and data
$Y_i$ can be written in the form of \eqref{eq:ip}, where
the system matrix $\X$ takes the form  \citep{BoyKemLieMunWit09}
\begin{equation*}
\X
=
\begin{pmatrix}
1  & 0 & \cdots & 0 \\
\vdots  &  \ddots   &   & \vdots \\
\vdots  &  &   \ddots&   0\\
1 & \cdots &   \cdots  & 1
\end{pmatrix} \in \R^{n \times n}\,.
\end{equation*}
Taking $p=n$ reflects the fact that no a-priori
assumptions on the number and locations of  the jumps in
this model have been posed. Neighboring information amounts to information on length of connected segments, i.e.\ how non-zero $\beta_i$'s are located.

\item \textsc{High dimensional regression.\label{M4}}
Here the  unknown $\beta \in \R^p$ is a high dimensional
parameter vector, typically  $p \gg n$, which means that the number of unknown parameters is  much larger than the number of observations. In contrast to the first two examples, {in general,}
no neighboring structure on $\beta$  is posed, rather  a sparsity assumption {(only a few of the $p$ coefficients $\beta_i$ are nonzero)} that constrains  the set of possible  solutions (see e.g.\ \citealp{BueGee11,Wai19} and the references therein).
\end{enumerate}

{Lastly, we stress that many models can be cast in the form of \eqref{eq:ip}, but the error may not be an independent Gaussian white noise (thus beyond the scope of this paper). For example, consider the errors-in-variable (or measurement error) model given by
\begin{equation}\label{eq:eivm}
Y_i = (\X \beta)_i + \eps_i,\quad i = 1, \ldots, n, \quad \text{ and } \quad \tilde \X = \X + \rE\; \in\; \R^{n\times p},
\end{equation}
where we assume for simpilicity that $\eps_i$'s are independent $\mathcal{N}(0, \sigma^2)$ with variance $\sigma^2$, entries of $\rE$ are independent $\mathcal{N}(0, \tau^2)$ with variance $\tau^2$, and $\eps_i$'s and  entries of $\rE$ are independent. 
The aim is to recover the unknown parameter $\beta$ given observations $Y = (Y_i)_{i=1}^n$ and $\tilde \X$ (which is a perturbed version of~$\X$). \eqref{eq:eivm} can be rewritten as 
$$
Y_i = (\tilde\X \beta)_i + \tilde\eps_i \qquad \text{with}\quad \tilde\eps_i = \eps_i - (\rE\beta)_i.
$$
Note that $\tilde\eps_i$'s are still independent and Gaussian, more precisely, $\mathcal{N}(0,\, \sigma^2 + \tau^2\norm{\beta}_2^2)$, but their variance now depend on the unknown parameter $\beta$ (see e.g.\ \citealp{CRSC06}).}

\subsection{Types of variational estimation}

We introduce here two prominent types of variational methods for estimating the parameter
$\beta$ in~\eqref{eq:ip}, and examples will be given in subsequent sections.
 
Among the best known approaches is the method of maximum likelihood, which boils down to least squares estimation in our setup. It minimizes the residual sum of squares
\begin{equation}\label{eq:rss}
	C_Y(\beta) := \norm{\X \beta - Y}^2_2	
	=
	\sum_{i=1}^n \abs{ \kl{\X  \beta}_i -  Y_i }^2 \,.
\end{equation}
If $\beta$ is a low dimensional vector, the maximum likelihood estimator
is well known  to be asymptotically normal and efficient under proper regularity conditions on $\X$ (e.g.\ \citealp{vdVa98}).
Note, however, that model \ref{M3} is a notable exception {(see \citealp{KoKo11}).} 

In case that $\beta$ represents many degrees of freedom, minimizing the residual sum of squares leads to over-fitting of the data. See e.g.\ \citet{Port88} for conditions on $p$ when the maximum likelihood estimator fails to be asymptotically normal. In a certain sense, this is even true, when $\beta$ is assumed to have only a few nonzero coefficients (sparseness) because the model selection error is not uniformly controllable \citep{LePo06}, {without further assumptions}. As the system matrix $\X$ becomes more ill-conditioned (i.e.\ one or more eigenvalues of $\X^\mathsf{T} \X$ are close to zero), all this becomes even more critical because the fluctuations of $\beta$ are heavily damped through $\X$ and the reconstruction process becomes more unstable (see \citealp{Osu86} for some examples).

Hence, any reasonable  estimation procedure for a high dimensional
vector $\beta$  has, either implicitly or explicitly, to account
for additional properties of the unknown parameter,
such as smoothness, sparsity, or other structural information (recall models \ref{M1}--\ref{M4}), i.e.\ to \emph{regularize} the solution.
Such a-priori information can be incorporated by requiring the
value of a \emph{regularization functional} $\rf \colon \R^p \to \R \cup \set{\infty}$ to be small
at the particular estimate. In the following, we discuss two types of approaches, at a first glance, seemingly unrelated. We are aware that our treatment is {incomplete}. For example, there is a rapidly growing literature on (nonparametric) Bayesian techniques to advise such regularization as well (see e.g.\ \citealp{GhovdV17}), which is, however, beyond the scope of this survey. 

\subsubsection{Penalized estimation}\label{sss:pe}
Probably the most prominent approach to include such structural information on
$\beta$ into the estimation process is to incorporate a \emph{regularization functional} $R$  into a \emph{data fidelity
term} $G(\X \beta; Y )$ that measures the discrepancy from the data, typically as an additive penalty. 
The penalized estimator results from a solution of the Lagrangian variational problem
\begin{equation} \label{eq:rss-pen}
\minimize_{\beta \in \R^p}G(\X \beta; Y ) +  \gamma  \rf\skl{\beta},
\end{equation}
for some $\gamma > 0$.  In case of $G(\X \beta; Y) = C_Y(\beta)$ in \eqref{eq:rss}, this is known as the \emph{penalized least squares} estimator, {which is a particular case of the log-likelihood function of the model. General $G$ results in e.g.\  \emph{penalized maximum-likelihood} estimation (see e.g.\ \citealp{EggLar09}). Solution of \eqref{eq:rss-pen}} yields regularized estimates with regularity measured in terms of the regularization functional $R$. A plethora of estimators follow this paradigm, some instances of $R$ will be discussed later on. 

{The proper choice of penalty parameter $\gamma$ in \eqref{eq:rss-pen} relies on the precise structure of the signal, which is not available in practice. Thus, a data driven strategy for the selection of $\gamma$ is needed, and turns out to be a challenging and delicate issue. A wide range of methods have been suggested in the literature. Details are far beyond the scope of this paper, so we just mention Mallows' $C_p$ \citep{Mal00,LiWe20},  cross-validation \citep{All74,Sto74}, generalized cross-validation \citep{Wah77}, plug-in techniques \citep{Loa99}, bootstrap based methods \citep{Bre92,Sha96}, and techniques that are built on the Lepski\u{\i} balancing principle \citep{Lep90,LeMS97}, to name only a few. However, in the following, we present a constrained formulation of \eqref{eq:rss-pen}, which offers a statistically simpler strategy of the corresponding threshold parameter selection and thereby circumvents the difficulties encountered with the choice of $\gamma$ to some extent. }

\subsubsection{Constrained estimation} 
A seemingly different approach to incorporate $R$ into the estimation process
is based on the idea to use the data fidelity term as a constraint, resulting in the \emph{constrained} estimator
\begin{equation} \label{eq:rss-const}
	\left\{
	\begin{aligned}
	&\minimize_{\beta \in \R^p}  &&  \rf\skl{\beta}   \\
	&\st &&
	G(\X \beta; Y)
	\leq   q
	\;,
	\end{aligned}
	\right.
\end{equation}
for some threshold $q > 0$. It reduces to the \emph{constrained least squares} estimator when $G(\X \beta; Y) = C_Y(\beta)$ in \eqref{eq:rss}.  In fact, it is easily seen and well known that constrained estimation is closely related to penalized estimation in \eqref{eq:rss-pen}, see \cref{sec:variational}.  However, even in the simple case of $G(\X \beta; Y) = C_Y(\beta)$, the correspondence between the two parameters $\gamma$ in \eqref{eq:rss-pen} and $q$ in \eqref{eq:rss-const} is not given explicitly  and depends on the data $Y$.  It is exactly the lack of this explicit correspondence that makes the different nature of these estimators, as the (data driven) choice of $\gamma$ in \eqref{eq:rss-pen} and $q$ in \eqref{eq:rss-const} is difficult to \enquote{translate} from the constrained formulation into the penalized one and vice versa. {As mentioned, the choice of $q$ in \eqref{eq:rss-const} is comparably easier than that of $\gamma$ in \eqref{eq:rss-pen}, because a proper choice of $q$ depends essentially on the distribution property of the error $\eps_i$ in \eqref{eq:ip}, which is usually (approximately) available. One exemplary method is Morozov's discrepancy principle \citep{Mor66}, see also \cref{sec:MIND}.}

In the following, some prototypical estimators will be introduced.

\subsection{Examples and spatial adaptation}

{In this section, we work with the general model in \eqref{eq:ip}, unless explicitly specified.}

\subsubsection{Smoothing splines}\label{sss:ss}

We start with a prominent instance of the penalized least squares estimator, a \emph{smoothing spline}  \citep{Wah90}. The regularization functional $R$ is the squared $\ell^2$-norm of the (discrete) derivative,
\begin{equation*}
	\rf\skl{\beta}
	=
	\frac{1}{2}
	\sum_{i=1}^{p-1} \abs{ \beta_{i+1}-\beta_i}^2 \,,
\end{equation*}
or some higher order analog. Increasing the threshold $q$ in \eqref{eq:rss-const} (or equivalently the penalty $\gamma$ in \eqref{eq:rss-pen}) yields smoother estimates (i.e.\ solutions of \eqref{eq:rss-pen} and \eqref{eq:rss-const}) and therefore its particular choice sensitively affects the regularity of the resulting estimator. 

\subsubsection{Spatial adaptation}
{Recall that $\beta = \samp_p \kl{f} =  \kl{f \skl{x_{i}}}_{i=1}^p$.}  The parameters $\gamma$ and $q$ in the spline approach in \cref{sss:ss} act globally over the domain of $f$ but the regularity of a regression function $f$ is often \enquote{spatially} varying, {i.e.\ it will depend on $ x\in [0,1]^d$.} This suggests that better results can be obtained by introducing \emph{local weights} $w_i>0$  depending on the spatially varying smoothness, i.e.
\begin{equation*}
	\rf_w \kl{\beta}
	=
	\frac{1}{2}
	\sum_{i=1}^{p-1} w_i \abs{ \beta_{i+1}-\beta_i}^2 \,.
\end{equation*}
Because the local smoothness of the underlying function is unknown one faces the problem of how actually choosing the weights $w_i$ in an adaptive (data driven) manner. This  turns out to be a difficult task as the minimization problems in \eqref{eq:rss-pen} and \eqref{eq:rss-const} when additionally optimizing over $w$ are  not convex anymore, {in general.} Further, identification of the actual estimator for $\beta$ and the weights $w_i$ is in general not possible. A similar comment applies to the attempt to \enquote{localize} other global regularization functionals such as the total variation, although some methods have been suggested (e.g.\ \citealp{DavKov01,DoHR11,LeBe15}). In the following we will therefore give a reformulation of this attempt in a more general context that avoids these difficulties and provides feasible estimators.

\subsubsection{Wavelet soft-thresholding}
Wavelet (and more generally dictionary) based thresholding methods have been proven to provide certain spatial adaptivity without explicitly including spatially varying weights (e.g.\ \citealp{DonJoh94,DonJoh95}). Heuristically speaking, the reason is that the spatial variability has already been incorporated in the basis (or in general dictionary) functions and is respected by the thresholding procedure. This is an important feature, not shared by other series estimators, such as Fourier estimators, which are not localized in time domain (see e.g.\  \citealp{Har97,WaSh01,Tsy09}).  As an introductory example it is illustrative to represent the soft-thresholding wavelet estimator as a constraint estimator as in \cite{Don95}. Assume the nonparametric regression model~\ref{M1}, i.e.\ $\X = \Id$. In its constraint formulation
 the coefficient vector of the soft-thresholded wavelet estimator is obtained as the (unique) minimizer of
$\norm{\beta}_2$ among all  $\beta \in \R^n$
satisfying the constraint
\begin{equation}\label{eq:maxo}
\max_{\la \in \La} \abs{ \inner{\phi_\la}{Y -  \beta} } \leq   q
\end{equation}
(see \cref{thm:soft} in \cref{sec:threshold}).
Here $(\phi_{\lambda})_{\lambda \in \Lambda}$ denotes a system of wavelets or some other system spanning  $\R^n$ (see \cref{sec:threshold} {for examples}). This optimization problem obviously falls in the framework of \eqref{eq:rss-const}. The
residuals $Y - \beta$ are analyzed by the $\ell^\infty$-norm {(i.e.\ the maximum of absolute values in \eqref{eq:maxo})} in the wavelet domain. 

\subsubsection{The Dantzig selector}
The \emph{Dantzig selector} \citep{CanTao07} for \eqref{eq:ip} is
defined as a solution of the optimization problem 
\begin{equation} \label{eq:Dantzig}
	\left\{
	\begin{aligned}
	&\minimize_{\beta \in \R^p}  &&   \sum_{j=1}^p  \abs{ \beta_j }   \\
	&\st &&
	\max_{1 \leq i \leq {p}}
	\abs{ \inner{\X_{i}}{ Y - \X \beta } }
	\leq   q
	\;,
	\end{aligned}
	\right.
\end{equation}
with  $\X_{i}$ denoting  the $i$-th column of  $\X$. Under the assumption that $\X$ satisfies the \emph{restricted isometry property} and $\beta$ is \emph{sparse}, \cite{CanTao07} showed that  $\snorm{ \hat \beta  - \beta}_2^2$ is, with high probability, bounded by a logarithmic quantity times the oracle risk $\sum_{i=1}^p \min \sset{\beta_i^2,  \sigma^2}$.

The Dantzig selector acts on scales which depend on the system matrix $\X$ itself.  It is therefore not necessarily  multiscale in nature: For example in case that $\X$ is the identity matrix {(model \ref{M1})} only the smallest scales are taken into account, i.e.\ {the constraint acts only} on each single observation. Note the difference to the constrained least squares estimators where the side constraint acts only on the largest scale. {Hence, these estimators measure the data fidelity in a complementary way} from a statistical point of view. {Both estimators} can be  extended to a truly multiscale estimator, {which additionally takes all intermediate scales between these two extremes into account,} as we will see in \cref{sec:MIND}.

\subsubsection{The lasso}

The \emph{least absolute shrinkage and selection operator} (lasso; \citealp{Tib96}) has been introduced as a constraint estimator, namely the solution of 
\begin{equation*}
	\left\{
	\begin{aligned}
	&\minimize_{\beta \in \R^p}   \norm{Y- {\X} \beta}_2^2   \\
	&\st 
	\sum_{i=1}^p
	\abs{ \beta_i}
	\leq   c
	\;.
	\end{aligned}
	\right.
\end{equation*}
Note that this is the converse formulation of the constraint estimator in \eqref{eq:rss-const}. 
From this perspective, the Dantzig selector in \eqref{eq:Dantzig} is a \enquote{reverse lasso} with data $\ell^2$ fidelity term replaced by the $\ell^\infty$ norm \citep{BiRT09}. We will investigate this in more detail later on.

\subsubsection{Nemirovskii's estimator}
 \citet{Nem85} introduced for the nonparametric regression model  a particular 
 constrained  estimator which, as the lasso, in a sense 
 has  reversed roles of the constraint and the objective in \eqref{eq:rss-const}. 
 This is done in the context of Sobolev spaces $W^{k,q}$, $k \in \N, 1\le q \le \infty$. In the discrete setting of model \ref{M1} this reads as
 \begin{equation} \label{eq:nemirovski}
	\left\{
	\begin{aligned}
	&\minimize_{\beta \in \R^n}  \norm{Y-  \beta}_{{\cal N}}   \\
	&\st 
	\,\,  \norm{\beta}_{k,q} \leq  c
	\;.
	\end{aligned}
	\right.
\end{equation}
Here $\norm{\beta}_{k,q}$ is a discretized version of the $d$-dimensional $(k,q)$ Sobolev norm $\norm{f}_{W^{k,q}}:= \sum_{0 \leq |l| \leq k}|| D^l f ||_{L^q}$ {and $\beta = \samp_n \kl{f} = \kl{f \skl{x_i}}_{i=1}^n$.}  The norm $\norm{\cdot}_{\cal N}$ is a multiscale analog to the $\ell^\infty$-norm and defined as 
\begin{equation*}
\norm{\beta}_{\cal N} := \sup_{B \in {\cal N}} \frac{1}{\sqrt{|B|}} \abs{\sum_{i \in B} \beta_i}
\end{equation*}
for a system $\cal N$ of sets $B \subset \{1, \cdots, n \}$. As suggested by \citet{Nem85}, $\cal N$ is called \emph{normal} if it obeys a certain covering property, e.g.\ the system of all subsquares or discrete balls does satisfy such a condition. In fact, normal systems of cardinality $\mathcal O(n)$ exist \citep{GrLM18}.  

\subsection{Outline of this paper}

The major aim of this paper is to unify these estimators (and  extensions, which we will discuss later on) and shed some light on their commonalities and differences from a variational point of view.  This allows for some better statistical understanding but also serves as a guide for a unifying algorithmic treatment. We stress that most of this is known and scattered over the literature in different contexts and communities. {However,} we are not aware of a comprehensive and unifying  approach for all these estimators from the view point of optimization characterization. 

The rest of the paper is organized as follows. In \cref{sec:MIND}, the general class of  \enquote{MIND} (MultIscale Nemirovskii--Dantzig) estimators is introduced, which comprises all of the aforementioned estimators and generalizations thereof. In the course of  \cref{sec:threshold,sec:variational} selective instances of MIND are discussed. This includes, for example, various (block) thresholding strategies, multiscale extensions of the Dantzig selector, the group lasso, and (reverse) Nemirovskii's estimator.  The distributional properties of the multiscale constraint of MIND, are summarized in \cref{sec:dist}. The implications for the selection of a proper threshold for MIND are also discussed there.  In \cref{sec:num} {various} algorithms for the computation of MIND are discussed, based on which several numerical examples are presented as illustrations. All proofs are deferred to the Appendix.

\section{MIND: The multiscale Nemirovskii--Dantzig estimator}\label{sec:MIND}

Classical variational  methods, such as penalized or constrained least squares, control the residual vector $Y - \X \beta $ {in a global way (thus on a single scale only)}, in general. This is in contrast to dictionary based multiscale methods, like wavelet estimators. 

The MIND estimator introduced in \cref{def:mind} below can be seen as a hybrid approach between variational methods and  dictionary (e.g.\ wavelet) based multiscale methods. It follows the philosophy of dictionary based methods to analyze the residual vector simultaneously over a whole family of scales and locations. At the same time it does not necessarily rely on an explicit dictionary expansion for its regularization term. Instead it allows for a more general variational regularization formulation to employ smoothness information about the unknown $\beta$ in terms of the regularization functional $R$.

\begin{definition}[MultIscale\label{def:msd} Nemirovskii--Dantzig estimator, MIND; \citealp{GrLM18}]\label{def:mind}
For  a given index set $\KK$, let $\skl{\T_{\kk}}_{\kk\in\KK}$
be a  family of linear transformations
\begin{equation*}
    \T_{\kk} \colon \R^{n} \to \R^{n_\kk}\,, \quad  \text{ for } \kk\in\KK \,,
\end{equation*}
$\skl{w_\kk}_{\kk\in\KK}$ a family of positive numbers and $\skl{s_\kk}_{\kk\in\KK}$  a family of nonnegative  numbers. Let also $R \colon \R^p \to \R \cup \set{\infty}$ be a functional.
Any solution of the constrained optimization problem
\begin{equation} \label{eq:mind}
	\left\{
   \begin{aligned}
	&\minimize_{\beta \in \R^p}  \quad \rf\skl{\beta}   \\
	&\st \quad
	\max_{\kk\in\KK}\left \{
	\frac{\norm{ \T_{\kk} \kl{ Y - \X \beta } }_2}{w_\kk}-s_\kk  \right \}
	\leq   q
	\end{aligned}
\right.
\end{equation}
is called a \emph{MIND} for \eqref{eq:ip} with regularizer $R$, probe functionals $\skl{\T_{\kk}}_{\kk \in \KK}$, weights $\kl{w_\kk}_{\kk \in \KK}$, $\skl{s_\kk}_{\kk\in\KK}$ and threshold $q>0$.
\end{definition}

Obviously, the  Dantzig selector in \eqref{eq:Dantzig} is a special instance of~\eqref{eq:mind} as the columns of $X$ can be absorbed into the matrix $\T_{\kk}$; Also $w_\kk =1$, $s_\kk = 0$ and $\rf\skl{\beta}$ equals the (one-dimensional) discrete total variation. Many other examples will be given later. 

The weights $w_\kk$ could be absorbed into $\T_{\kk}$, but, because they often play a particular role as scale factors, we do not. Further, the penalty $s_\kk$ is included in order to balance the random perturbations caused by noise over different scales. This balancing idea was formalized by \citet{DueSpo01}, see also \citet{WaPe20} {and the references therein}. The proper choices of $w_\kk$ and  $s_\kk$ depend on the type of signal preferably to be reconstructed and the model itself, see e.g.\ \citet{SchMunDue13} {for convolution models, and} \citet{Spo09,FriMunSie14,PeSM17} and \citet{LiGM19} {for change point regression.} The constraint in \eqref{eq:mind} forces the residuals $Y - \X \beta$ to satisfy
$$	
\norm{ \T_{\kk} \kl{ Y - \X \beta } }_2^2
	=
	\sum_{i=1}^ {n_\kk}
	\abs{  \kl{ \T_{\kk}  \kl{Y-  \X \beta}}_i  }^2
	\leq ( q+ s_\kk )^2 w_\kk^2 \,,
$$
simultaneously for every $\kk \in \KK$. The threshold $q$ determines  the size of the feasibility region and acts as a tuning parameter: the larger $q$ the more the constraint in \eqref{eq:mind} is relaxed and hence the smoother (measured in terms of the functional $R$) the MIND will be. 

{The collection of probe functionals $(\T_{\kk})_{\kk \in \KK}$ encodes \emph{multiple scales}: For instance, if $\T_{\kk} v = \inner{\phi_\kk}{v}$ for some vector $\phi_\kk \in \R^n$, namely, $n_{\kk} = 1$, then the size of the support of $\phi_\kk$ (i.e.\ the number of nonzero entries of $\phi_{\kk}$) is interpreted as its \enquote{scale}.}
Probe functionals are often motivated by examining whether there is remaining structure left in the residual $Y - \X \beta$ for a candidate $\beta$. This relates to the detection of anomalies (hot spots) in (spatial) random fields, see e.g.\ \citet{ShAC16} and \citet{PrWM18} for {Gaussian errors}, and \citet{KoMW20} for extension to non-Gaussian models. Roughly, {the detection of signal is formalized as} a multiple testing problem for the collection of hypotheses
$$
H_{\kk} \T_{\kk} \kl{ Y - \X \beta } \text{ contains purely noise,} \qquad\text{ for }\; \kk\in\KK.
$$ 
In this sense, the constraint in \eqref{eq:mind} is interpreted as the acceptance region of a multiple test, and thus every MIND aims at finding the most regular candidate measured by $R$ within this acceptance region. That is, MIND can be seen as \emph{a combination of multiple testing and variational estimation.} This does not only provide guidance in designing the probe functionals {$\T_{\kk}$,} but also suggest rules for the choice of threshold $q$. For instance, a reasonable rule is to control the \emph{familywise error rate}, which is the probability of making any wrong rejections. It suggests to select $q$ such that 
\begin{equation}\label{eq:cqnt}
\inf_{\beta\in \R^p}\,\Po_\beta\set{\norm{ \T_{\kk} \kl{ Y - \X \beta } }_2^2 \leq ( q+ s_\kk )^2 w_\kk^2\quad \text{ for all }\; \kk \in \KK} \ge 1-\alpha
\end{equation}
for some {error level} $\alpha \in (0,1)$. As the value of $q$ is independent of $\beta$, \cref{def:mind} readily implies the \emph{simultaneous statistical guarantee}
\begin{equation}\label{eq:ssg}
\inf_{\beta\in \R^p}\,\Po_\beta\set{\rf\skl{\hat\beta} \le \rf\skl{\beta}} \ge 1-\alpha,
\end{equation}
where $\hat\beta$ denotes the MIND. Note that the objective $\rf$ in \eqref{eq:mind} ensures the MIND to fulfill certain desired regularity properties. Such a statistical guarantee reveals a statistical balancing between data approximation and regularity of $\beta$. The data fidelity constraint of MIND in \eqref{eq:mind} enforces the closeness to the data, but at the same hand the minimization of $\rf$ provides also smoothness control: \eqref{eq:ssg} says that, independent of the true parameter $\beta$, the MIND is no less regular (in terms of $R$) than the true parameter with probability at least $1-\alpha$. 

The quantile $q\equiv q_\alpha$ can be estimated via Monte Carlo simulations (see e.g.\ \citealp{FriMarMun12b}). Besides it can be approximated using the limiting distribution of the \emph{multiscale statistic} (recall \eqref{eq:mind})
\begin{equation}\label{eq:mstat}
T_n \;\equiv\; \max_{\kk\in\KK}\left \{
	\frac{\norm{ \T_{\kk} \kl{ Y - \X \beta } }_2}{w_\kk}-s_\kk  \right \}\; = \; \max_{\kk\in\KK}\left \{
	\frac{\norm{ \T_{\kk} \eps }_2}{w_\kk}-s_\kk  \right \},
\end{equation}
which will be discussed in \cref{sec:dist}. Such choices of $q$ turn out to be extremely favorable in practice, since we do not require any knowledge of the true parameter, or to estimate it from the data, in contrast to the penalized or the constrained optimization formulation with switching roles of objective and constraint.

In the sequel we consider several further estimation and thresholding techniques that are shown to be particular examples of  the MIND.

\section{Thresholding methods}
\label{sec:threshold}

In this section we {relate the MIND principle to so-called} thresholding based estimators for \eqref{eq:ip} which rely on an (explicit) expansion of $\beta$. Most thresholding techniques have been initially designed for solving
the nonparametric regression problem \ref{M1}. An extension to the linear inverse problem is the wavelet--vaguelette decomposition,
which will be discussed in \cref{sec:wave-vag} in more detail.

\subsection{Wavelet soft-thresholding}
\label{ss:wst}

Let $\kl{ \phi_\la}_{\la \in \La}$ be an orthonormal {(discrete)} wavelet
basis of $\R^n$, i.e.\ $\norm{\phi_\la}_2 =  1$ and $\inner{\phi_\la}{ \phi_{\la'}} = 0$ for $\la \neq \la'$,
see for example \citet{Vid99} and \citet{Mal09}.
Due to orthonormality, every parameter vector $\beta  \in \R^n$ can be uniquely
expanded in the wavelet basis
\begin{equation*}
\beta  = \sum_{\la \in \La} \inner{\phi_\la}{\beta} \phi_\la \,.
\end{equation*}

\begin{example}[Haar wavelets]\label{ex:haar}{
The Haar father wavelet (or scaling function) $\phi^{(\textsc{h})}$ is defined as 
$$
\phi^{(\textsc{h})}(x) = \begin{cases}
1\quad &\text{if }\; 0 \le x <1,\\
0 \quad & \text{otherwise,}
\end{cases}
$$
and the Haar mother wavelet (or wavelet function) as $\psi^{(\textsc{h})}(x) \coloneqq \phi^{(\textsc{h})}(2x) - \phi^{(\textsc{h})}(2x-1)$ for $x \in \R$.  Let $\psi^{(\textsc{h})}_{j,k} (x)= 2^{j/2}\psi^{(\textsc{h})}(2^j x- k)$ for $x \in \R$, where $j\in \N_0$ corresponds to the \emph{scale} and $k\in \N_0$ to the \emph{location} of $\psi^{(\textsc{h})}_{j,k}$. For simplicity, we consider  $n = 2^J$ with some $J \in \N_0$. It is known that 
$$
\mathcal{B} \equiv \set{\phi^{(\textsc{h},n)},\, \psi^{(\textsc{h},n)}_{j,k}\;:\; j = 0, \ldots, J-1 \text{ and } k =0,\ldots, 2^j-1}
$$ 
forms an orthonormal basis of $\R^n$ (see e.g.\ \citealp{Mal09}), where the included basis elements are defined by sampling the scaling function and the scaled wavelet functions, 
$$
\phi^{(\textsc{h},n)} =\samp_n\bigl( \phi^{(\textsc{h})}\bigr) \coloneqq \bigl(\phi^{(\textsc{h})}(i/n)\bigr)_{i=0}^{n-1}\quad \text{and}\quad \phi^{(\textsc{h},n)}_{j,k} = \samp_n \bigl(\psi^{(\textsc{h})}_{j,k}\bigr) \coloneqq \bigl(\psi^{(\textsc{h})}_{j,k}(i/n)\bigr)_{i=0}^{n-1}.
$$ 
In our notation, the basis $\mathcal{B}$ will be re-indexed as $\sset{\phi_\la: \la \in \La}$.}

{
Some other examples of orthonormal wavelets include Daubechies’ wavelets, Coiflets, Meyer wavelets and B-spline wavelets, see e.g.\ \citet{Chu92} and \citet{Dau92}.}
\end{example}

For any  $q > 0$, the
nonlinear \emph{soft-thresholding function}  is given as
$\eta^{\mathrm{(soft)}}\kl{\,\cdot\,, q}
\colon
\R \to \R
$,
\begin{equation*}
	\eta^{\mathrm{(soft)}}\kl{x, q}
	\coloneqq
	 x\kl{1 - \frac{q}{\sabs{x}}}_+  \; \text{ for } x \in \R\,.
\end{equation*}
The soft-thresholding function sets any coefficient $x$ with $\sabs{x}$ below  the threshold $q$ to zero and shrinks the remaining coefficients towards zero by the constant value $q$. Moreover, one notices that $\eta^{\mathrm{(soft)}}\kl{\,\cdot\,, q}$ is a continuous function on the whole real line, see \figref{fig:thresholdingfunctions}.

This leads to one of the most prominent wavelet shrinkage methods, the \emph{wavelet soft-thresholding estimator} \citep{Don95}
\begin{equation} \label{eq:st}
\hat \beta^{\mathrm{(soft)}}
=
\sum_{\la \in \La}
\eta^{\mathrm{(soft)}} \kl{ \inner{\phi_\la}{Y},  q_\la }
\phi_\la \,.
\end{equation}
Here the thresholds $q_\la > 0 $, for $\la \in \La$,
are tuning parameters that determine whether or not an empirical
wavelet coefficient $\inner{\phi_\la}{Y}$ is accepted as a wavelet coefficient
for the estimate $\hat \beta^{\mathrm{(soft)}}$.  In its elementary form, wavelet soft-thresholding
is often used with the \emph{universal threshold}  $q_\la = q \coloneqq
\sigma \sqrt{2 \log n} $,  which is independent of the index  $\la \in \La$, see \cite{Don95} and \cref{sec:dist}.
Extensions to scale or data dependent thresholds have been investigated in \citet{DonJoh95} and \citet{AntFan01} among many others.

Using the fast wavelet transform algorithm (see e.g\ \citealp{Dau92,Coh03,Mal09}) the wavelet thresholding estimator can be computed with only $\mathcal O  \kl{n}$ floating point operations. Despite the simplicity and computational efficiency, soft-thresholding  has various statistical optimal recovery  properties, such as adaptive minimax optimality over various Besov balls \citep{DonJoh94,DonJoh95}. 

The next result reveals soft-thresholding as an instance of the MIND in \eqref{eq:mind} {with probe functionals $\T_{\kk} = \T_{\la}$ defined to map $v\in \R^n$ to $\inner{\phi_\la}{v}$, and weights taken as $w_\kk = w_{\la} = q_\la$ and $s_\kk = 0$}.

\begin{theorem}[Variational characterization of soft-thresholding; \citealp{Don95}]\label{thm:soft}
Let $\kl{ \phi_\la}_{\la \in \La}$ be an orthonormal basis of $\R^n$.
For any $r \in (0, \infty)$,
the soft-thresholding estimator $\hat \beta^{\mathrm{(soft)}}$ defined by \eqref{eq:st} is the unique solution of
 \begin{equation*} 
	\left\{
	\begin{aligned}
	&\minimize_{\beta \in \R^n}  &&   \sum_{\la\in \La} \abs{\inner{\phi_\la}{\beta}}^r   \\
	&\st &&
	\max_{\la \in \La }
	\frac{\abs{\inner{\phi_\la}{  Y - \beta }}}{q_\la}
	\leq   1
	\,.
	\end{aligned}
	\right.
\end{equation*}
The same holds true, if we replace the objective function above by $\norm{\beta}_2^2  = \sum_{i=1}^n \beta_i^2$.
\end{theorem}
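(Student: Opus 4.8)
The plan is to exploit the fact that, once we pass to the coefficients of $\beta$ in the given basis, both the objective and the constraint fully decouple across the index set $\La$, reducing the problem to $n$ independent scalar minimizations. Writing $b_\la \coloneqq \inner{\phi_\la}{\beta}$ and $y_\la \coloneqq \inner{\phi_\la}{Y}$, orthonormality of $(\phi_\la)_{\la\in\La}$ gives $\inner{\phi_\la}{Y-\beta} = y_\la - b_\la$, so the constraint $\max_{\la\in\La}\abs{\inner{\phi_\la}{Y-\beta}}/q_\la \le 1$ is equivalent to the box constraints $\abs{y_\la - b_\la}\le q_\la$, that is $b_\la \in [y_\la - q_\la,\,y_\la + q_\la]$, imposed separately for each $\la$. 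Since $\beta \mapsto (b_\la)_{\la\in\La}$ is a bijection of $\R^n$ and the objective $\sum_{\la\in\La}\abs{b_\la}^r$ is already a separable sum in these coordinates, the constrained problem splits into the scalar problems of minimizing $\abs{b_\la}^r$ over $[y_\la - q_\la,\,y_\la + q_\la]$.

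Next I would solve each scalar problem. The key elementary observation is that for every $r\in(0,\infty)$ the map $t\mapsto\abs{t}^r$ is strictly increasing in $\abs{t}$, so its minimizer over any closed interval is the unique point of that interval closest to $0$. Projecting $0$ onto $[y_\la - q_\la,\,y_\la + q_\la]$ yields $0$ when $\abs{y_\la}\le q_\la$, the endpoint $y_\la - q_\la$ when $y_\la > q_\la$, and $y_\la + q_\la$ when $y_\la < -q_\la$; this is precisely $\eta^{\mathrm{(soft)}}(y_\la, q_\la)$. Reassembling $\hat\beta = \sum_{\la\in\La}\eta^{\mathrm{(soft)}}(y_\la, q_\la)\phi_\la$ recovers $\hat\beta^{\mathrm{(soft)}}$ from \eqref{eq:st}. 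Uniqueness then follows coordinatewise: in each of the three cases the closest point to $0$ in the interval is unique and $\abs{t}^r$ is strictly monotone away from it, so the per-coordinate minimizer is unique and hence so is the global one.

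For the final assertion I would note that, by orthonormality, $\norm{\beta}_2^2 = \sum_{\la\in\La} b_\la^2$, which is merely the $r=2$ instance of the separable objective, so the identical coordinatewise argument applies verbatim. I do not expect a serious obstacle here; the only point worth flagging is that for $r<1$ the objective $\sum_{\la\in\La}\abs{b_\la}^r$ is nonconvex, so one cannot invoke convex-optimization machinery or KKT conditions. The argument must rest solely on the monotonicity of $\abs{t}^r$ in $\abs{t}$ together with the separability of the box constraint, and this is in fact the crux of why a single clean argument covers all $r\in(0,\infty)$ simultaneously.
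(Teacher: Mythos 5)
Your proposal is correct and follows essentially the same route as the paper's own proof: orthonormality decouples the problem into independent scalar minimizations of $\abs{b_\la}^r$ over the intervals $[y_\la - q_\la,\, y_\la + q_\la]$, each solved uniquely by soft-thresholding, with the $\norm{\beta}_2^2$ claim reducing to the case $r=2$ via Parseval. Your additional remarks—spelling out the projection-onto-interval argument and flagging that monotonicity of $\abs{t}^r$, not convexity, is what makes the argument uniform in $r \in (0,\infty)$—are fully consistent with, and usefully more explicit than, the paper's terser treatment.
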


Note that the variational characterization of soft-thresholding  in \cref{thm:soft}
holds for any orthonormal basis $\kl{ \phi_\la}_{\la \in \La}$, e.g.\ a Fourier basis in place of
the wavelet basis. However, soft-thresholding is often applied with a wavelet basis
(or a similar multiscale system) since its optimality properties for function estimation heavily
depends on the multiscale structure and the spatial adaptivity  of wavelets \citep{Don95}.

Wavelet soft-thresholding can further be equivalently characterized as the unique solution of the penalized least squares functional (cf.\ \eqref{eq:rss-pen})
\begin{equation*} 
	\minimize_{\beta \in \R^p}  \frac{1}{2} \norm{ Y - \beta }_2^2
	+
     \sum_{\la \in \La} q_\la \abs{ \inner{\phi_\la}{\beta} } \,,
\end{equation*}
where the penalty $R$ is taken as the  $\ell^1$-norm of  the wavelet coefficients of the parameter vector (see \citealp{Don95,ChaDeVLeeLuc98}). This characterization of  soft-thresholding again can be verified in an elementary manner following  the proof of \cref{thm:soft}. Here the equivalence of the penalized and the constrained problem is, however,  not accidental. In fact, this equivalence is a special case of  \cref{thm:group-lasso} below and due to Fenchel duality (see Appendix \ref{app:group-lasso}).

\subsection{Modified thresholding nonlinearity}

The soft-thresholding function $\eta^{\mathrm{(soft)}}\kl{\,\cdot\,, q}$ systematically shrinks coefficients towards zero, at least if the thresholds are taken to be independent of the data. This yields a finite sample bias  if there is a significant amount of non-zero coefficients. To overcome this issue, various modified thresholding functions have been proposed in the literature.

\begin{itemize}
\item
\textsc{Hard-thresholding function \citep{DonJoh94}}.\\
 Besides the soft-thresholding function, the \emph{hard-thresholding function},
\begin{equation*} 
	\eta^{\mathrm{(hard)}} \kl{x, q}
	\coloneqq
	\begin{cases}
	0  & \text{ if }  \abs x \leq  q\\
	x  &  \text{ otherwise }	 \,,
	 \end{cases}
\end{equation*}
is the most basic and best known  thresholding function.
The hard-thresholding function does not shrink large coefficients and therefore
usually yields a smaller mean square error than  soft-thresholding.
However, the hard-thresholding function has discontinuities at $x = \pm q$,
which sometimes yields visually disturbing artifacts for signal and image recovery.

\item
\textsc{Nonnegative garrote \citep{Bre95,Gao98}}.\\
The  \emph{nonnegative garrote thresholding function} is a one-dimensional
version of the  well known James--Stein shrinkage function \citep{JamSte61},
and is defined by
\begin{equation*}
	\eta^{\mathrm{(JS)}} \kl{x, q}
	\coloneqq
	x  \kl{1 - \frac{q^2}{\sabs{x}^2}}_+  \,.
\end{equation*}
It is continuous and has a vanishing shrinkage effect as $\sabs{x}$
tends to infinity. Therefore, it is often claimed to  combine the advantages of soft- and
hard-thresholding.
\end{itemize}

\begin{figure}
\centering
\includegraphics[width=0.6\textwidth]{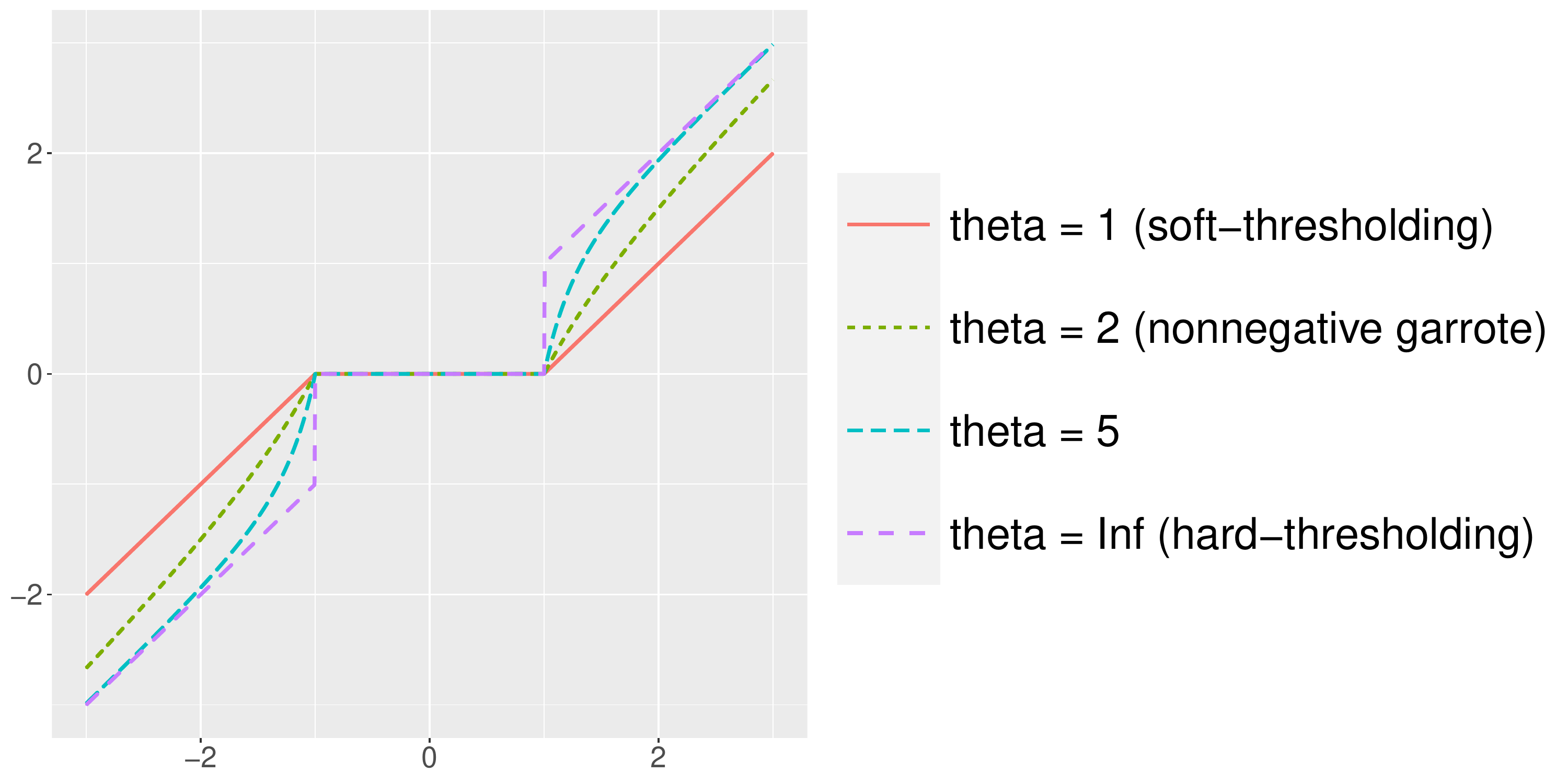}
\caption{The thresholding functions $\eta_\theta \kl{\,\cdot\,, q}$, plotted for the threshold $q=1$, are antisymmetric and equal to zero on $\set{x : \abs{x} \leq q}$. It coincides with the soft-thresholding  for $\theta=1$, with the nonnegative garrote for $\theta=2$, and converges pointwise and monotonically to the hard-thresholding for $\theta \to \infty$.}
\label{fig:thresholdingfunctions}
\end{figure}

All thresholding functions defined above are special cases of the class of functions
\begin{equation*} 
	\eta_\theta \kl{x, q}
	\coloneqq
	x  \kl{1 - \frac{q^\theta}{\sabs{x}^\theta}}_+
\end{equation*}
for some specific value of $\theta >0$. In fact, the soft-thresholding function corresponds to $\theta=1$, the nonnegative garrote to $\theta=2$, and the hard-thresholding function to the limiting case $\theta \to \infty$.  See again \figref{fig:thresholdingfunctions}. 

We observe that the soft-thresholding function and the nonnegative garrote shrinkage function are related via the explicit relation
\begin{equation*}
	\eta^{\mathrm{(JS)}} \kl{x,q}
	=
	\eta^{\mathrm{(soft)}} \kl{x, \frac{q^2}{\sabs{x}} }
	\, \text{ for }  x \neq 0  \,.
\end{equation*}
(A similar relation, of course, holds for any of the thresholding functions $\eta_\theta$.) This basic identity allows to interpret shrinkage by the  nonnegative garrote as soft-thresholding  applied with the threshold $q^2/\sabs{x}$, which is now data dependent. Based on this  simple observation, one can carry over many properties of the soft-thresholding estimator to the \emph{nonnegative garrote {(or James--Stein)} estimator}
\begin{equation*}
\hat \beta^{\mathrm{(JS)}}
=
\sum_{\la \in \La}
\eta^{\mathrm{(JS)}} \kl{ \inner{\phi_\la}{Y},  q_\la }
\phi_\la
 \,.
\end{equation*}

\begin{theorem}[Variational characterization of the nonnegative garrote]
Assume the setting of \cref{thm:soft}.
For\label{thm:nng} any $r \in (0, \infty)$, the nonnegative garrote estimator $\hat \beta^{\mathrm{(JS)}}$ defined above is  the unique solution of
 \begin{equation*}
	\left\{
	\begin{aligned}
	&\minimize_{\beta \in \R^p} &&   \sum_\la \abs{\inner{\phi_\la}{\beta}}^r   \\
	&\st &&
	\max_{\la \in \La}
	\frac{\abs{\inner{\phi_\la}{  Y - \beta }}}{w_\la}
	\leq   1
	\,.
	\end{aligned}
	\right.
\end{equation*}
Here $w_\la = {q_\la^2}/{\max\set{q_\la, \abs{\inner{\phi_\la}{ Y }}}} $ are weights depending on the data coefficients $\inner{\phi_\la}{ Y }$ and the thresholds $q_\la$. The same holds true, if we replace the objective function by $\norm{\beta}^2_2$.
\end{theorem}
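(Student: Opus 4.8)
The plan is to deduce the statement directly from the variational characterization of soft-thresholding in \cref{thm:soft}, exploiting the pointwise identity $\eta^{\mathrm{(JS)}}\kl{x,q} = \eta^{\mathrm{(soft)}}\kl{x, q^2/\sabs{x}}$ recorded just above the theorem. The essential observation is that the optimization problem in the statement has \emph{exactly} the form treated in \cref{thm:soft}, only with the thresholds $q_\la$ replaced by the data-dependent weights $w_\la$. Hence its unique solution is the soft-thresholding estimator built from the weights $w_\la$, and the whole proof reduces to checking that this coincides coefficient-by-coefficient with the nonnegative garrote estimator $\hat\beta^{\mathrm{(JS)}}$.

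Concretely, I would write $y_\la \coloneqq \inner{\phi_\la}{Y}$ and set $\tilde q_\la \coloneqq w_\la = q_\la^2 / \max\set{q_\la, \abs{y_\la}}$. Since $q_\la > 0$, each $\tilde q_\la$ is a well-defined strictly positive number, so \cref{thm:soft} applies verbatim with thresholds $\tilde q_\la$ in place of $q_\la$. It yields that the problem in the statement has the unique minimizer $\sum_{\la \in \La} \eta^{\mathrm{(soft)}}\kl{y_\la, \tilde q_\la}\,\phi_\la$, with uniqueness inherited directly. The same line of argument covers the objective $\norm{\beta}_2^2$: by Parseval's identity for the orthonormal basis $\kl{\phi_\la}_{\la\in\La}$ one has $\norm{\beta}_2^2 = \sum_{\la\in\La} \abs{\inner{\phi_\la}{\beta}}^2$, which is the case $r = 2$, and \cref{thm:soft} treats it on the same footing.

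It remains to verify the coefficient-wise equality $\eta^{\mathrm{(soft)}}\kl{y_\la, \tilde q_\la} = \eta^{\mathrm{(JS)}}\kl{y_\la, q_\la}$ for every $\la$. I would argue by distinguishing the two regimes encoded in the $\max$ defining $w_\la$. If $\abs{y_\la} \le q_\la$, then $\tilde q_\la = q_\la$, so soft-thresholding sets the coefficient to zero; simultaneously $q_\la^2/\abs{y_\la}^2 \ge 1$ forces $\eta^{\mathrm{(JS)}}\kl{y_\la, q_\la} = 0$, so both sides vanish (the borderline $y_\la = 0$ is included here). If $\abs{y_\la} > q_\la$, then $\tilde q_\la = q_\la^2/\abs{y_\la}$, and the identity above gives $\eta^{\mathrm{(soft)}}\kl{y_\la, q_\la^2/\abs{y_\la}} = \eta^{\mathrm{(JS)}}\kl{y_\la, q_\la}$ at once. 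Summing over $\la$ identifies the minimizer with $\hat\beta^{\mathrm{(JS)}}$.

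There is no real analytic obstacle here, as all the heavy lifting is done by \cref{thm:soft}; the only point requiring care is the correct, data-dependent choice of the weight. The single formula $w_\la = q_\la^2/\max\set{q_\la, \abs{y_\la}}$ must reproduce both regimes --- the ``kill'' regime $\abs{y_\la}\le q_\la$ (where the garrote and soft thresholds agree) and the shrinkage regime $\abs{y_\la}> q_\la$ (where the data-dependent threshold $q_\la^2/\abs{y_\la}$ appears) --- and confirming that the $\max$ splices these two cases together correctly is the crux of the verification.
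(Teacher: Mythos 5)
Your proposal is correct and follows essentially the same route as the paper: the paper's proof simply invokes \cref{thm:soft} with the data-dependent weights $w_\la$ as thresholds together with the identity $\eta^{\mathrm{(JS)}}\kl{x,q}=\eta^{\mathrm{(soft)}}\kl{x,q^2/\sabs{x}}$ for $x\neq 0$. Your explicit two-case verification that the $\max$ in $w_\la$ splices the kill regime $\sabs{y_\la}\le q_\la$ and the shrinkage regime $\sabs{y_\la}>q_\la$ together correctly is exactly the detail the paper leaves implicit behind the phrase \enquote{this immediately follows}.
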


The optimization problem in \cref{thm:nng} is obviously an instance of the MIND in \eqref{eq:mind}, with objective $\rf\skl{\beta} = \sum_\la \abs{\inner{\phi_\la}{\beta}}^r$, probe functionals $\kl{\inner{\phi_\la}{ \, \cdot \, } \colon \la \in \La}$, and data dependent weights $\kl{w_\la \colon \la \in \La}$. Allowing data dependent weights reveal almost any thresholding technique  as a MIND. Even the hard-thresholding estimator may be written as a solution of  the minimization problem in \cref{thm:nng} if one replaces the weights  by
\begin{equation*}
w_\la \equiv w_\la  \kl{\inner{\phi_\la}{ Y }, q_\la} =
\begin{cases}
q_\la  & \text{ if } \abs{\inner{\phi_\la}{ Y }} \leq q_\la \\
0  & \text{ if } \abs{\inner{\phi_\la}{ Y }}  >  q_\la  \,.
\end{cases}
\end{equation*}
Here we set $0/0=1$  and $x/0=\infty$ for $x >0$. In some sense,  hard-thresholding is a degenerate situation, where the reciprocal weights $1/w_\la$ become singular for $\abs{\inner{\phi_\la}{ Y }}  >  q_\la$.

\begin{figure}[thb]
\centering
\includegraphics[width=0.8\textwidth]{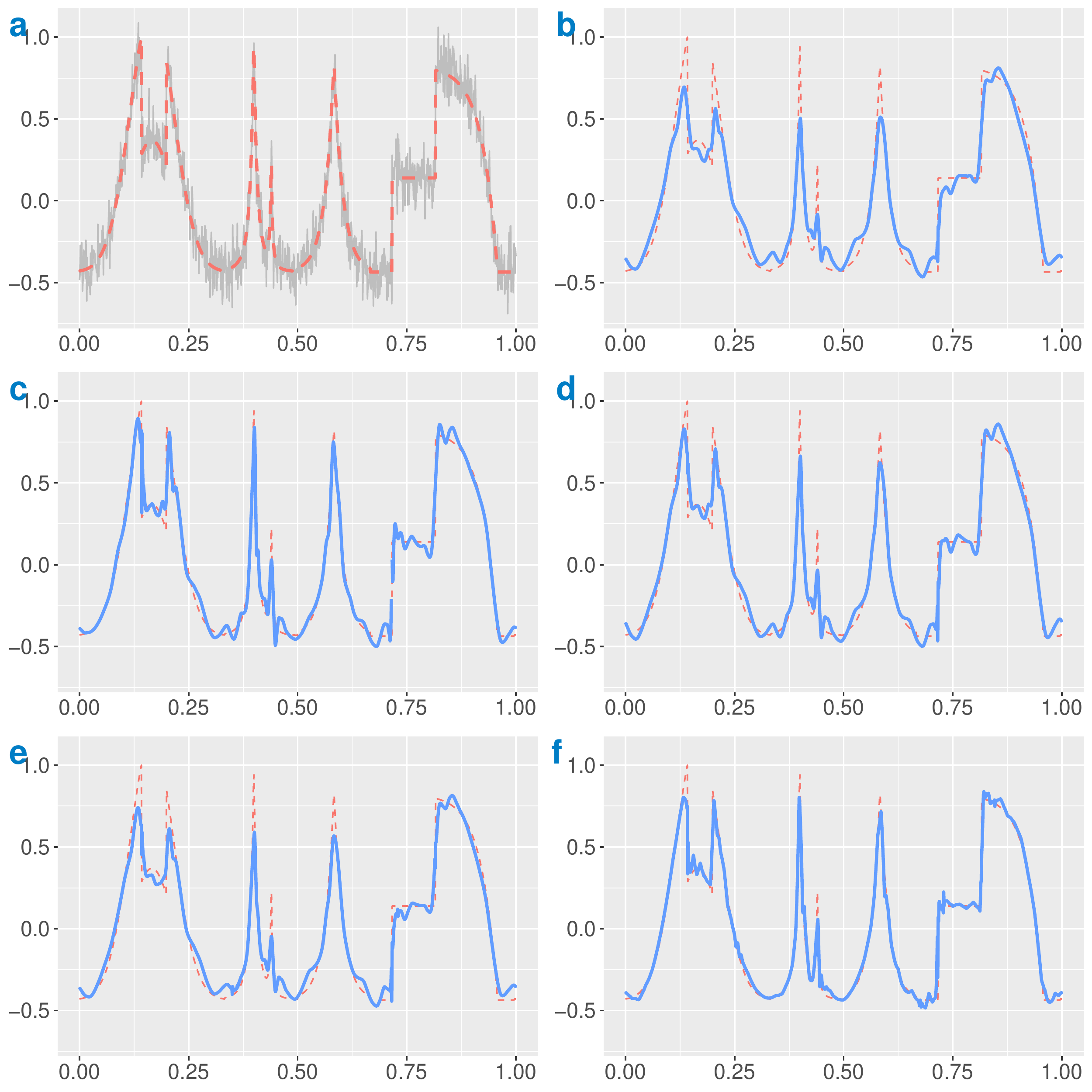}
 \caption{Regression of \enquote{piecewise smooth} signal via various thresholding strategies. (a) Noisy data (solid gray line) of 1,024 samples in model \ref{M1} with $\mathcal{N}(0,0.1^2)$ error; (b)--(f) Estimators (solid blue line) of soft-thresholding, hard-thresholding, nonnegative garrote, FDR soft-thresholding \citep{AbrBen96} and block soft-thresholding, respectively. The true signal (dashed red line) is plotted in all panels. Daubechies' least asymmetric (symlets) with six vanishing moments are used.}
 \label{fig:thresholding}
 \end{figure}

We illustrate the performance of various thresholding methods in \figref{fig:thresholding}. It shows that soft-thresholding shrinks sharp peaks while hard-thresholding introduces oscillating artifacts; Nonnegative garrote is a compromise between the two. Besides the choice of {the nonlinear shrinkage function}, the performance can be improved e.g.\ by an FDR (false discovery rate) strategy  \citep{AbrBen96}, or by a block-wise strategy (discussed below). 
 
\subsection{Block thresholding}
\label{sec:blockthresholding}

In wavelet block thresholding the thresholding operation is not applied separately to each  individual wavelet coefficient, but uniformly for a whole group of  wavelet coefficients (e.g.\  \citealp{HalEtal97,Cai99,CaiZho09,CheFadSta10}).
For that purpose, the set of all wavelet indices
\begin{equation*}
\La  =  \bigcup_{\kk \in \KK}  \La_\kk
\end{equation*}
is  grouped into several \emph{disjoint} subsets $\La_\kk \subset \La$.
One defines, for any $\kk \in \KK$, blocks of wavelet coefficients
\begin{equation}\label{eq:wave-block}
	\base_\kk	Y
	\coloneqq
	\kl{\inner{\phi_\la}{Y} : \la \in \La_\kk} \,.
\end{equation}
The wavelet thresholding is then performed uniformly for all coefficients within the same block. As in the case of component-wise thresholding, every particular block thresholding estimator depends on the thresholding function. 
A class of block thresholding function is  
$$
\eta_\theta(x, q) :=  x \kl{1 -\frac{q^\theta}{\norm{x}_2^\theta} }_+\qquad \text{for }\quad \theta > 0,
$$
which now is applied to the whole  block of coefficients,
$ x = \skl{ x_\la  : \la \in \La_\kk} \in \R^{\La_\kk}$, instead of a single coefficient. 
The \emph{block soft-thresholding estimator} and the \emph{block James--Stein estimator}, respectively, are then defined by 
\begin{align} 
	\hat \beta^{\text{(b-soft)}}
	& =  
	\sum_{\kk \in \KK}
	\sum_{\la \in \La_\kk}
	\kl{ \eta_1 \kl{\base_\kk	Y , q_\kk} }_\la
	\phi_\la\,,\label{eq:block-st2}
	\\
	\hat \beta^{\text{(b-JS)}}
	& = 
	\sum_{\kk \in \KK}
	\sum_{\la \in \La_\kk}
	\kl{ \eta_2 \kl{\base_\kk	Y ,q_\kk} }_\la
	\phi_\la
	\,,\nonumber
\end{align}
where $q_\kk >0$ are possibly block dependent thresholds. Similarly to \cref{thm:soft}, the block soft-thresholding estimator and the block James--Stein estimator are again special instances of the MIND in~\eqref{eq:mind}.

\begin{theorem}[Variational characterization of block soft-thresholding]
For\label{thm:soft-block} any $r \in (0, \infty)$, the block soft-thresholding estimator $\hat \beta^{\text{(b-soft)}}$ in \eqref{eq:block-st2} is  the unique solution of
 \begin{equation*} 
	\left\{
	\begin{aligned}
	&\minimize_{\beta \in \R^p}  &&   \sum_\la \abs{\inner{\phi_\la}{\beta}}^r   \\
	&\st &&
	\max_{\kk \in \KK}
	\frac{\norm{\base_\kk\kl{  Y - \beta }}_\infty}{q_\kk}
	\leq   1
	\,.
	\end{aligned}
	\right.
\end{equation*}
The same holds true if we replace the objective function above  by $\norm{\beta}_2^2 = \sum_i \beta_i^2$.
\end{theorem}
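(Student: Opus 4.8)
The plan is to follow the proof of \cref{thm:soft}: use orthonormality to pass to the coefficient domain, then solve the resulting separable program block by block. Set $b_\la=\inner{\phi_\la}{\beta}$ and $y_\la=\inner{\phi_\la}{Y}$; since $\kl{\phi_\la}_{\la\in\La}$ is an orthonormal basis, $\beta\mapsto\kl{b_\la}_{\la\in\La}$ is a bijection of $\R^n$ onto $\R^\La$ under which the objective becomes $\sum_{\la}\abs{b_\la}^r$ and, using $\base_\kk\kl{Y-\beta}=\kl{y_\la-b_\la:\la\in\La_\kk}$, the constraint becomes $\max_{\kk\in\KK}q_\kk^{-1}\max_{\la\in\La_\kk}\abs{y_\la-b_\la}\le1$. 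So it is equivalent to minimize $\sum_\la\abs{b_\la}^r$ subject to $\abs{y_\la-b_\la}\le q_\kk$ for all $\la\in\La_\kk$ and all $\kk\in\KK$.

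Taking the constraint \emph{exactly as written}, with the $\ell^\infty$-norm $\norm{\base_\kk\kl{Y-\beta}}_\infty$, the feasible set is a product of intervals and the objective is separable, so the program splits into the scalar problems $\min\sset{\abs{b_\la}^r:\abs{y_\la-b_\la}\le q_\kk}$. As $t\mapsto t^r$ is strictly increasing on $[0,\infty)$ for every $r>0$, each minimizer is the point of $[y_\la-q_\kk,y_\la+q_\kk]$ closest to the origin, namely $\hat b_\la=\eta^{\mathrm{(soft)}}\kl{y_\la,q_\kk}=y_\la\kl{1-q_\kk/\sabs{y_\la}}_+$, and it is unique (strict monotonicity, not convexity, secures uniqueness and so also covers $r<1$). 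Hence the minimizer of the problem as worded is \emph{coordinatewise} soft-thresholding with the block-constant thresholds $q_\kk$, i.e.\ precisely \cref{thm:soft} with $q_\la\equiv q_\kk$ for $\la\in\La_\kk$. This is exactly where the main obstacle sits: it is \emph{not} the block estimator $\hat\beta^{\text{(b-soft)}}$ of \eqref{eq:block-st2}, which shrinks a whole block by the single factor $\kl{1-q_\kk/\norm{\base_\kk Y}_2}_+$ built from the \emph{Euclidean} block norm. The two agree only for blocks carrying a single nonzero coefficient; already for a two-element block with $\base_\kk Y=(3,4)$ and $q_\kk=2$ they give $(1,2)$ versus $(1.8,2.4)$, since $q_\kk/\sabs{y_\la}\ne q_\kk/\norm{\base_\kk Y}_2$ in general. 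To make $\hat\beta^{\text{(b-soft)}}$ the solution one must read the block constraint in the Euclidean norm, $\norm{\base_\kk\kl{Y-\beta}}_2\le q_\kk$, and take the objective to be the group norm $\sum_{\kk\in\KK}\norm{\base_\kk\beta}_2^r$; I record this correction openly rather than absorb it silently.

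With that reading the argument closes. Both the group objective $\sum_\kk\norm{\base_\kk\beta}_2^r$ and the Euclidean block constraint decouple over the disjoint blocks $\La_\kk$, so for each $\kk$ one minimizes $\norm{\base_\kk\beta}_2^r$ subject to $\base_\kk\beta$ lying in the closed ball $B\kl{\base_\kk Y,q_\kk}$. Strict monotonicity of $t\mapsto t^r$ turns this into projecting the origin onto that ball, whose unique solution is $\hat b_\kk=\base_\kk Y\kl{1-q_\kk/\norm{\base_\kk Y}_2}_+=\eta_1\kl{\base_\kk Y,q_\kk}$ (the origin lies inside the ball, giving $\hat b_\kk=0$, exactly when $\norm{\base_\kk Y}_2\le q_\kk$, and the radial projection otherwise). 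Uniqueness follows from uniqueness of the metric projection onto a closed convex set together with strict monotonicity of $t^r$, and reassembling the blocks yields $\hat\beta^{\text{(b-soft)}}$. Replacing the objective by $\norm{\beta}_2^2$ changes nothing: by Parseval $\norm{\beta}_2^2=\sum_\la b_\la^2=\sum_{\kk}\norm{\base_\kk\beta}_2^2$, which is again minimized block by block by the same origin-to-ball projection.
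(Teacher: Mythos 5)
Your proof is correct, and its main added value is the diagnosis. The paper offers no argument at all for this result: its proof of \cref{thm:soft-block} reads, in full, \enquote{elementary and similar to the one of \cref{thm:soft}}. If one actually carries out that separable argument on the statement as printed --- as you do --- the $\ell^\infty$ block constraint decouples into intervals and the unique solution is \emph{coordinatewise} soft-thresholding with block-constant thresholds $q_\kk$, not the block estimator of \eqref{eq:block-st2}; your counterexample ($\base_\kk Y=(3,4)$, $q_\kk=2$, giving $(1,2)$ versus $(1.8,2.4)$) settles this. Your reading of the subscript $\infty$ as a misprint for the Euclidean norm is corroborated by the paper itself: \cref{thm:bt-gl} invokes \cref{thm:soft-block} to identify \eqref{eq:block-st2} with \eqref{eq:mre-gl}, whose constraint is $\norm{\base_\kk \X^{\mathsf{T}}(Y-\X\beta)}_2/w_\kk\le\gamma$, and the James--Stein remark immediately after the theorem substitutes weights built from $\snorm{\base_\kk Y}_2$. (One trivial imprecision in an aside of yours: the two estimators also agree on any block with $\snorm{\base_\kk Y}_2\le q_\kk$, where both vanish, not only on blocks with a single nonzero coefficient; this affects nothing.)

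Your second correction --- replacing the objective by the group norm $\sum_{\kk\in\KK}\norm{\base_\kk\beta}_2^r$ --- is likewise necessary, not cosmetic, for the \enquote{any $r\in(0,\infty)$} claim: even with the repaired constraint $\norm{\base_\kk(Y-\beta)}_2\le q_\kk$, the printed coordinatewise objective $\sum_\la\sabs{\inner{\phi_\la}{\beta}}^r$ characterizes the block estimator only for $r=2$. For instance, with $r=1$ and the same block, the $\ell^1$ minimizer over the Euclidean ball $B\kl{(3,4),2}$ is $(3-\sqrt{2},\,4-\sqrt{2})$, obtained by moving from the center along $-(1,1)/\sqrt{2}$, with $\ell^1$ norm $7-2\sqrt{2}\approx 4.17$, strictly smaller than the $4.2$ of the radial point $(1.8,2.4)$; so the radial shrinkage is not the minimizer for $r\ne 2$. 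Your formulation collapses at $r=2$, via Parseval, to the paper's $\norm{\beta}_2^2$ variant, which is the only case \cref{thm:bt-gl} actually uses. Finally, your argument for the corrected statement --- disjointness of the blocks decouples the program; strict monotonicity of $t\mapsto t^r$ reduces each block problem to the metric projection of the origin onto the closed ball $B\kl{\base_\kk Y, q_\kk}$, whose unique solution is $\eta_1\kl{\base_\kk Y, q_\kk}$ --- is exactly the blockwise analogue of the paper's proof of \cref{thm:soft} and is complete, including the uniqueness claim, which the projection argument handles without convexity of the objective.
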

As in the non-block case,  the block James--Stein thresholding $\eta^{\mathrm{(b-JS)}} \kl{x,q}$ can be expressed in terms of block soft-thresholding applied with the threshold $q^2/\norm{x}_2$. Hence the block James--Stein estimator can be characterized by the minimization in \cref{thm:soft-block} with $q_\kk$ replaced by
\begin{equation*}
w_\kk= w_\kk\skl{\base_\kk Y , q_\kk} =
 \frac{q_\kk^2}{\max\sset{q_\kk, \snorm{\base_\kk Y}_2}} \,.
\end{equation*}
The same again holds true if we replace  the objective by  the squared $\ell^2$ norm
$\norm{\beta}_2^2$.

{Recall that a wavelet basis is indexed by scales and locations, see \cref{ex:haar}.}
In wavelet block thresholding, each set $\La_\kk$ is  usually supposed to
have a fixed scale and to  consist of an interval (or block)  of  location indices.  We emphasize
that for the  above characterizations of  the block thresholding methods
such an additional restriction is not required. Further, the wavelet basis
may be replaced by an arbitrary orthonormal basis.

\subsection{Thresholding in  frames}

The rationale behind  wavelet thresholding is that the parameter to be recovered can be efficiently represented as a sparse linear combination of elements of the wavelet basis. In {real world} signal and image processing applications the parameter $\beta$ is usually not strictly sparse and the removal of small coefficients often introduces visually disturbing artifact (see \figref{fig:thresholding} and e.g.\ \citealp{DonJoh94,CoiDon95,Don95,CanDon99,Mal09,StaMurFad10,GrLM18}).

One way of addressing this issue is to consider an overcomplete frame or dictionary instead of a wavelet basis. A \emph{dictionary} of $\R^n$ is {a} family $\kl{ \phi_\la}_{\la \in \La}$ of elements that spans the whole space $\R^n$. Hence any $\beta \in \R^n$  can be written in the form
$\beta = \sum_{\la \in \La}  x_\la \varphi_\la$ for certain coefficients $x_\la \in \R$.
If there are some constants $0 < a \leq  b < \infty$ such that
\begin{equation}\label{eq:frmbnd}
a \norm{\beta}_2^2 \leq
	\sum_{\la \in \La}
	\abs{\inner{\phi_\la}{\beta}}^2
	\leq b \norm{\beta}_2^2
	\quad \text{for all }\beta \in \R^n \,,
\end{equation}
then $\kl{\phi_\la}_{\la \in \La}$ is called a \emph{frame.}
In such a situation, the mapping
\begin{equation*}
\base \colon \R^n \to \R^\La \colon \beta \mapsto \kl{\inner{\phi_\la}{\beta}}_{\la \in \La}
\end{equation*}
is called the \emph{analysis operator} and $\base^{\mathsf{T}} \base$ the \emph{frame operator.} Due to the  frame property, the frame operator  is invertible and the elements $\psi_\la  =   (\base^{\mathsf{T}} \base)^{-1}  \phi_\la$ are well defined. They again form a frame  $\kl{\psi_\la}_{\la \in \La}$, which is called the \emph{dual frame.} Note that in a finite dimensional situation any dictionary is automatically a frame, but this is not the case for highly redundant frames in infinite
dimensional spaces.

If $\kl{\phi_\la}_{\la \in \La}$  is a frame, then one has the reproducing  formula $\beta = \sum_{\la \in \La} \inner{\phi_\la}{\beta} \psi_\la$ where $\kl{\psi_\la}_{\la \in \La}$ its  frame dual to $\kl{\phi_\la}_{\la \in \La}$. 
This motivates the following definition of a \emph{frame based soft-thresholding estimator}
\begin{equation*}
\hat \beta^{\mathrm{(soft)}}
=
\sum_{\la \in \La}
\eta^{\mathrm{(soft)}} \kl{ \inner{\phi_\la}{Y},  q_\la }
\psi_\la
\,.
\end{equation*}
It can be written as
$\hat \beta^{\mathrm{(soft)}} =
\sum_{\la \in \La}
\hat \x_\la
\psi_\la$
with $\kl{\hat \x_\la}_{\la\in \La}$ being the unique minimizer of
\begin{equation*}
	\left\{
	\begin{aligned}
	&\min_{\x\in \R^\La}  &&   \sum_\la \abs{\x_\la}^r  \\
	&\st &&
	\max_{\la \in \La }
	\frac{\abs{\inner{\phi_\la}{  Y } - \x_\la }}{q_\la}
	\leq   1
	\end{aligned}
	\right.
	\qquad \qquad\text{for } 0 < r < \infty.
\end{equation*}
Hence the coefficients $\hat \x_\la$ can be viewed as a special case of the MIND in \eqref{eq:mind}.

Popular redundant system used for thresholding in signal  and image
processing  are translation invariant wavelet systems \citep{CoiDon95,NasSil95,LanGuoOdeBurWel96,PesKriCar96}, curvelets \citep{CanDon00,StaCanDon02,MaPl10}, shearlets \citep{LabLimKutWei05,KutMorZhu12}, contourlets \citep{DoVet05}, or needlets \citep{KKLPP10}. Its particular choice will depend on prior information of the signal, the underlying geometry of the domain or computational aspects. In general, all thresholding techniques from before (e.g.\ hard- or block thresholding) can be applied here, but as the coefficients now become dependent, there are some subtle differences to orthogonal systems, see also \cref{sec:dist}.

\subsection{Wavelet--vaguelette decomposition and related approaches}
\label{sec:wave-vag}

The wavelet--vaguelette decomposition is introduced in~\cite{Don95b} to generalize wavelet techniques from nonparametric regression to linear inverse problems. Recall model~\ref{M2}. Let $\A \colon \Fs \to \U$ be a bounded linear operator mapping the function space $\Fs \subseteq L^2(\Om)$ to a Hilbert space~$\U$.

\begin{definition}[Wavelet--vaguelette decomposition, \citealp{Don95b}]
The\label{def:wvd} family $\skl{\phi_\la, u_\la, v_\la, \kappa_\la }_{\la \in \La}$ is called a \emph{wavelet--vaguelette decomposition} for the linear operator $\A \colon \Fs \to \U$, if the following \emph{quasi-singular value decompositions} hold:
\begin{align*}
	\A \phi_\la
	&=
	\kappa_\la u_\la \,,
	\quad \text{ for }  \la   \in \La
	\,,
	\\
	\A^* v_\la
	&=
	\kappa_\la \phi_\la \,,
	\quad \text{ for }  \la   \in \La \,,
\end{align*}
where $\skl{\phi_\la}_{\la \in \La}$ is an orthonormal wavelet basis of $\Fs$, $\skl{u_\la}_{\la \in \La}$ and $\skl{v_\la}_{\la \in \La}$ are two bases of the space $\mathcal U$ such that $\sinner{v_\la}{u_{\la'}} = \delta_{\la, \la'}$ for all $\la, \la' \in \La$, and $\skl{\kappa_\la}_{\la \in \La}$ is a family of nonnegative numbers, referred to as  \emph{quasi-singular values}, which only depend the scale index but not the spatial index of the wavelets $\phi_\la$.
\end{definition}

If $\kl{\phi_\la, u_\la, v_\la, \kappa_\la }_{\la \in \La}$
is a wavelet--vaguelette decomposition for the operator $\A$,
then we have the reproducing formula
\begin{equation*}
f  =
\sum_{\la \in \La}
\frac{\inner{v_\la}{ \A f}}{\kappa_\la}  \, \phi_\la
\quad \text{ for } f  \in \Fs \,.
\end{equation*}
In the case of noisy observations $g = \A f  + \eps$,  with $\eps$ denoting a white noise process,
the \emph{wavelet--vaguelette soft-thresholding estimator} \citep{Don95b} for $f$  is defined by
\begin{equation*} 
\hat f^{\rm (WV)}
=
\sum_{\la \in \La}
\eta^{\mathrm{(soft)}}
\kl{\frac{\inner{v_\la}{g}}{\kappa_\la}, q_\la}
\, \phi_\la
\,,
\end{equation*}
with $q_\la > 0$ denoting certain scale dependent thresholds.  

\begin{theorem}[Variational formulation of the wavelet--vaguelette estimator]
For\label{thm:wvd} any $r>0$, the wavelet--vaguelette soft-thresholding estimator $\hat f^{\rm (WV)}$ is the
unique solution of
\begin{equation*} 
	\left\{
	\begin{aligned}
	&\minimize_{f\in \Fs}  &&   \sum_{\la\in \La}
	\abs{\inner{\phi_\la}{f}}^r   \\
	&\st &&
	\max_{\la \in \La }
	\frac{\abs{\inner{v_\la}{  g -  \A f }}}
	{\kappa_\la q_\la}
	\leq   1
	\,.
	\end{aligned}
	\right.
\end{equation*}
The same holds true, if we replace the objective function above by $\norm{f}_{L^2\kl{\Om}}^2$.
\end{theorem}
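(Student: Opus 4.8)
The plan is to reduce \cref{thm:wvd} to the already-established variational characterization of soft-thresholding in \cref{thm:soft} by a change of coordinates induced by the wavelet--vaguelette decomposition. Since $\skl{\phi_\la}_{\la\in\La}$ is an orthonormal wavelet basis of $\Fs$, every $f\in\Fs$ is determined by its coefficients $\beta_\la \coloneqq \inner{\phi_\la}{f}$, and the objective $\sum_{\la}\abs{\inner{\phi_\la}{f}}^r = \sum_\la\abs{\beta_\la}^r$ depends only on these; the orthonormal coordinates identify $\Fs$ with the sequence space on which \cref{thm:soft} operates. The crucial first step I would take is to rewrite the constraint residual in the same coordinates. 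Using the adjoint relation $\A^* v_\la = \kappa_\la \phi_\la$ from \cref{def:wvd}, one computes
\[
\inner{v_\la}{\A f} = \inner{\A^* v_\la}{f} = \kappa_\la \inner{\phi_\la}{f} = \kappa_\la \beta_\la ,
\]
so that, under the tacit nondegeneracy assumption $\kappa_\la > 0$,
\[
\frac{\inner{v_\la}{g - \A f}}{\kappa_\la q_\la} = \frac{1}{q_\la}\kl{\frac{\inner{v_\la}{g}}{\kappa_\la} - \beta_\la}.
\]

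Next I would introduce the surrogate data element $\tilde Y \coloneqq \sum_{\la}\frac{\inner{v_\la}{g}}{\kappa_\la}\,\phi_\la \in \Fs$, whose wavelet coefficients are $\inner{\phi_\la}{\tilde Y} = \inner{v_\la}{g}/\kappa_\la$. With this identification the constraint of \cref{thm:wvd} becomes exactly $\max_{\la}\abs{\inner{\phi_\la}{\tilde Y - f}}/q_\la \le 1$, while the objective is unchanged. Hence the optimization problem in \cref{thm:wvd} coincides verbatim with the problem in \cref{thm:soft} applied to the data $\tilde Y$ and the orthonormal basis $\skl{\phi_\la}$. Invoking \cref{thm:soft} yields a unique minimizer whose coefficients are $\eta^{\mathrm{(soft)}}\skl{\inner{\phi_\la}{\tilde Y}, q_\la} = \eta^{\mathrm{(soft)}}\skl{\inner{v_\la}{g}/\kappa_\la,\, q_\la}$, that is, precisely $\hat f^{\mathrm{(WV)}}$. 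The claim for the objective $\norm{f}_{L^2\kl{\Om}}^2$ follows identically, since orthonormality gives $\norm{f}_{L^2\kl{\Om}}^2 = \sum_\la\abs{\inner{\phi_\la}{f}}^2$, which is the $r=2$ instance, matched by the second assertion of \cref{thm:soft}.

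The mechanism underlying \cref{thm:soft} on which I rely is separability across $\la$: each coefficient $\beta_\la$ is chosen to minimize $\abs{\beta_\la}^r$ subject to $\abs{\inner{\phi_\la}{\tilde Y} - \beta_\la} \le q_\la$, and since $t \mapsto \abs{t}^r$ is strictly increasing in $\abs{t}$ for every $r>0$, the unique optimizer is the point of the feasible interval closest to the origin, which is exactly the soft-threshold value; this also furnishes uniqueness for all $r\in(0,\infty)$, including the non-convex range $0<r<1$. The step I expect to require the most care is the bookkeeping of the quasi-singular values $\kappa_\la$ when passing from the original constraint to the surrogate coefficients, together with verifying that $\tilde Y$ is well defined so that both the estimator and the constraint make sense. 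The only delicate point is the nondegeneracy $\kappa_\la>0$ implicit in \cref{def:wvd}; should any $\kappa_\la$ vanish, the corresponding coefficient simply drops out of the reconstruction and the reduction goes through on the remaining index set.
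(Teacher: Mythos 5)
Your proposal is correct and follows essentially the same route as the paper's proof: expand $f$ in the orthonormal wavelet basis, use the adjoint relation $\A^* v_\la = \kappa_\la \phi_\la$ to turn the constraint into $\max_{\la}\abs{\kappa_\la^{-1}\inner{v_\la}{g} - \beta_\la}/q_\la \le 1$, and conclude by the coefficient-wise soft-thresholding argument underlying \cref{thm:soft}. The only cosmetic difference is that you package the surrogate coefficients into an element $\tilde Y$ so as to invoke \cref{thm:soft} as a black box, whereas the paper stays at the level of coefficient sequences and cites the proof of \cref{thm:soft}; both reduce to the same separable one-dimensional minimizations.
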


Reasonable ways of adjusting $\hat f^{\rm (WV)}$ to the case of discretely sampled data  $Y_ i = \kl{\samp_n \circ \A \circ \iop_p \beta}_i   + \eps_i$, with $\beta =  \samp_p \kl{f} = \kl{f \skl{x_j}}_{j=1}^p$, as in model \ref{M2} are discussed in \citet[Section~6.3]{Don95b}. {As a consequence, $\hat f^{\rm (WV)}$ can be viewed as a MIND estimator in \eqref{eq:mind}.} We omit details and restrict our representation to the continuous model  {(unlike the rest of the paper)} for simplicity.

In \cite{Don95b} wavelet--vaguelette decompositions have been derived for integration, fractional integration  and the Radon transform. Related techniques for the solution of statistical inverse problems can be found in \citet{AbrSil98}, \citet{CanDon02}, and \citet{KalMal03}. In \citet{AbrSil98} the roles of wavelets and vaguelettes are reversed.  In \citet{KalMal03} special mirror wavelet bases are constructed for certain deconvolution problems. The use of curvelets instead of wavelets was studied in \cite{CanDon02} and applied to the noisy Radon inversion.

We stress that the existence of a wavelet--vaguelette decomposition requires a certain scale invariance of the operator $\A$, which are not satisfied in general, see \cite{Don95b}.

\section{Variational methods for estimation}\label{sec:variational}

In the last section we studied thresholding approaches using explicit expansions with respect to (multiscale) systems of functions (dictionaries). We derived  equivalent variational formulations as constrained optimization problems with a multiscale constraint. The thresholding based multiscale approaches can be combined with variational regularization where no explicit dictionary is given. Again, they are casted as instances of the MIND in \eqref{eq:mind}.

In this section we provide a selective overview of such variational estimation schemes. We start with standard penalized least squares and then present hybrid approaches  combining variational and multiresolution schemes.

\subsection{Penalized least squares}\label{ss:pls}

Presumably the most basic and prominent variational estimation techniques is \emph{penalized  least squares}
\begin{equation*}
   \minimize_{\beta \in \R^p} \frac{1}{2}
    \norm{ Y - \X \beta   }_2^2
    +
    \gamma
    \rf\skl{\beta}
	 \,.
\end{equation*}
Here $R \colon \R^p \to \R \cup \set{\infty}$
is some regularization functional and $\gamma >0$ a penalty parameter. In the special case $ R \kl{\beta} = \norm{ \beta }_2^2$ the penalized least squares is known as \emph{ridge regression}.

By interpreting $1/\gamma > 0 $ as Lagrange multiplier, the penalized least squares estimator  can be written in the constrained form
\begin{equation*}
	\left\{
	\begin{aligned}
	&\minimize_{\beta \in \R^p}  &&  \rf\skl{\beta}   \\
	&\st &&
	\norm{ Y - \X \beta   }_2
	\leq   q
	\;.
	\end{aligned}
	\right.
\end{equation*}
In fact, the constrained optimization and its unconstrained version are essentially  equivalent if the Lagrangian parameter
is chosen according to Morozov's discrepancy principle (see \cref{thm:lagrange} in the Appendix for a precise statement). The constrained optimization problem is obviously a particular case of the MIND in \eqref{eq:mind} if one considers the identity $\Id \colon \R^n \to \R^n$ as the only probe functional. {In this sense, it has only a single scale.}

{Another well studied instance of regularized least squares are the lasso and its variants that will be studied in \cref{sec:lasso}.}

\subsection{{Total variation regularization}}\label{ss:hdtv}
Thresholding based multiscale methods provide spatial adaptivity and are known to be optimal for function estimation in Besov balls. However, they often show visually disturbing artifacts due to missing band pass information. In contrast, variational methods, such as total variation regularization, often produce visually more appealing results. 

Choosing the one-dimensional discrete total variation as regularization functional, the constrained least squares estimator takes the form
\begin{equation} \label{eq:tv-const}
	\left\{
	\begin{aligned}
	&\minimize_{\beta \in \R^p}  &&   \sum_{j=1}^{p-1} \abs{ \beta_{j+1} - \beta_j} \\
	&\st &&
	\norm{  Y - \X \beta   }_2
	\leq   q
	\;.
	\end{aligned}
	\right.
\end{equation}
This is \emph{discrete} total variation regularization, also known as \emph{trend filtering} \citep{Tib14}. In the one-dimensional nonparametric regression case (model~\ref{M1}), a variant with \emph{continuous} total variation as regularization instead is studied and shown to be minimax optimal over  bounded variation function classes in \citet{MamGee97}; See also \citet{DavKov01} and \citet{DavKovMei09}. In the case that $\beta$ represents an image, the two-dimensional analogs and the penalized version are instances of the Rudin--Osher--Fatemi (ROF) denoising model introduced in~\cite{RudOshFat92}. There are extensions to higher orders of total variation and/or general dimensions \citep{HuRi16,SWST17,FaGS19,GLCS20,OvdG20}, to tensors \citep{OvdG21}, and also to graphs \citep{WSST16}. See also~\cite{ChaLio97}, where explicit relations between constraint and unconstraint total variation minimization have been derived in a very general (infinite dimensional) setting. {All these approaches regularize the total variation functional in a global \enquote{monoscale} fashion, as in \eqref{eq:tv-const}. In the following, we will discuss its multiscale extension to systems of scales.}

{This is motivated from empirical studies which show that monoscale total variation methods} seem to lack spatial adaptivity to varying smoothness of the underlying signal or image. See e.g.\ the discussion in \citet[Section~5.1]{CanGuo02}, and also \cref{sec:num}.

Hence, one wishes to combine \enquote{the best of both worlds}, by solving
\begin{equation} \label{eq:combined}
	\left\{
	\begin{aligned}
	&\minimize_{\beta \in \R^p}  &&   \norm{\beta}_{\mathrm TV}   \\
	&\st &&
	\max_{\la \in \La }
	\frac{\abs{\inner{\phi_\la}{  Y - \X \beta }}}{q_\la}
	\leq   1
	\,,
	\end{aligned}
	\right.
\end{equation}
where $\norm{\cdot}_{\mathrm TV}$ represents the total variation, and $\kl{\phi_\la}_{\la \in \La}$ is a multiscale system such as a wavelet basis or a shearlet frame.

The hybrid approach of \eqref{eq:combined} has been introduced independently by several authors. 
In  \cite{CanGuo02} and \citet{StaDonCan01} the formulation in \eqref{eq:combined} is proposed for $\X = \Id$ in combination with overcomplete dictionaries. \citet{ChaZho00} and \cite{DurFro01} used a wavelet basis. Finally,  \cite{Mal02b,Mal02a} studied \eqref{eq:combined} for general $\X$. In \citet{ALM18}, the hybrid {total variation estimator  for $\X = \Id$} in \eqref{eq:combined} with various dictionaries, e.g.\ wavelets, curvelets or shearlets, is shown to be asymptotically minimax optimal (up to a logarithmic factor) with respect to $L^q$-risk ($1\le q < \infty$) for the estimation of bounded variation functions on $[0,1]^d$ with $d\ge 1$;  Such rates in $L^2$-risk for the bounded variation space match those for the inscribed Sobolev space $W^{1,1}$ when $d \le 2$, but turn out to be slower for $d \ge 3$ (a phase transition). Similar statistical justifications of {the estimator in} \eqref{eq:combined} have been established for linear inverse problems in \citet{dAMu20}. 

\subsection{The (group) lasso and Nemirovskii's estimator}
\label{sec:lasso}

Let $\kl{\phi_\la}_{\la \in \La}$ be an orthonormal basis of $ \R^p$. Suppose that
the index set $\La$ is written as
\begin{equation*}
	\La  =  \bigcup_{\kk \in \KK}  \La_\kk \,,
\end{equation*}
with possible overlapping (though the later theory assumes disjointness) subsets $\La_\kk \subset \La$. We further define, for every $\kk \in \KK$, the linear mapping $\base_\kk\colon \R^p \to \R^{\La_\kk}$,
\begin{equation*}
\base_\kk\beta
\coloneqq
\kl{\inner{\phi_\la}{\beta} : \la \in \La_\kk}
\quad \text{ for } \beta \in \R^p \,.
\end{equation*}
The \emph{adaptive group lasso} is then defined via the penalized optimization (cf.\ \eqref{eq:rss-pen})
\begin{equation} \label{eq:group-lasso}
   \minimize_{\beta \in \R^p}  \frac{1}{2} \norm{Y - \X\beta}^2_2
     +
     \gamma \sum_{\kk \in \KK}  w_\kk \snorm{\base_\kk\beta}_2 \,.
\end{equation}
The following theorem (based on Fenchel duality) shows that the adaptive group lasso estimator
is also a special instance of the MIND in \eqref{eq:mind}.

\begin{theorem}[Dual formulation of the adaptive group lasso] \label{thm:group-lasso}
Suppose that $\La  =  \bigcup_{\kk \in \KK}  \La_\kk$ consists of disjoint subsets $\La_\kk$.
Then the adaptive group lasso in \eqref{eq:group-lasso} and the
constrained optimization problem
\begin{equation} \label{eq:mre-gl}
	\left\{
	\begin{aligned}
	&\minimize_{\beta \in \R^p}  &&
	\frac{1}{2}
	\norm{ \X \beta }_2^2   \\
	&\st &&
	\max_{\kk\in\KK}
	\frac{\norm{ \base_\kk\X^{\mathsf{T}} \skl{ Y - \X \beta } }_2}{w_\kk}
	\leq   \gamma
	\;,
	\end{aligned}
	\right.
\end{equation}
have the same sets of solutions.
\end{theorem}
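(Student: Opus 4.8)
The engine here is Fenchel(--Rockafellar) duality, so the plan is to write the adaptive group lasso \eqref{eq:group-lasso} in composite form $\min_{\beta\in\R^p} g(\X\beta) + h(\beta)$ with the smooth, strictly convex data term $g(z) = \tfrac12\norm{z - Y}_2^2$ and the penalty $h(\beta) = \gamma N(\beta)$, where $N(\beta)\coloneqq\sum_{\kk\in\KK} w_\kk\snorm{\base_\kk\beta}_2$. Since $\La=\bigcup_\kk\La_\kk$ is a \emph{disjoint} union and $(\phi_\la)_{\la\in\La}$ is orthonormal, $N$ is a genuine norm on $\R^p$ (a weighted group norm in the coefficient domain), and the first substantive step is to compute its dual norm. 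Splitting $\inner{\xi}{\beta}=\sum_{\kk}\inner{\base_\kk\xi}{\base_\kk\beta}$ by Parseval across the disjoint blocks and applying Cauchy--Schwarz blockwise yields
\[
N^\circ(\xi) = \max_{\kk\in\KK}\frac{\snorm{\base_\kk\xi}_2}{w_\kk}.
\]
This is exactly where disjointness of the $\La_\kk$ is used, and it is the identity that converts the group-$\ell^{1,2}$ penalty into the $\max$-type constraint of \eqref{eq:mre-gl}. The two conjugates I need are then $g^*(v)=\tfrac12\norm{v}_2^2+\inner{v}{Y}$ and $h^*=(\gamma N)^*$, which equals the convex indicator of the ball $\set{\xi : N^\circ(\xi)\le\gamma}$.

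With these in hand I would form the Fenchel dual $\max_v\,-g^*(v)-h^*(-\X^{\mathsf{T}} v)$, i.e.\
\[
\min_{v\in\R^n}\tfrac12\norm{v+Y}_2^2\quad\text{s.t.}\quad\max_{\kk\in\KK}\frac{\snorm{\base_\kk\X^{\mathsf{T}} v}_2}{w_\kk}\le\gamma,
\]
and recognize \eqref{eq:mre-gl} as this dual re-expressed through the change of variable $v=\X\beta-Y$: the objective becomes $\tfrac12\norm{\X\beta}_2^2$ up to the additive constant $\tfrac12\norm{Y}_2^2$, and the constraint becomes the one in \eqref{eq:mre-gl}. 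Because $g$ is finite and continuous everywhere, strong duality holds with no gap and the primal is attained, so the extremality relations $v^\star=\nabla g(\X\beta^\star)=\X\beta^\star-Y$ and $-\X^{\mathsf{T}} v^\star\in\partial h(\beta^\star)$ tie the two problems together. Unwinding $-\X^{\mathsf{T}} v^\star\in\gamma\,\partial N(\beta^\star)$ through the subdifferential of a norm shows that $\beta^\star$ solves \eqref{eq:group-lasso} if and only if (i) it is feasible, $N^\circ(\X^{\mathsf{T}}(Y-\X\beta^\star))\le\gamma$, and (ii) the alignment $\inner{\X^{\mathsf{T}}(Y-\X\beta^\star)}{\beta^\star}=\gamma N(\beta^\star)$ holds; these are precisely the KKT conditions for \eqref{eq:mre-gl}. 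The forward inclusion is then immediate: assuming (i)--(ii), for any competitor $\beta$ feasible in \eqref{eq:mre-gl} the dual-norm inequality gives $\inner{\X^{\mathsf{T}}(Y-\X\beta)}{\beta^\star}\le\gamma N(\beta^\star)$, and subtracting (ii) collapses to $\inner{\X^{\mathsf{T}}\X\beta^\star}{\beta-\beta^\star}\ge 0$, the variational inequality characterizing a minimizer of \eqref{eq:mre-gl}.

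The main obstacle is the reverse inclusion together with the precise meaning of \enquote{same set of solutions}. The objective $\tfrac12\norm{\X\beta}_2^2$ of \eqref{eq:mre-gl} depends on $\beta$ only through the fitted value $\X\beta$, and by strict convexity of $g$ every optimizer of either problem shares one and the same $\X\beta^\star$; the cleanest form of the statement is therefore at the level of fitted values (and it is literal precisely when $\ker\X=\set{0}$). Concretely, feasibility alone yields only the inequality half of the alignment, $\inner{\X^{\mathsf{T}}(Y-\X\beta^\star)}{\beta^\star}\le\gamma N(\beta^\star)$, so the delicate part is to upgrade this to the equality (ii) for a minimizer of \eqref{eq:mre-gl} — equivalently, to reassemble the KKT multipliers of \eqref{eq:mre-gl} into a bona fide element of $\gamma\,\partial N(\beta^\star)$. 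This is exactly the point where strong duality (the zero duality gap, forcing complementary slackness to be tight) has to be invoked, and where the disjoint-block structure reappears to guarantee that the blockwise subgradients combine consistently.
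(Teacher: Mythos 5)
Your proof runs on the same engine as the paper's: Fenchel duality applied to the splitting $g(\X\beta)+h(\beta)$ with $g(z)=\tfrac12\snorm{z-Y}_2^2$ and $h=\gamma N$, where $N(\beta)=\sum_{\kk\in\KK}w_\kk\snorm{\base_\kk\beta}_2$. Your dual-norm identity $N^\circ(\xi)=\max_{\kk\in\KK}\snorm{\base_\kk\xi}_2/w_\kk$ is exactly the paper's \cref{lem:3} (equivalently, $(\gamma N)^*$ is the indicator of $\sset{\xi : N^\circ(\xi)\le\gamma}$), and your change of variables $v=\X\beta-Y$, which exhibits \eqref{eq:mre-gl} as the dual problem restricted to the affine set $\sset{\X\beta-Y:\beta\in\R^p}$, is the mechanism behind the paper's \cref{lem:2}. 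Where you differ is in the execution of the forward inclusion: instead of routing through a generic equivalence, you characterize $\X^{\mathsf{T}}(Y-\X\beta^\star)\in\gamma\,\partial N(\beta^\star)$ as feasibility plus alignment and verify the variational inequality $\sinner{\X^{\mathsf{T}}\X\beta^\star}{\beta-\beta^\star}\ge 0$ directly; this is correct, elementary, and self-contained.

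On the reverse inclusion, two points. Your diagnosis is actually sharper than the paper's: the literal claim \enquote{same sets of solutions} does fail when $\ker\X\neq\sset{0}$ (for $\X=0$ the group lasso has the unique solution $0$, while every $\beta\in\R^p$ solves \eqref{eq:mre-gl}); the paper's proof of \cref{lem:2} dismisses this direction with \enquote{similar arguments}, which can deliver at most equality of the fits $\X\beta$, exactly as you observe. However, the mechanism you propose for closing the argument --- invoking strong duality to upgrade feasibility to the alignment equality for an arbitrary minimizer of \eqref{eq:mre-gl}, with the disjoint blocks ensuring the subgradients \enquote{combine consistently} --- cannot work: that upgrade is equivalent to the minimizer already being a group-lasso solution, which is precisely what fails on most of the fiber $\sset{\beta:\X\beta=\X\beta^\star}$. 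The clean finish needs no further duality and is one line from what you already have: a group-lasso solution $\beta^\star$ exists by coercivity and, by your forward direction, minimizes \eqref{eq:mre-gl}; since the objective of \eqref{eq:mre-gl} is strictly convex as a function of the fit $\X\beta$ and the feasible set is convex, every minimizer of \eqref{eq:mre-gl} shares the fit $\X\beta^\star$; under $\ker\X=\sset{0}$ this forces $\beta=\beta^\star$. That injectivity hypothesis, which the theorem as stated omits, is exactly what your parenthetical remark supplies, and disjointness of the blocks is used only once --- in the dual-norm computation, not in the converse.
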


In the special case that each subset $\La_\kk$ consists of a single element, the adaptive group lasso estimator equals the adaptive lasso estimator \citep{Zou06,HuaMaZha08}. If additionally the weights are taken equal to one, then one obtains the standard lasso estimator
\begin{equation*} 
   \minimize_{\beta \in \R^p}  \frac{1}{2}\norm{ Y - \X \beta   }_2^2
    +
    \gamma
    \sum_{j=1}^p
    \abs{ \beta_j } \,,
\end{equation*}
which is introduced in \cite{Tib96}. 

The following special case of \cref{thm:group-lasso} is well known, also in the compressed sensing and sparse recovery community \citep{Fuc01,Fuc04,Tro06}. A related result has been obtained in \cite{OsbTur00} for the lasso with the $\ell^1$-norm as constraint.

\begin{corollary}[Dual formulation of the lasso] \label{cor:lasso-dual}
The standard lasso estimator and the
constrained optimization problem
\begin{equation*} 
	\left\{
	\begin{aligned}
	&\minimize_{\beta \in \R^p}  &&
	\frac{1}{2}  \norm{ \X \beta }_2^2   \\
	&\st &&
	\norm{\X^{\mathsf{T}} \skl{ Y - \X \beta } }_\infty
	\leq   \gamma
	\;,
	\end{aligned}
	\right.
\end{equation*}
have the same sets of solutions.
\end{corollary}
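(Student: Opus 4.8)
The plan is to obtain this corollary as a direct specialization of \cref{thm:group-lasso}, so there is no need to redo the Fenchel-duality argument. First I would take the orthonormal basis $(\phi_\la)_{\la\in\La}$ of $\R^p$ to be the canonical basis, so that $\phi_k = e_k$ and $\La = \set{1,\dots,p}$, and I would let each block be a singleton, $\La_\kk = \set{k}$ for $k = 1,\dots,p$, with all weights $w_\kk = 1$. Since distinct singletons are automatically disjoint, the disjointness hypothesis of \cref{thm:group-lasso} is trivially met, so the theorem applies under this choice.

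Next I would verify that, under this specialization, the block operators degenerate to coordinate projections: for a singleton block $\La_\kk = \set{k}$ with $\phi_k = e_k$ one has $\base_\kk v = \inner{e_k}{v} = v_k$ for every $v\in\R^p$, so $\base_\kk v$ is a scalar and $\snorm{\base_\kk v}_2 = \sabs{v_k}$. Applying this with $v = \beta$ shows that the group penalty collapses, $\gamma\sum_{\kk\in\KK} w_\kk\snorm{\base_\kk\beta}_2 = \gamma\sum_{k=1}^p \sabs{\beta_k}$, which is exactly the $\ell^1$ penalty of the standard lasso. Hence the penalized problem \eqref{eq:group-lasso} becomes the standard lasso, while its objective $\tfrac12\norm{\X\beta}_2^2$ already coincides with that of the constrained problem in the statement and needs no change.

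Applying the same identity with $v = \X^{\mathsf{T}}\skl{Y-\X\beta}$ shows that each constrained quantity $\snorm{\base_\kk\X^{\mathsf{T}}\skl{Y-\X\beta}}_2/w_\kk$ equals the absolute value of the $k$-th component of $\X^{\mathsf{T}}\skl{Y-\X\beta}$, so that the maximum over $\kk\in\KK$ becomes $\max_{1\le k\le p}\sabs{\bigl(\X^{\mathsf{T}}\skl{Y-\X\beta}\bigr)_k} = \norm{\X^{\mathsf{T}}\skl{Y-\X\beta}}_\infty$. Therefore the constraint in \eqref{eq:mre-gl} reduces precisely to $\norm{\X^{\mathsf{T}}\skl{Y-\X\beta}}_\infty \le \gamma$, matching the constrained problem of the corollary.

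Having matched both objective and constraint on each side, I would conclude by invoking \cref{thm:group-lasso}, which states that the penalized and constrained formulations share the same solution set; the corollary follows immediately. There is no genuine obstacle beyond bookkeeping here: the only point requiring care is to confirm that the canonical-basis, singleton-block specialization really turns the $\ell^2$ block norms into absolute values of individual coordinates and the maximum over blocks into the ambient $\ell^\infty$ norm, which is exactly what makes the group structure degenerate to the componentwise lasso.
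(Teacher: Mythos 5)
Your proposal is correct and follows exactly the paper's own proof: specialize \cref{thm:group-lasso} to the canonical basis with singleton blocks and unit weights, so the group penalty collapses to the $\ell^1$ norm and the blockwise constraint to the $\ell^\infty$ constraint. Your write-up simply makes explicit the bookkeeping that the paper leaves implicit.
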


The dual characterization of the lasso in \cref{cor:lasso-dual} reveals the close connection between the Dantzig selector and the lasso: Both  estimators use the same  constraint $\norm{\X^{\mathsf{T}} \skl{ Y - \X \beta } }_\infty \leq \gamma$. This connection has been exploited in \citet{BiRT09}.  However, among all feasible elements the Dantzig selector minimizes the $\ell^1$-norm of  $\beta$, whereas the lasso minimizes the $\ell^2$-norm of the prediction $\X \beta$. Hence, in general, the lasso is better for prediction, while the Dantzig selector is better for coefficient estimation.

As a further consequence of \cref{thm:group-lasso}, we can show that in the case of $\X=\Id$ the adaptive group lasso estimator is equal to the block soft-thresholding.

\begin{corollary}[Equivalence of the block thresholding and the adaptive group lasso]\label{thm:bt-gl}
Consider the regression case $\X = \Id$ and suppose that $\La  =  \bigcup_{\kk \in \KK}  \La_\kk$ consists of disjoint subsets $\La_\kk$. Then the following three estimators coincide:
\begin{enumerate}[label=\emph{(\alph*)}] 
\item
The block soft-thresholding estimator in \eqref{eq:block-st2}.
\item
The  adaptive group lasso estimator in \eqref{eq:group-lasso}.
\item
The  multiscale Dantzig estimator in \eqref{eq:mre-gl}.
\end{enumerate}
\end{corollary}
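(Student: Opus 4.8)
The plan is to reduce the corollary to \cref{thm:group-lasso} together with one explicit block-by-block computation. The equivalence of (b) and (c) is immediate: \cref{thm:group-lasso} already asserts that the adaptive group lasso in \eqref{eq:group-lasso} and the multiscale Dantzig estimator in \eqref{eq:mre-gl} have the same set of solutions, and this applies verbatim for $\X = \Id$ (where the Dantzig constraint becomes $\max_\kk \snorm{\base_\kk\skl{Y - \beta}}_2/w_\kk \le \gamma$ and its objective becomes $\tfrac12\norm{\beta}_2^2$, with a unique minimizer by strict convexity). Hence it remains only to identify the block soft-thresholding estimator (a) in \eqref{eq:block-st2} with the adaptive group lasso (b), under the natural matching of tuning parameters $q_\kk = \gamma w_\kk$.

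For (a) $=$ (b) I would exploit that $\skl{\phi_\la}_{\la\in\La}$ is an orthonormal basis of $\R^p$ and that the blocks $\La_\kk$ are disjoint with $\La = \bigcup_\kk \La_\kk$. Expanding $\beta = \sum_\la \inner{\phi_\la}{\beta}\phi_\la$ and invoking Parseval's identity, the group lasso objective with $\X = \Id$ becomes
\begin{equation*}
\frac12\norm{Y - \beta}_2^2 + \gamma\sum_{\kk\in\KK} w_\kk\snorm{\base_\kk\beta}_2
=
\sum_{\kk\in\KK}\kl{\frac12\snorm{\base_\kk Y - \base_\kk\beta}_2^2 + \gamma w_\kk\snorm{\base_\kk\beta}_2},
\end{equation*}
since disjointness of the blocks makes both the quadratic term and the penalty separate over $\kk$. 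Because $\base_\kk\beta$ ranges freely over $\R^{\La_\kk}$ and these block-coefficient vectors are unconstrained and mutually independent across $\kk$, the minimization decouples into $\abs{\KK}$ independent problems, each of the prototypical form $\min_{z}\tfrac12\snorm{b - z}_2^2 + \tau\snorm{z}_2$ with $b := \base_\kk Y$ and $\tau := \gamma w_\kk$.

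The core step is to solve this block problem. Its objective is strictly convex in $z$, so the minimizer is unique and characterized by the optimality condition $0 \in z - b + \tau\,\partial\snorm{z}_2$. At $z = 0$ the subdifferential $\partial\snorm{\edot}_2$ is the closed unit ball, so $z = 0$ is optimal exactly when $\snorm{b}_2 \le \tau$; for $z \neq 0$ stationarity forces $z$ to be a positive multiple of $b$, which yields $z = b\skl{1 - \tau/\snorm{b}_2}$ whenever $\snorm{b}_2 > \tau$. In both regimes the unique solution is the vector soft-thresholding $b\kl{1 - \tau/\snorm{b}_2}_+ = \eta_1\skl{b, \tau}$. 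Substituting $b = \base_\kk Y$ and $\tau = \gamma w_\kk$ shows that the block coefficients of the group lasso minimizer are precisely $\eta_1\skl{\base_\kk Y, \gamma w_\kk}$, which is exactly the definition of $\hat\beta^{\text{(b-soft)}}$ in \eqref{eq:block-st2} under $q_\kk = \gamma w_\kk$. This gives (a) $=$ (b), completing the chain (a) $=$ (b) $=$ (c).

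I expect no serious obstacle: once the orthonormal change of variables is made the argument is essentially bookkeeping. The only point requiring genuine care is the nonsmoothness of $\snorm{\edot}_2$ at the origin, which is why the optimality condition must be handled through the subdifferential (the closed unit ball at $z = 0$) rather than a gradient, and why the thresholding-to-zero case $\snorm{\base_\kk Y}_2 \le \gamma w_\kk$ must be isolated. A minor but essential bookkeeping point is to record the parameter correspondence $q_\kk = \gamma w_\kk$ explicitly, since the three estimators are parametrized differently.
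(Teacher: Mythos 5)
Your proof is correct, but it takes a partially different route from the paper's. The paper's proof is purely citation-based and chains (a)~$=$~(c)~$=$~(b): it invokes \cref{thm:soft-block} to identify the block soft-thresholding estimator \eqref{eq:block-st2} with the constrained problem \eqref{eq:mre-gl} specialized to $\X = \Id$, and then \cref{thm:group-lasso} to identify \eqref{eq:mre-gl} with the group lasso \eqref{eq:group-lasso}. You share the second leg (Fenchel duality via \cref{thm:group-lasso}) but replace the first: instead of citing the constrained-form characterization, you prove (a)~$=$~(b) directly, decoupling the penalized objective over blocks via Parseval and computing the proximal map of $z \mapsto \tau\norm{z}_2$ by subdifferential calculus, which yields the vector soft-thresholding $\eta_1(b,\tau)$. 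Your computation is sound, including the treatment of the nonsmooth point $z=0$ and the stationarity argument forcing $z$ to be a positive multiple of $b$. This route buys three things: the key leg is self-contained (it essentially re-derives the content of \cref{thm:soft-block}, but on the penalized rather than the constrained side); the parameter correspondence $q_\kk = \gamma w_\kk$, which the paper leaves implicit, is recorded explicitly; and it sidesteps a small mismatch in the paper, namely that \cref{thm:soft-block} as printed states its constraint with the block $\ell^\infty$-norm, whereas matching \eqref{eq:mre-gl} requires the block $\ell^2$-norm, so the paper's citation only works under the (evidently intended) $\ell^2$ reading. What the paper's route buys in exchange is brevity --- two sentences --- and maximal reuse of already-established results.
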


Finally, we stress that in exactly the same way Nemirovskii's estimator  can be shown to be equivalent to its reverse constrained variant.

\begin{theorem}[Reverse formulation of Nemirovskii's estimator]\label{th:rvnem}
Nemirovskii's estimator in \eqref{eq:nemirovski} is equivalent to the reverse formulation 
$$
\left\{
	\begin{aligned}
	& \minimize_{\beta \in \R^n} \norm{\beta}_{k,q} \\
	&\st \norm{Y-  \beta}_{{\cal N}} \leq q,
	\end{aligned}
\right.
$$
for some proper choice of $q$.
\end{theorem}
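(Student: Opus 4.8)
The plan is to read \cref{th:rvnem} as the standard equivalence between two constrained convex programs whose objective and constraint have been interchanged --- an Ivanov--Morozov type duality. Abbreviate the multiscale discrepancy by $D(\beta) \coloneqq \norm{Y - \beta}_{{\cal N}}$ and the Sobolev penalty by $P(\beta) \coloneqq \norm{\beta}_{k,q}$. Then Nemirovskii's estimator in \eqref{eq:nemirovski} minimizes $D$ over the sublevel set $\set{\beta : P(\beta) \le c}$, while the reverse formulation minimizes $P$ over $\set{\beta : D(\beta) \le q}$. Both $D$ and $P$ are (semi)norms, hence convex and continuous; apart from the monotonicity of the value function $c \mapsto \min_{P \le c} D$, these are the only properties I intend to use. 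The governing object is the Pareto frontier of the pair $(D, P)$: the two programs trace out the same frontier from opposite ends, and a common minimizer must be a Pareto optimal point, which is exactly the coupling that fixes the \enquote{proper} $q$.

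First I would establish one inclusion cleanly. Start from a solution $\hat\beta$ of the reverse program with active constraint $D(\hat\beta) = q$, chosen additionally to have the smallest discrepancy among reverse solutions, and set $c \coloneqq P(\hat\beta)$. I claim $\hat\beta$ solves \eqref{eq:nemirovski} with this $c$. Feasibility is immediate since $P(\hat\beta) = c$. For optimality I argue by contradiction: any $\beta'$ with $P(\beta') \le c$ and $D(\beta') < D(\hat\beta)$ would be strictly reverse-feasible and satisfy $P(\beta') \le c = \min\set{P : D \le q}$, hence would itself be reverse optimal with strictly smaller discrepancy, contradicting the choice of $\hat\beta$. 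Reading the same chain of inequalities in the other direction gives the converse inclusion: starting from a Nemirovskii solution $\hat\beta$ with active penalty constraint $P(\hat\beta) = c$ and minimal $P$ at its own discrepancy level, the choice $q \coloneqq D(\hat\beta)$ makes $\hat\beta$ feasible for the reverse program, and no reverse-feasible point can undercut its penalty without violating Nemirovskii optimality. The coupling $q = D(\hat\beta)$ (equivalently $c = P(\hat\beta)$) is the precise counterpart of the tuning $1/\gamma$ supplied by Morozov's discrepancy principle in \cref{thm:lagrange}.

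The hard part is the degenerate geometry hidden in the phrases \enquote{active constraint} and \enquote{minimal $P$ at its discrepancy level}. Because $\norm{\cdot}_{k,q}$ is not strictly convex for $q \in \set{1, \infty}$, the objective of either program can be flat along an entire face, so a whole segment of minimizers may share the same optimal value while having different discrepancies; a Nemirovskii solution that is not Pareto optimal (one that could lower $P$ without raising $D$) then fails to solve the reverse program for \emph{any} $q$. The crux is therefore to show that, after selecting the extremal representative on the frontier and the matching $q = D(\hat\beta)$, such Pareto dominance cannot occur, which is exactly what the contradiction arguments above deliver. This is also why the statement is phrased as equivalence \enquote{for some proper choice of $q$} rather than as a pointwise identity of the full solution sets: the equivalence is at the level of optimal values together with a shared Pareto optimal minimizer, in complete analogy with the constrained/penalized correspondences already invoked for \cref{thm:soft,thm:group-lasso}.
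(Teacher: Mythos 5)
Your proof is correct, but it takes a genuinely different route from the paper. The paper's own proof of \cref{th:rvnem} is a one-liner: it chains \cref{thm:lagrange,thm:lagrange-b}, i.e.\ it passes from the $R$-constrained problem \eqref{eq:a3} through the intermediate \emph{penalized} problem \eqref{eq:a2} to the $G$-constrained problem \eqref{eq:a1}, using the saddle-point machinery of \cref{lem:kt} (Slater's condition, existence of a Lagrange multiplier $\gamma$, complementary slackness). You bypass the penalized formulation entirely and argue directly between the two constrained programs by selecting an extremal (Pareto-optimal) representative of one solution set --- the reverse solution of minimal discrepancy, respectively the Nemirovskii solution of minimal penalty --- and deriving a contradiction from minimality alone. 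Notably, your argument never uses convexity of $\norm{\edot}_{\cal N}$ or $\norm{\edot}_{k,q}$, only that the relevant minima are attained (which holds here since $\norm{\edot}_{k,q}$ is a genuine norm on $\R^n$, so the solution sets are compact); it is therefore more elementary and strictly more general than the Lagrangian route. What the paper's route buys in exchange is a stronger conclusion and additional structure: under Slater plus the non-degeneracy condition \eqref{eq:ass-dist}, \emph{every} solution of one problem solves the other, and the multiplier $\gamma$ produced along the way ties the equivalence to Morozov's discrepancy principle and to the computationally convenient penalized form. Your argument, as you correctly point out, only yields a shared Pareto-optimal minimizer and matching optimal values, and this gap is real when the objective is flat on a face of the feasible set (in fact one inclusion of solution sets does hold in your setup: every solution of the reverse program with $q = D(\hat\beta)$ is a Nemirovskii solution for the given $c$, though not conversely). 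One minor wording flaw: describing $\hat\beta$ as having \enquote{active constraint $D(\hat\beta)=q$} while simultaneously being the reverse solution of smallest discrepancy is internally inconsistent; your contradiction argument uses only the minimal-discrepancy property, which is all that is needed.
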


Thus, Nemirovskii's estimator is also an instance of MIND in \eqref{eq:mind}. The reverse version was indeed introduced by \citet{Nem85}, who credited the original idea to S.~V.~Shil'man. It was shown to be adaptively minimax optimal over Sobolev balls in \citet{GrLM18}. The reverse Nemirovskii's estimator appears to us favorable over the original and penalized version, as the threshold $q$ has a distinct statistical interpretation~(\cref{sec:MIND}).

\subsection{Multiscale change point segmentation}\label{ss:mcps}

As a special case, we consider now the model \ref{M3} of change point detection in detail. The target is to estimate a piecewise constant function $f : [0,1]\to \R$ with values $f(x_i) = \sum_{k=1}^i \beta_k$ for $i = 1,\ldots, n$. A typical choice of regularity measure is the number of jumps, that is, $R(\beta) = \norm{\beta}_0$, which is defined as the number of non-zero elements of $\beta$. The corresponding  \emph{jump penalized least squares} estimator \citep{BoyKemLieMunWit09} takes the form of 
$$
\minimize_{\beta \in \R^p}  \frac{1}{2}\norm{ Y - \X \beta}_2^2
    +
    \gamma
 \norm{\beta}_0\,.
$$ 
{Refined} penalties can be found in e.g.\ \citet{ZhSi07} and \citet{DaYa13}. 

In practice, the selection of the global penalty parameter $\gamma$ is tricky for jump penalized least squares, in particular, when the change points are spatially inhomogeneous. \citet{FriMunSie14} introduced a remedy, SMUCE (simultaneous multiscale change point estimator) following the MIND idea, by combining variational estimation with multiple tests on residuals over different scales. More generally, \emph{multiscale change point segmentation} (MCPS; \citealp{LiGM19}) is defined as any solution to the constrained non-convex optimization problem
\begin{equation}\label{eq:mcps}
\left\{
\begin{aligned}
	&\minimize_{\beta \in \R^n}  \quad \norm{\beta}_0   \\
	&\st \quad
	\max_{(i,j) \in \mathcal{I}_\beta}\left \{
	\frac{\abs{\sum_{k = i}^j(Y - \X \beta)_k}}{\sqrt{j-i+1}} - s_{i,j}  \right \}
	\leq   q\,,
	\end{aligned}
\right.
\end{equation}
where $\mathcal{I}_\beta = \set{(i,j)\, :\, 1 \le i \le j \le n, \text{ and } \beta_k = 0 \text{ for }i < k \le j}$, and $s_{i,j}$ are certain scale penalties, for instance, $s_{i,j} = \sqrt{2\log (n/(j-i+1))} $. In particular, SMUCE and its FDR variant, FDRSeg \citep{LiMS16}, are instances of the MCPS. It can be easily seen that every MCPS is in fact a special instance of the MIND in \eqref{eq:mind}. Similar to the general strategy for MIND, the threshold $q$ can be set as the $(1-\alpha)$-quantile of 
\begin{equation}\label{eq:msintv}
T_n = \max_{1 \le i \le j \le n} \left\{\frac{\abs{\sum_{k = i}^j \eps_k}}{\sqrt{j-i+1}} - s_{i,j}\right\}\,,
\end{equation}
which, under no assumption, guarantees uniformly over $\beta$ and $f = \X \beta$ that
$$
\Po_\beta\set{\snorm{\hat\beta}_0 \le \norm{\beta}} = \Po_f \set{\# \text{jumps of }\hat f_n \;\le\; \# \text{jumps of }f_n} \ge 1-\alpha\,,
$$
where $\hat\beta$ is computed by the MCPS, {$\hat f_n = \X \hat\beta$ and $f_n = \X \beta$ (recall \eqref{eq:ssg} in \cref{sec:MIND}).}  
 {A comprehensive discussion of statistical optimality} properties are provided in \citet{FriMunSie14,LiMS16} and \citet{LiGM19}. Extensions can be found e.g.\ in \citet{PeSM17} for heterogeneous Gaussian error, in \citet{VaBM21} for general independent data, in \citet{DeEV20} for dependent data, and in \citet{LMSW20} for {automatic} selecting the bins in a histogram and exploratory data analysis. Besides, \citet{BeHM18} extended the MCPS to {blind source separation, more precisely, to} recover piecewise constant functions (taking values in a finite set) from noisy measurements of their mixtures. In addition, estimators similar to MCPS, but with different regularizations accounting for shape or smoothness have been considered in \citet{DavKov01,DueSpo01,DavMei08,DavKovMei09} and \citet{SchMunDue13}.

\section{Distributional properties and selection of the threshold}
\label{sec:dist}

As discussed in \cref{sec:MIND}, the threshold $q$ for MIND is fully determined by the quantiles of multiscale statistic $T_n$ in \eqref{eq:mstat}, which can be estimated by Monte Carlo simulations. As an alternative and computationally more efficient approach, such quantiles can be computed via either limiting distributions of $T_n$ or (approximate) tail probabilities and bounds of~$T_n$.  Here we focus on the first approach, while for the latter we refer to e.g.\ \citet{SieYak00}, \citet{FaLS20} and the references therein.

\subsection{Case of orthogonal bases}

Suppose for the moment that $\skl{\phi_{n,\la}}_{\la \in \La_n}$ is an orthonormal basis of $\R^n$. A particular important class of probe  functionals takes $\T_{n,\la} \eps_n  \coloneqq \sinner{ \phi_{n,\la}}{ \eps_n }$. Under the i.i.d.\ Gaussian assumption, {these functionals are i.i.d.\ Gaussian again, and hence}
\begin{multline} \label{eq:G1}
\lim_{n \to \infty}
\wk\Bigl\{
\max_{\la \in \La_n}
\sabs{ \sinner{\phi_{n,\la}}{\eps_n}}
\leq
\sigma  \sqrt{2 \log n}
+
\sigma \,\frac{2x-  \log\log n - \log\pi}{2\sqrt{2 \log n}}
\Bigr\}
=
\exp\kl{-e^{-x} } \,,
\end{multline}
where the limit distribution is known as the \emph{Gumbel extreme value distribution}. In the first order, $q$ equals $\sigma \sqrt{2 \log n}$, which corresponds to the asymptotic behavior of  the maximum of  absolute values of i.i.d.\ Gaussian random variables, and  is the universal threshold as
proposed in the seminal work {of}~\cite{Don95}, recall \cref{ss:wst}.

\subsection{Redundant systems}

Now suppose that $\skl{\phi_{n,\la}}_{\la \in \La_n}$ is a redundant frame instead of an orthonormal basis and consider again the probe functionals  $\T_{n,\la} \eps_n  \coloneqq \sinner{ \phi_{n,\la}}{ \eps_n }$.
In this situation finding the distribution of the multiscale statistic $T_n$ in \eqref{eq:mstat} is more involved than in the independent case. In  \cite{HalMun13} similar asymptotic distributions as in \eqref{eq:G1} have been derived for a wide class of redundant systems:

\begin{definition}[Asymptotically stable frames]\label{def:asf}
For any $n \in \N$,  let $ \mathcal D_n \coloneqq \skl{\phi_{n,\la}}_{\la \in \La_n}$
be a frame of $\R^n$ with upper frame bound $b_n$, as in \eqref{eq:frmbnd}. Then
$\set{\mathcal D_n}_{n \in \N}$ is called an \emph{asymptotically stable} family of frames,
if $\mnorm{\phi_{n,\la}}_2 =  1$ for all $n \in \N$ and all  $\la \in \La_n$, $\sup\set{ b_n : n \in \N}< \infty$, and $\abs{\sset{\skl{\la,\mu} :
\sabs{\sinner{\phi_{n,\la}}{\phi_{n,\mu}}}   \geq \rho  }}
= o\bigl({\abs{\La_n}}/{\sqrt{\log \sabs{\La_n}}} \bigr)$ for some $\rho < 1$.
\end{definition}

Roughly speaking, for such frames, correlations of  $\sinner{\phi_{n,\la}}{\eps_n}$
asymptotically vanish fast enough and hence the system $\skl{\phi_{n,\la}}_{\la \in \La_n}$ asymptotically behaves as an orthonormal system. Many frames used in applications, such as unions of bases, non-redundant and redundant wavelet systems, and curvelet frames are covered by \cref{def:asf} (\citealp{HalMun13}). Hence, this justifies the universal thresholding {by $\sigma\sqrt{2\log \abs{\La_n}}$} for many systems beyond wavelets.

In case of one dimension and the probe functionals as indicators of all subintervals, the multiscale statistic $T_n$ takes the particular form of \eqref{eq:msintv}. Here the probe functionals are a strongly redundant  frame, and thus not asymptotically stable. However, similar asymptotic distributional results as in \eqref{eq:G1} still hold, see \citet{SieVen95} and \citet{SieYak00}. Generalization to higher dimension can be found in \citet{Kab11,PrWM18} and \citet{KoMW20}.

\section{Numerical computation}\label{sec:num}

In the previous sections we have seen that many estimation techniques can be written as instances of the MIND in \eqref{eq:mind}. For thresholding methods, the solution of \eqref{eq:mind} is often given by an explicit formula (see \cref{sec:threshold}). In the general case, however, a MIND must be computed numerically by
applying an optimization procedure.

In case that the regularization functional $R$ is convex, \eqref{eq:mind} is a nonsmooth convex optimization, and often of large size (when there are many probe functionals). Recent development (e.g.\ \citealp{Nes05,BecTeb09,BecBibCan11,ChaPoc11}) in optimization has made the computation of MIND in \eqref{eq:mind} feasible (e.g.\ in a few minutes for $256 \times 256$ images) on standard laptops, see e.g.\ \citet{FriMarMun12b}. Run time comparisons suggest the primal-dual hybrid-gradient algorithm \citep{ChaPoc11} as a powerful general computational scheme for MIND, see \citet{ALMW20} for details and a comparison to other optimization methods, including semismooth Newton and ADMM (alternating direction method of multipliers). MATLAB codes are freely available at \url{https://github.com/housenli/MIND}. 

In case of nonconvex $R$, the same algorithm can be applied but there is no guarantee for global optimality, in general. However, in the particular case of change point segmentation (see \cref{ss:mcps}), the global optimal solutions for MCPS in \eqref{eq:mcps} can be computed using dynamic programming algorithms together with speedups leading to run time of the order $O(n \log n)$ {in most often cases.} Implementation is made available in R packages (e.g.\ \texttt{stepR} or \texttt{FDRSeg}) on \hyperlink{https://cran.r-project.org/}{CRAN}, see \citet{FriMunSie14,LiMS16} and \citet{PeSM17} for details. Besides, approximate solutions can be found in sublinear run time \citep{KLHMB20}.

\subsection{Image denoising}
We consider an example of model \ref{M1}, where $\beta \in \R^n$ is an $m \times m$ image with $n = m^2$, often known as \emph{image denoising}. We numerically compare several instances of MIND: wavelet soft-thresholding (\cref{ss:wst}), total variation penalized least squares (\cref{ss:pls}), and two hybrid approaches (\cref{ss:hdtv}) combining total variation with wavelets and shearlets, respectively.  Daubechies’ symlets with six vanishing moments and shearlets with four scale levels (default in \citealp{KutMorZhu12}) are used.  The threshold for wavelet soft-thresholding and hybrid methods is set as the 90\%-quantile of the corresponding multiscale statistic in \eqref{eq:mstat}, respectively. The penalty parameter $\gamma$ in  total variation penalized least squares is tuned to give the best visual quality {(cf.\ \cref{sss:pe}).} 

Results are depicted in \figref{fig:berkeley}. In this example, wavelet soft-thresholding performs the worst; In particular, it shows artificial oscillations across discontinuity (common for wavelets due to missing band pass information). The total variation as a regularization functional has a comparably better performance, while blurring out several details (e.g.\ at the bottom right conner). The hybrid combination with wavelets leads to an improved performance, because more than one spatial scale is incorporated (recall \cref{ss:hdtv}). The hybrid method of total variation and shearlets produces clearly the best result as it recovers features over a range of spatial scales quite accurately. It seems that visual inspection of the image is compatible with the PSNR (peak signal-to-noise ratio).

\begin{figure}
\centering
\includegraphics[width=0.8\textwidth]{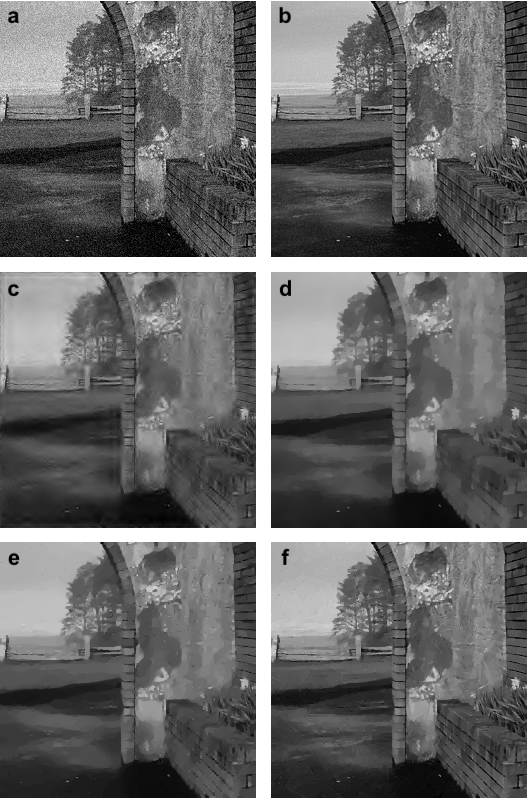}
\caption{Image denoising via different MINDs. (a) Noisy data with $\sigma = 14$ (PSNR = 25.2); (b) True image from BSDS500 \citep{MartinFTM01}; (c) Wavelet soft-thresholding (PSNR = 25.4); (d) Total variation penalized least squares (ROF; PSNR = 25.9); (e) Hybrid wavelet--total variation (PSNR = 26.1); (f) Hybrid shearlet--total variation (PSNR = 28.2).  
}
\label{fig:berkeley}
\end{figure}

\begin{figure}
\centering
\includegraphics[width=0.95\textwidth]{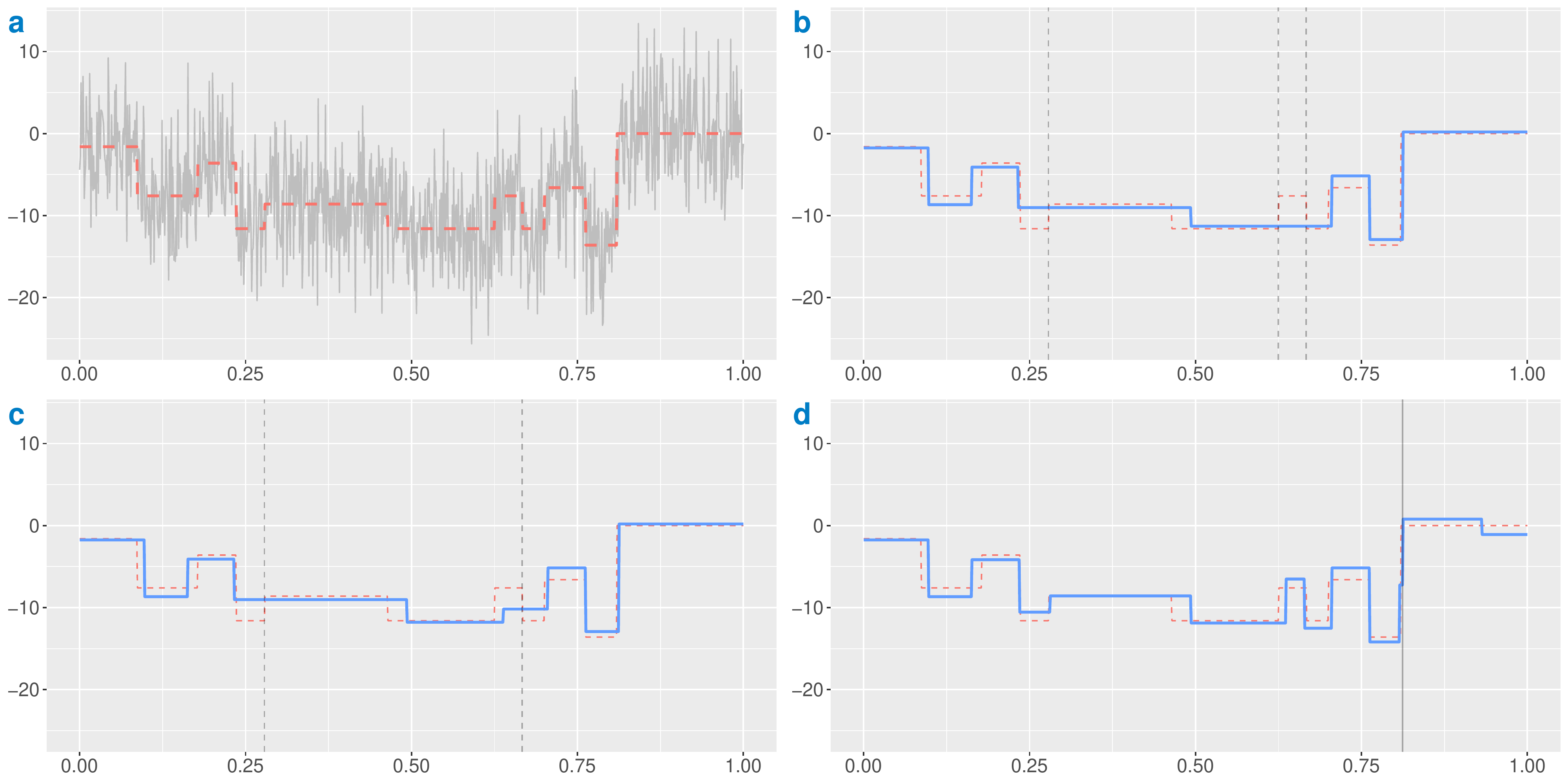}
\caption{Change point estimation for a randomly generated signal. (a) Noisy data (solid gray line); (b)--(d) Estimators (solid blue line) of jump penalized least squares, SMUCE, and FDRSeg, respectively. Vertical dashed lines indicate missed change points (false negatives), while vertical solid lines mark artificial change points (false positives). The true signal (dashed red line) is plotted in all panels.}
\label{fig:changepoint}
\end{figure}

\subsection{Detection of change points}
Recall example \ref{M3}. On a randomly generated piecewise constant signal, we run the jump penalized least squares, and two multiscale methods, SMUCE \citep{FriMunSie14} and FDRSeg \citep{LiMS16}, all of which are instances of the MIND (\cref{ss:mcps}). The penalty parameter $\gamma = 2\sigma^2\log_2 n$, which corresponds to the Bayesian information criterion, is used for the jump penalized least squares. For SMUCE and FDRSeg, the default parameters in R packages \texttt{stepR} and \texttt{FDRSeg}, respectively, are used. The comparison is shown in \figref{fig:changepoint}. The jump penalized least squares recovers major structure of the signal, but misses three jumps. SMUCE examines over multiple scales (here lengths of intervals) and recovers one more jump. By switching to a weaker error criterion, FDRSeg detects all jumps of the signal, but at the price of including one artificial jump.

\section*{Acknowledgement}
We thank Shayan Hundrieser, Russell Luke, Marc Teboulle, Frank Werner for helpful comments and {in particular} Timo Aspelmeier for inspiring discussions and computational assistance.

\appendix

\section{Relations between constrained and unconstrained minimization}\label{s:rlvf}

There are at least two close connections between the constrained optimization in \eqref{eq:rss-const} and the unconstrained optimization in \eqref{eq:rss-pen}. The first one arises from a Lagrangian multiplier approach, whereas the second arises from Fenchel duality. In what follows we analyze both approaches.

\subsection{Lagrangian multiplier approach}
\label{sec:lagrange}

Let $\rf \colon \R^p \to \R \cup \set{\infty}$ and $G \colon \R^n \to \R \cup \set{\infty}$ be convex proper functionals, and let $\X \in \R^{n \times p}$.   It is convenient to allow a convex functional to attain the value $\infty$. The set $\mathcal D (\rf) \coloneqq \set{\beta \in \R^p \colon \rf(\beta) < \infty}$ where the functional
$\rf$ takes  finite values is referred to as the \emph{domain} of $\rf$.  The functional $\rf$ is called \emph{proper}, if
$\mathcal D (\rf) \neq \emptyset$, see \citet{Roc70}.

We will investigate the relation between the following optimization problems:
\begin{align}  \label{eq:a1}
& \min \set{\rf\skl{\beta}  \colon
G\kl{\X \beta  } \leq   q} && \text{($G$-constrained)} \\ \label{eq:a2}
&\min \set{ G \kl{\X \beta } + \gamma
\rf\skl{\beta}  \colon \beta \in \R^{p}} && \text{(penalized)} \\\label{eq:a3}
&\min \set{G \kl{ \X\beta }  \colon
\rf\skl{\beta} \leq   c}  &&\text{($R$-constrained)},
\end{align}
for parameters  $q$, $\gamma$, $c  \geq 0$. Here we simplify the notation by $G\kl{\X \beta  } \equiv G\kl{\X \beta; Y }$.  We will show that  \eqref{eq:a1}, \eqref{eq:a2}, and \eqref{eq:a3} are  equivalent in the sense that any solution $\beta_\star$ of one of these optimization problems also provides  a solution of the others.

The proof will use the following saddle-point theorem, which is a well known result from convex analysis.

\begin{lemma}
Suppose that\label{lem:kt} $\Phi, \Psi \colon \R^p \to \R \cup \set{\infty}$ are proper convex functionals and denote by  $\mathcal L \colon \R^p \times \R_{\geq 0} \to \R \cup \set{\infty}$ the Lagrangian defined by  $ \mathcal L \skl{\beta, \mu} = \Phi\skl{\beta} + \mu \Psi\skl{\beta }$.

\begin{enumerate}[label=\emph{(\alph*)}] 
\item\label{lem:kt:a}
Let $\skl{\beta_\star, \mu_\star} \in \R^p \times \R_{\geq 0}$  be a \emph{saddle-point}
of $ \mathcal L$, that is,   for all $ \kl{\beta, \mu} \in \R^p \times \R_{\geq 0}$,
 \begin{equation*}
 \mathcal L \skl{\beta_\star, \mu}
 \leq
 \mathcal L \skl{\beta_\star, \mu_\star}
 \leq
 \mathcal L \skl{\beta, \mu_\star} \,.
 \end{equation*}
 Then $\beta_\star \in \argmin \set{\Phi\kl{\beta} \colon \Psi\kl{\beta } \leq 0 }$.

\item\label{lem:kt:b}
If, additionally, Slater's condition
\begin{equation*}
\mathcal D \kl{\Phi}  \cap \set{\beta \in \R^p: \Psi (\beta) < 0}
\neq \emptyset
\end{equation*}
holds and $\beta_\star \in \argmin \{\Phi\kl{\beta} \colon \Psi\kl{\beta } \leq 0 \}$, then
there exists a Lagrangian multiplier $\mu_\star\geq 0$ such that
$\skl{\beta_\star, \mu_\star}$  is a saddle-point of $ \mathcal L$.
\end{enumerate}
\end{lemma}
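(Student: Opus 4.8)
The plan is to prove the two parts separately: part \ref{lem:kt:a} follows by directly manipulating the two saddle-point inequalities, while part \ref{lem:kt:b} requires a supporting-hyperplane (separation) argument in which Slater's condition is exactly what rescues the nondegeneracy of the multiplier.

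For part \ref{lem:kt:a} I would start from the left inequality $\mathcal{L}(\beta_\star,\mu)\le \mathcal{L}(\beta_\star,\mu_\star)$, which after cancelling $\Phi(\beta_\star)$ reads $(\mu-\mu_\star)\Psi(\beta_\star)\le 0$ for all $\mu\ge 0$. Letting $\mu\to\infty$ forces $\Psi(\beta_\star)\le 0$, so $\beta_\star$ is feasible; setting $\mu=0$ and combining with $\mu_\star\ge 0$ and $\Psi(\beta_\star)\le 0$ gives the complementary-slackness identity $\mu_\star\Psi(\beta_\star)=0$. Then for any feasible $\beta$ (that is, $\Psi(\beta)\le 0$) the right inequality yields
\begin{equation*}
\Phi(\beta_\star)=\mathcal{L}(\beta_\star,\mu_\star)\le \mathcal{L}(\beta,\mu_\star)=\Phi(\beta)+\mu_\star\Psi(\beta)\le \Phi(\beta),
\end{equation*}
where the last step uses $\mu_\star\ge 0$ and $\Psi(\beta)\le 0$. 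Hence $\beta_\star\in\argmin\set{\Phi(\beta)\colon\Psi(\beta)\le 0}$.

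For part \ref{lem:kt:b} I would introduce the convex \emph{image set} $\mathcal{A}\coloneqq\set{(u,w)\in\R^2\colon \exists\,\beta\in\R^p,\ \Psi(\beta)\le u,\ \Phi(\beta)\le w}$ and set $v\coloneqq\Phi(\beta_\star)$, the optimal value. Convexity of $\Phi,\Psi$ makes $\mathcal{A}$ convex, and optimality of $\beta_\star$ ensures $(0,w)\notin\mathcal{A}$ for $w<v$, so $(0,v)$ is a boundary point of $\mathcal{A}$. The supporting-hyperplane theorem then produces a nonzero $(\lambda,\rho)$ with $\lambda u+\rho w\ge \rho v$ for all $(u,w)\in\mathcal{A}$. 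Since the same $\beta$ witnesses membership whenever either coordinate is increased, $\mathcal{A}$ is upward closed in each coordinate, which forces $\lambda\ge 0$ and $\rho\ge 0$.

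The hard part will be excluding the degenerate case $\rho=0$, and this is precisely where Slater's condition is indispensable: if $\rho=0$ then $\lambda>0$ and the supporting inequality would read $\lambda u\ge 0$, i.e.\ $u\ge 0$ throughout $\mathcal{A}$, contradicting the existence of a Slater point $\bar\beta\in\mathcal{D}(\Phi)$ with $\Psi(\bar\beta)<0$. Thus $\rho>0$, and after normalizing $\rho=1$ and putting $\mu_\star\coloneqq\lambda\ge 0$, evaluating the inequality at $(u,w)=(\Psi(\beta),\Phi(\beta))$ gives $\mathcal{L}(\beta,\mu_\star)\ge \Phi(\beta_\star)$ for every $\beta$, which is the right saddle-point inequality once I verify $\mathcal{L}(\beta_\star,\mu_\star)=\Phi(\beta_\star)$. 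That identity follows by taking $\beta=\beta_\star$ (yielding $\mu_\star\Psi(\beta_\star)\ge 0$) and combining with $\mu_\star\ge 0$, $\Psi(\beta_\star)\le 0$, so again $\mu_\star\Psi(\beta_\star)=0$. The left inequality is then immediate, since $\mu\Psi(\beta_\star)\le 0=\mu_\star\Psi(\beta_\star)$ for every $\mu\ge 0$, completing the saddle-point verification.
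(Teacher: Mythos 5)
Your proof is correct, but it takes a genuinely different route from the paper for the simple reason that the paper offers no argument of its own: both parts are dispatched by citing Theorem~28.3 and Corollary~28.2.1 of \citet{Roc70}. Your part (a) is the elementary manipulation that underlies that citation: feasibility of $\beta_\star$ by letting $\mu\to\infty$ in the left saddle inequality, complementary slackness $\mu_\star\Psi(\beta_\star)=0$ from the choice $\mu=0$, and then the chain $\Phi(\beta_\star)=\mathcal{L}(\beta_\star,\mu_\star)\le\mathcal{L}(\beta,\mu_\star)\le\Phi(\beta)$ over the feasible set. Your part (b) replaces Rockafellar's perturbation-function machinery with the classical geometric proof of the Kuhn--Tucker theorem: convexity and upward closedness of the image set $\mathcal{A}$, a supporting hyperplane at the boundary point $(0,\Phi(\beta_\star))$, and Slater's condition to rule out a vertical (degenerate) supporting line. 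What the citation buys the paper is brevity and full rigor in the extended-real-valued setting; what your argument buys is a short, self-contained proof requiring nothing beyond the supporting-hyperplane theorem in $\R^2$.

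One bookkeeping caveat is worth stating explicitly, since the lemma allows $\Phi,\Psi$ to take the value $\infty$. In part (a), cancelling $\Phi(\beta_\star)$ tacitly assumes $\Phi(\beta_\star)<\infty$. In part (b), evaluating the supporting inequality at $(u,w)=(\Psi(\beta),\Phi(\beta))$ only covers $\beta$ at which both functionals are finite; for $\beta$ with $\Psi(\beta)=\infty$ the right saddle inequality $\mathcal{L}(\beta,\mu_\star)\ge\Phi(\beta_\star)$ is immediate when $\mu_\star>0$, but when $\mu_\star=0$ it depends on how $0\cdot\infty$ is interpreted. Under Rockafellar's convention ($0\cdot\Psi$ is the indicator of the domain of $\Psi$, so $0\cdot\infty=\infty$ here) your proof closes with no change; under the opposite convention $0\cdot\infty=0$ the conclusion itself can fail: take $p=1$, $\Phi(\beta)=\max(-\beta,-1)$, and $\Psi(\beta)=-1$ for $\beta\le 0$, $\Psi(\beta)=\infty$ for $\beta>0$. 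Then Slater's condition holds and $\beta_\star=0$, but the left saddle inequality forces $\mu_\star=0$, while $\mathcal{L}(2,0)=-1<0=\Phi(\beta_\star)$, so no saddle point exists. So state the convention once (or note that in all of the paper's applications $\Psi(\beta)=G(\X\beta)-q$ is finite-valued, where the issue never arises), and your argument is complete.
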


\begin{proof}
Part \emph{\ref{lem:kt:a}} follows from Theorem~28.3 in \citet{Roc70}. 

Part \emph{\ref{lem:kt:b}} follows from Corollary~28.2.1 and Theorem~28.3 in \citet{Roc70}.
\end{proof}

We next derive the equivalence of the $G$-constrained and penalized formulation, i.e.\ of \eqref{eq:a1} and   \eqref{eq:a2}.

\begin{theorem}[Equivalence of \eqref{eq:a1} and \eqref{eq:a2}]\label{thm:lagrange}
Let $\rf \colon \R^p \to \R\cup \set{\infty}$ and $G\colon \R^n \to \R\cup \set{\infty}$
be proper convex functionals and let $\X\in \R^{n\times p}$.
Then \eqref{eq:a1} and \eqref{eq:a2} are equivalent in the following sense:
\begin{enumerate}[label=\emph{(\alph*)}] 
\item
\label{th:lag:a}
If $\beta_\star$ is a solution of
\eqref{eq:a2} for some $\gamma >0$,
then $\beta_\star$ solves
\eqref{eq:a1} with
$q = G\kl{\X \beta_\star}$.

\item
\label{th:lag:b}
For a given $q>0$, suppose that $G(\X \beta_0) < q$ for  some
$\beta_0 \in \mathcal D (\rf)$ and  that we have
\begin{equation} \label{eq:ass-dist}
\argmin (\rf)  \cap \set{ \beta \colon  G \kl{\X \beta } \leq  q}  = \emptyset \,.
\end{equation}
Then, any  solution  $\beta_\star$ of  \eqref{eq:a1}
satisfies $G\kl{ \X \beta_\star } = q$ and there exists some
$\gamma >0$ such that $\beta_\star$ also solves~\eqref{eq:a2}.
\end{enumerate}
\end{theorem}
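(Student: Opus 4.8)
The plan is to realize both parts through the saddle-point dictionary supplied by \cref{lem:kt}, applied with $\Phi = R$ and $\Psi(\beta) = G(\X\beta) - q$. I would first note that $\Psi$ is proper and convex, being the composition of the convex $G$ with the linear map $\X$ shifted by the constant $q$, so that \cref{lem:kt} is applicable and the constraint $\Psi(\beta)\le 0$ is exactly $G(\X\beta)\le q$. Consequently $\argmin\{\Phi(\beta)\colon\Psi(\beta)\le 0\}$ coincides with the solution set of \eqref{eq:a1}, and the Lagrangian of the lemma specializes to $\mathcal L(\beta,\mu)=R(\beta)+\mu\bigl(G(\X\beta)-q\bigr)$.

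For part~\ref{th:lag:a}, I would start from a minimizer $\beta_\star$ of the penalized functional \eqref{eq:a2}, set $q = G(\X\beta_\star)$ and $\mu_\star = 1/\gamma$, and verify directly that $(\beta_\star,\mu_\star)$ is a saddle point of $\mathcal L$. The left saddle inequality is immediate, since $G(\X\beta_\star)-q=0$ renders $\mu\mapsto\mathcal L(\beta_\star,\mu)$ constant. The right inequality follows by dividing the minimality relation $G(\X\beta_\star)+\gamma R(\beta_\star)\le G(\X\beta)+\gamma R(\beta)$ by $\gamma$ and rearranging into $R(\beta_\star)\le R(\beta)+\mu_\star\bigl(G(\X\beta)-q\bigr)$. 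The first part of \cref{lem:kt} then gives $\beta_\star\in\argmin\{R(\beta)\colon G(\X\beta)\le q\}$, which is \eqref{eq:a1}.

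For part~\ref{th:lag:b}, the first task is to show that the constraint is active, namely $G(\X\beta_\star)=q$. As $\beta_\star$ is feasible it suffices to exclude $G(\X\beta_\star)<q$. Assuming this, I would prove $\beta_\star\in\argmin(R)$ over all of $\R^p$: for arbitrary $\beta$ with $R(\beta)<\infty$, consider the segment $\beta_t=(1-t)\beta_\star+t\beta$; the function $t\mapsto G(\X\beta_t)$ is convex and strictly below $q$ at $t=0$, so $\beta_t$ is feasible for all sufficiently small $t>0$, whence optimality of $\beta_\star$ gives $R(\beta_\star)\le R(\beta_t)\le (1-t)R(\beta_\star)+tR(\beta)$, and cancelling the $R(\beta_\star)$ terms yields $R(\beta_\star)\le R(\beta)$. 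This makes $\beta_\star$ a global minimizer of $R$ that is also feasible, contradicting \eqref{eq:ass-dist}; hence $G(\X\beta_\star)=q$.

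The remaining step invokes the converse direction of the lemma. Slater's condition for the second part of \cref{lem:kt} reads $\mathcal D(R)\cap\{\beta\colon G(\X\beta)<q\}\neq\emptyset$, which is precisely the assumed existence of $\beta_0$; and $\beta_\star$ solves \eqref{eq:a1} by hypothesis, so the lemma produces a multiplier $\mu_\star\ge 0$ with $(\beta_\star,\mu_\star)$ a saddle point, the right saddle inequality stating that $\beta_\star$ minimizes $\beta\mapsto R(\beta)+\mu_\star G(\X\beta)$. I would then argue $\mu_\star>0$, for if $\mu_\star=0$ this minimality would say $\beta_\star\in\argmin(R)$, once more contradicting \eqref{eq:ass-dist}; setting $\gamma=1/\mu_\star>0$ identifies $\beta_\star$ as a minimizer of $G(\X\beta)+\gamma R(\beta)$, i.e.\ a solution of \eqref{eq:a2}. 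I expect the main obstacle to lie in the activeness claim $G(\X\beta_\star)=q$, specifically in directions $\beta$ with $G(\X\beta)=\infty$, where the naive convexity bound on $G(\X\beta_t)$ is vacuous; the fix is to use that a one-dimensional convex function is upper semicontinuous from the right at a point where it is finite, so that $\limsup_{t\to 0^+}G(\X\beta_t)\le G(\X\beta_\star)<q$ still guarantees feasibility of $\beta_t$ for small $t$. Everything else reduces to bookkeeping with the saddle inequalities and the two appeals to \eqref{eq:ass-dist}.
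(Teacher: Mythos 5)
Your part~\ref{th:lag:a} is the paper's argument essentially verbatim: you exhibit $\kl{\beta_\star, 1/\gamma}$ as a saddle point of $\mathcal L\kl{\beta,\mu} = \rf\kl{\beta} + \mu\kl{G\kl{\X\beta}-q}$ and invoke \cref{lem:kt}\ref{lem:kt:a}. In part~\ref{th:lag:b} you reorder the paper's steps: the paper first obtains a saddle point $\kl{\beta_\star,\mu_\star}$ from \cref{lem:kt}\ref{lem:kt:b}, rules out $\mu_\star = 0$ by the contradiction with \eqref{eq:ass-dist}, and only then reads off $G\kl{\X\beta_\star} = q$ (from complementary slackness); you instead prove the activeness first, by a primal perturbation along the segment $\beta_t = (1-t)\beta_\star + t\beta$, and afterwards call the lemma and rule out $\mu_\star = 0$ exactly as the paper does. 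The reordering is legitimate, and the segment argument is an attractive, more elementary substitute for complementary slackness --- \emph{provided} the segment is feasible for small $t$, which is precisely the point you flag yourself.

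Your proposed fix for that point is, however, a false statement, and this is a genuine gap. A one-dimensional proper convex function need \emph{not} be upper semicontinuous from the right at a point where it is finite: take $h(0)=0$ and $h(t)=\infty$ for $t>0$; this $h$ is convex and finite at $0$, yet $\limsup_{t\to 0^+} h(t)=\infty$. (The true statement --- finite convex functions on $[0,1]$ are usc at the endpoints --- requires finiteness on the whole segment, i.e.\ exactly the non-degenerate case you were trying to go beyond.) Worse, no repair is possible at this level of generality, because in the degenerate case the activeness claim itself fails: with $p=n=1$, $\X=\Id$, $\rf\kl{\beta}=(\beta-1)^2$, $G$ the convex indicator of $\set{0}$ (zero at $0$, $+\infty$ elsewhere) and $q=1$, all hypotheses of part~\ref{th:lag:b} hold ($\beta_0=0$ gives $G\kl{\X\beta_0}=0<q$, and $\argmin(\rf)=\set{1}$ is infeasible, so \eqref{eq:ass-dist} holds), the unique solution of \eqref{eq:a1} is $\beta_\star=0$, yet $G\kl{\X\beta_\star}=0\neq q$. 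So the conclusion $G\kl{\X\beta_\star}=q$ implicitly requires $G$ to be finite (or continuous) along the relevant directions, as it is for every $G$ the paper actually uses, e.g.\ $G(v)=\max_{\kk}\set{\norm{\T_{\kk}(v-Y)}_2/w_\kk - s_\kk}$ or the quadratic fidelity. To be fair, the paper's own proof is not immune: it applies \cref{lem:kt}\ref{lem:kt:b} to the extended-valued constraint $\Psi = G\circ\X - q$, whereas the cited results of Rockafellar concern ordinary convex programs whose constraint functions are finite on the underlying convex set (and the saddle point is taken over that set); in the example above no saddle point over $\R^p$ exists. The practical conclusion for you: either add the standing assumption that $G\circ\X$ is finite-valued --- then your segment argument is complete and correct (noting also, for the cancellation step, that $\rf\kl{\beta_\star}\le\rf\kl{\beta_0}<\infty$ by feasibility of $\beta_0$) --- or drop the usc claim and derive activeness from complementary slackness after invoking the lemma, as the paper does, accepting the same implicit restriction on $G$.
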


In the special case of strictly convex penalties $\rf$ and $ G \kl{\X \beta } = \frac{1}{2}\norm{Y - \X \beta}^2$, \cref{thm:lagrange} seems to be well known but it is hard to find an accessible
reference. For an infinite dimensional setting such a  proof can  be found in \citet{Vas70} and 
\cite{IvaVasTan02}. Below we include a short proof for the convex penalties (not necessarily strict convex), but for finite dimensional ground space.  Note that if $\rf$ is strictly convex and $z_0$ denotes  its unique minimizer, then the
 condition of \eqref{eq:ass-dist}  reads $G\kl{ \X z_0}  > q$, which simply means that the unique minimizer  of $\rf$ is not feasible for \eqref{eq:a1}.

\begin{proof}[Proof of \cref{thm:lagrange}]
Part \emph{\ref{th:lag:a}}:
Suppose first that  $\beta_\star$ is a solution of
\eqref{eq:a2} for some $\gamma > 0$ and take $q = G\kl{\X \beta_\star}$,
$\Psi(\beta)= G\kl{\X \beta} -q$. Then we have $\Psi\kl{ \beta_\star} =0$ which implies
that
$\rf \mkl{ \beta_\star } + \mu \Psi\mkl{\beta_\star} =
\rf \mkl{\beta_\star } + \frac{1}{\gamma}\Psi\mkl{\beta_\star}
$ for all $\mu \geq 0$.
 Further, since $\beta_\star$ is a minimizer of $G(\X \beta; Y) + \gamma \rf(\beta)$, we also
 have $ \gamma \rf(\beta_\star) + \kl{G\kl{\X \beta_\star} -q}
\leq  \gamma \rf(\beta) + \kl{G\kl{\X \beta} -q}$, which yields
$ \rf(\beta_\star) + \frac{1}{\gamma } \Psi\kl{\beta_\star}
\leq  \rf(\beta) + \frac{1}{\gamma } \Psi\kl{\beta}$ for all $\beta \in \R^p$.
In total, we have verified that
\begin{equation*}
	\rf \mkl{ \beta_\star }
	+
    \mu \Psi\mkl{\beta_\star}
	\leq
    \rf \mkl{\beta_\star }
	+
    \frac{1}{\gamma}
    \Psi\mkl{\beta_\star}
    \leq
	\rf \kl{\beta }
	+ \frac{1}{\gamma} \Psi\kl{\beta}
\quad \text{ all }
\kl{\beta, \mu} \in \R^p \times \R_{\geq 0}\,.
\end{equation*}
Hence $\skl{\beta_\star, 1/\gamma}$ is a saddle-point of
the Lagrangian $\mathcal L\kl{\beta,\mu} = \rf \kl{\beta } +
\mu \Psi\kl{\beta}$.  \cref{lem:kt} implies that $\beta_\star
\in \argmin \set{ \rf\skl{\beta} \colon \Psi\kl{\beta} \leq 0}$. Recalling
that $\Psi(\beta)= G\kl{\X \beta} -q$ shows that $\beta_\star$ is a
solution of \eqref{eq:a1}.

Part \emph{\ref{th:lag:b}}: Now suppose that $\beta_\star \in \R^p$ is a solution
of  the constrained problem in \eqref{eq:a1} for some $q>0$.
According to our assumption there exists $\beta_0 \in \R^p$
with $G \kl{\X \beta_0} < q$. Consequently,  the convex functionals
$\rf\skl{\beta}$ and  $\Psi(\beta)= G \kl{\X \beta} - q$ satisfy Slater's condition.
 According  to \cref{lem:kt} the corresponding Lagrangian $\mathcal L \kl{\beta, \mu}
 = \rf(\beta) +  \mu \Psi\kl{\beta}$ admits
 a saddle-point $\skl{\beta_\star, \mu_\star} \in \R^p \times \R_{\geq 0}$.
In particular, $\beta_\star$ minimises $\rf\skl{\beta}
+ \mu_\star  G \kl{\X \beta}$. It remains to show that $\mu_\star >0$. To that end, suppose  to the contrary that
$\mu_\star =0$. Then $\beta_\star \in \argmin \rf$.
Since  $\beta_\star$ is also a solution of  \eqref{eq:a1}, it in particular satisfies the constraint in
\eqref{eq:a1}, which implies
\begin{equation*}
\beta_\star \in \argmin (\rf)  \cap \set{ \beta \in \R^p \colon  G\kl{\X \beta } \leq  q}
\end{equation*}
This contradicts the assumption in \eqref{eq:ass-dist}.  Hence $\mu_\star >0$   and further
$G\kl{ \X \beta_\star }= q$.
This concludes the proof of the theorem
after taking $\gamma = 1/\mu_\star$.
\end{proof}

Next  we show the equivalence between the penalized and $R$-constrained formulation, i.e.\  \eqref{eq:a2} and \eqref{eq:a3}.
Together with  \cref{thm:lagrange}
this also implies that the $G$- and $R$-constrained formulation are equivalent, too.

\begin{theorem}[Equivalence of \eqref{eq:a2} and \eqref{eq:a3}]\label{thm:lagrange-b}
Let $\rf\colon \R^p \to \R\cup \set{\infty}$ and $G\colon \R^n \to \R\cup \set{\infty}$
be proper convex functionals and let $\X\in \R^{n\times p}$.
Then \eqref{eq:a2} and \eqref{eq:a3} are equivalent in the following sense:
\begin{enumerate}[label=\emph{(\alph*)}] 
\item
\label{th:lag-b:a}
If $\beta_\star$ is a solution of
\eqref{eq:a2} for some $\gamma >0$,
then $\beta_\star$ solves
\eqref{eq:a3} with $c = \rf(\beta)$.

\item
\label{th:lag-b:b}
For a given $c>0$, suppose $\rf(\beta_0) < c$ and $G\kl{\X\beta_0}< \infty$
for  some $\beta_0 \in \R^p$ and  let
$\beta_\star$ be a solution of \eqref{eq:a3}. Then
there exists $\gamma \geq 0$  such that  $\beta_\star$
solves~\eqref{eq:a2} and $\gamma  ( \rf(\beta_\star) - c ) = 0$.
\end{enumerate}
\end{theorem}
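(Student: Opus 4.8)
The plan is to reuse the saddle-point machinery of \cref{lem:kt} exactly as in the proof of \cref{thm:lagrange}, but with the roles of the objective and the constraint interchanged: now $G(\X\,\cdot\,)$ plays the role of the objective $\Phi$ while $R$ enters the constraint. Accordingly I would set $\Phi(\beta) = G(\X\beta)$ and $\Psi(\beta) = R(\beta) - c$, both proper and convex, and work throughout with the Lagrangian $\mathcal L(\beta,\mu) = G(\X\beta) + \mu\kl{R(\beta)-c}$.

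For part~\ref{th:lag-b:a} (reading the stated threshold as $c = \rf(\beta_\star)$, so that $\Psi(\beta_\star)=0$), I would verify that $(\beta_\star,\gamma)$ is a saddle point of $\mathcal L$. The left inequality $\mathcal L(\beta_\star,\mu) \le \mathcal L(\beta_\star,\gamma)$ holds with equality, since $\mu\Psi(\beta_\star)=0=\gamma\Psi(\beta_\star)$ and both sides reduce to $G(\X\beta_\star)$. The right inequality $\mathcal L(\beta_\star,\gamma)\le\mathcal L(\beta,\gamma)$, after cancelling the constant $-\gamma c$, is precisely the statement that $\beta_\star$ minimizes $G(\X\beta)+\gamma\rf(\beta)$, i.e.\ the hypothesis that $\beta_\star$ solves \eqref{eq:a2}. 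Then \cref{lem:kt}~\ref{lem:kt:a} gives $\beta_\star\in\argmin\set{G(\X\beta):\Psi(\beta)\le 0}=\argmin\set{G(\X\beta):R(\beta)\le c}$, which is \eqref{eq:a3}.

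For part~\ref{th:lag-b:b} the essential preliminary is to check Slater's condition in the correct form for this Lagrangian: the hypothesis furnishes $\beta_0$ with $G(\X\beta_0)<\infty$ (so $\beta_0\in\mathcal D(\Phi)$) and $\rf(\beta_0)<c$ (so $\Psi(\beta_0)<0$), which is exactly $\mathcal D(\Phi)\cap\set{\beta:\Psi(\beta)<0}\ne\emptyset$. Since $\beta_\star$ solves \eqref{eq:a3}, i.e.\ $\beta_\star\in\argmin\set{\Phi(\beta):\Psi(\beta)\le 0}$, \cref{lem:kt}~\ref{lem:kt:b} produces a multiplier $\mu_\star\ge 0$ such that $(\beta_\star,\mu_\star)$ is a saddle point. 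The right saddle-point inequality then shows $\beta_\star$ minimizes $G(\X\beta)+\mu_\star\rf(\beta)-\mu_\star c$, hence also $G(\X\beta)+\mu_\star\rf(\beta)$, so $\beta_\star$ solves \eqref{eq:a2} with $\gamma=\mu_\star$.

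Finally I would read off complementary slackness from the left saddle-point inequality $\mu\Psi(\beta_\star)\le\mu_\star\Psi(\beta_\star)$, valid for all $\mu\ge 0$: choosing $\mu=0$ gives $\mu_\star\Psi(\beta_\star)\ge 0$, whereas feasibility of $\beta_\star$ yields $\Psi(\beta_\star)=\rf(\beta_\star)-c\le 0$ and thus $\mu_\star\Psi(\beta_\star)\le 0$; together these force $\gamma\kl{\rf(\beta_\star)-c}=\mu_\star\Psi(\beta_\star)=0$. I do not anticipate a genuine obstacle, as the argument is a transcription of \cref{thm:lagrange} with objective and constraint swapped. The only points requiring care are stating Slater's condition in the correct direction (strict feasibility $\rf(\beta_0)<c$ together with $\beta_0\in\mathcal D(G(\X\,\cdot\,))$) and observing that, in contrast to \cref{thm:lagrange}, one does \emph{not} need to rule out $\mu_\star=0$: the conclusion only claims $\gamma\ge 0$, so the degenerate case where the constraint $\rf(\beta)\le c$ is inactive at the optimum is admissible and is absorbed into the complementary-slackness identity.
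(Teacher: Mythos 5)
Your proposal is correct and follows essentially the same route as the paper's own proof: both parts instantiate \cref{lem:kt} with $\Phi(\beta)=G(\X\beta)$ and $\Psi(\beta)=\rf(\beta)-c$, verify the saddle-point inequalities directly for part~(a), invoke Slater's condition for part~(b), and extract the complementary-slackness identity $\gamma\kl{\rf(\beta_\star)-c}=0$ from the left saddle-point inequality exactly as you do. If anything, you are slightly more careful than the paper in spelling out that Slater's condition also requires $\beta_0\in\mathcal D\kl{G(\X\,\cdot\,)}$, which is supplied by the hypothesis $G(\X\beta_0)<\infty$.
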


\begin{proof}
Part \emph{\ref{th:lag-b:a}}: 
Suppose first that  $\beta_\star$ is a solution of
\eqref{eq:a2} for some $\gamma > 0$ and take $c = \rf\kl{\X \beta_\star}$,
$\Phi\kl{\beta} = G\kl{\X\beta}$, $\Psi(\beta)= \rf\skl{\beta} - c$.
Then we have $\Psi \kl{\beta_\star} =0$ which implies that
$\Phi \kl{ \beta_\star } + \mu \Psi\kl{\beta_\star} =
\Phi \mkl{\beta_\star } + \gamma \Psi\kl{\beta_\star}
$ for all $\mu \geq 0$.
 Further, since $\beta_\star$ is a minimizer of $G(\X \beta) + \gamma \rf(\beta)$, we also
 have $ G(\X \beta_\star) + \gamma \kl{\rf(\beta_\star)-c}
\leq  G(\X \beta) + \gamma \kl{\rf(\beta)-c}$
for $\beta \in \R^p$, which yields
$ \Phi(\beta_\star) + \gamma \Psi\kl{\beta_\star}
\leq  \Phi(\beta) + \gamma \Psi\kl{\beta}$.
In total, we have verified the inequalities
\begin{equation*}
	\Phi(\beta_\star) + \mu \Psi\kl{\beta_\star}
	\leq
 \Phi(\beta_\star) +
 \gamma \Psi\kl{\beta_\star}
    \leq
 \Phi(\beta) +
 \gamma \Psi\kl{\beta}
\quad\text{for }\kl{\beta, \mu} \in \R^p \times \R_{\geq 0}  \,.
\end{equation*}
Hence  $\skl{\beta_\star,  \gamma}$ is a saddle-point of
the corresponding Lagrangian $\mathcal L\kl{\beta,\mu} = \Phi \kl{\beta } +
\mu \Psi\kl{\beta}$. \cref{lem:kt} therefore implies that
$\beta_\star \in \argmin \sset{ \Phi \kl{\beta } \colon  \Psi\kl{\beta} \leq 0 }$. Recalling that $\Phi\kl{\beta} = G\kl{\X\beta}$, $\Psi(\beta)= \rf\skl{\beta} -c$ shows that
$\beta_\star$ is a solution of \eqref{eq:a1}.

Part \emph{\ref{th:lag-b:b}}: 
Suppose that $\beta_\star \in \R^p$ is a solution
of \eqref{eq:a3} of some $c>0$.
According to our assumption there exists $\beta_0 \in \R^p$
with $\rf \kl{\beta_0} < q$. Consequently,  the convex functionals
$\Phi \kl{\beta}  = G(\X \beta)$ and $\Psi(\beta)= \rf\skl{\beta} -c$ satisfy  Slater's
condition.
According  to \cref{lem:kt} the corresponding Lagrangian
 $\mathcal L \kl{\beta, \mu} = G(\X \beta) +  \mu \kl{\rf\skl{\beta} -c}$ admits
 a saddle-point $\skl{\beta_\star, \gamma} \in \R^p \times \R_{\geq 0}$.
In particular, $\beta_\star$ minimises $G(\X \beta) +  \gamma \rf\skl{\beta}$.
 Since $\beta_\star$ is feasible, we have  $\rf \kl{\beta_\star} - c \leq 0$.
 Further, for all $\mu \geq 0$,
\begin{equation*}
	G(\X \beta_\star) +
	\mu \kl{\rf \kl{\beta_\star} -c}
	\leq
	 G(\X \beta_\star) +
	 \gamma \kl{\rf \kl{\beta_\star} -c},
 \end{equation*}
 which implies $\gamma(\rf \kl{\beta_\star} -c) =0$.
\end{proof}

We emphasize again that the relation between the threshold
$q$ and the penalty $\gamma$ in \cref{thm:lagrange} is
only given in an implicit manner. Therefore, the unconstrained problem  in \eqref{eq:a2} cannot be immediately used for solving the
(more difficult) constrained problem in \eqref{eq:a1}. The same statement applies to the
equivalence of the penalized and $R$-constrained formulation. 
As shown in the next subsection, a more explicit relation between the $G$-constrained formulation and yet another
different  unconstrained optimization problem can be derived from the
Fenchel duality.

The derived equivalences in particular apply to the MIND in \eqref{eq:mind}
by  taking
\begin{equation*}
G(v)
=
\max_{\kk \in \KK}
\set{
\frac{\norm{ \T_{\kk}   \kl{ v - Y }  }_2}{w_\kk} - s_\kk}
\end{equation*}
for $v \in \R^n$, where  $\T_{\kk}$ are given probe functionals,
 $w_\kk$ and $s_\kk$ certain weights and $Y$ the given data,
 see \cref{def:msd}.  Note that the functional $G$  is obviously convex and proper.

\subsection{Fenchel duality}
\label{app:group-lasso}

For a proper convex functional
$\rf \colon \R^p \to \R \cup \set{\infty}$,
we denote by
\begin{align*}
\rf^* \colon \R^p &\to \R \cup \set{\infty} 
\\
\mu
 &\mapsto
 \sup \set{ \inner{\mu}{ \beta} - \rf\skl{\beta}
 : \beta \in \R^p }
\end{align*}
the \emph{Fenchel conjugate}
of $\rf$, {and by $\partial R(\beta_0)$ the \emph{subdifferential} of $R$ at $\beta_0$, i.e.\ $\beta^* \in \partial R(\beta_0)$ if and only if 
$$
R(\beta_0) < \infty\quad \text{ and }\quad \inner{\beta-\beta_0}{\beta^*} + R(\beta_0) \le R(\beta)\, \text{ for all }\beta \in \R^p,
$$
see \cite{Roc70}.}

\begin{definition}
Let  $\rf\colon \R^p \to \R \cup \set{\infty}$
and $G \colon \R^n \to \R \cup \set{\infty}$
be two proper convex functionals and let
$\X \in \R^{n \times p}$.
Then,
\begin{align}
\minimize_{\beta \in \R^p}\; & G \kl{\X \beta} + \rf\skl{\beta}  && \text{\emph{(primal)}}  \label{eq:primal}\\
\min_{\mu \in \R^n}\; & G^*\kl{\mu}  +  \rf^* \skl{-\X^{\mathsf{T}} \mu}    && \text{\emph{(dual)}}    \label{eq:dual}
\end{align}
are referred   to as the primal and  dual minimization problem, respectively,
corresponding to $\rf$, $G$ and $\X$.
\end{definition}

\begin{theorem}[Fenchel's duality theorem]\label{lem:kkt}
Suppose that $\rf\colon \R^p \to \R \cup \sset{\infty}$
and $G \colon \R^n \to \R \cup \sset{\infty}$
are proper convex functionals, and let $\X
\in \R^{n \times p}$.
Suppose further that there exists $\beta_0 \in \R^p$ such that
$\rf\kl{\beta_0} < \infty$ and  that $G$ is continuous at
$\X \beta_0$.
Then, for any $\skl{\beta_\star, \mu_\star} \in \R^p \times \R^n$,
the following statements are equivalent:
\begin{enumerate}[label=\emph{(\alph*)}] 
\item
$\beta_\star$ is a solution of the primal problem in \eqref{eq:primal}
and $\mu_\star$ a solution of the dual problem  in \eqref{eq:dual}.

\item
The Kuhn-Tucker conditions are satisfied, i.e.
\begin{align}\label{eq:kkt1}
    \mu_\star & \in \partial G \skl{\X \beta_\star}
     \\
     \label{eq:kkt2}
  - \X^{\mathsf{T}} \mu_\star &\in \partial \rf\skl{\beta_\star}\,.
\end{align}
\end{enumerate}
\end{theorem}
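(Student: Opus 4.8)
The plan is to derive the equivalence from the Fenchel--Young inequality together with the absence of a duality gap (strong duality) guaranteed by the stated constraint qualification. Recall that for a proper convex functional $f$ one has $f(x) + f^*(\mu) \ge \inner{\mu}{x}$ for all $x,\mu$, with equality precisely when $\mu \in \partial f(x)$. First I would record \emph{weak duality}: for arbitrary $\beta \in \R^p$ and $\mu \in \R^n$, applying Fenchel--Young to $G$ at $\skl{\X\beta,\mu}$ and to $\rf$ at $\skl{\beta,-\X^{\mathsf{T}}\mu}$ and summing the two inequalities yields
\begin{equation*}
\ekl{G\kl{\X\beta} + \rf\skl{\beta}} + \ekl{G^*\kl{\mu} + \rf^*\skl{-\X^{\mathsf{T}}\mu}} \ge \inner{\mu}{\X\beta} + \inner{-\X^{\mathsf{T}}\mu}{\beta} = 0,
\end{equation*}
since the cross terms cancel. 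Thus the primal objective at $\beta$ plus the dual objective at $\mu$ is always nonnegative.

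For the implication \emph{(b)}$\Rightarrow$\emph{(a)} I would simply run this computation in reverse. If \eqref{eq:kkt1}--\eqref{eq:kkt2} hold, then the equality case of Fenchel--Young applies to both $G$ and $\rf$ at $\skl{\beta_\star,\mu_\star}$, so the displayed inequality becomes an equality there; that is, the primal and dual objectives at this pair sum to zero. Combined with weak duality (which forces this sum to be $\ge 0$ everywhere), this shows that $\beta_\star$ attains the primal infimum and $\mu_\star$ the dual infimum. This direction uses only Fenchel--Young and weak duality, not the constraint qualification.

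For \emph{(a)}$\Rightarrow$\emph{(b)} the decisive ingredient is strong duality: under the hypothesis that some $\beta_0$ satisfies $\rf\kl{\beta_0} < \infty$ with $G$ continuous at $\X\beta_0$, Fenchel's duality theorem (see \citealp[Theorem~31.1]{Roc70}) guarantees that the optimal primal and dual values sum to zero, i.e.\ there is no duality gap. Writing this zero gap as
\begin{equation*}
\ekl{G\kl{\X\beta_\star} + G^*\kl{\mu_\star} - \inner{\mu_\star}{\X\beta_\star}} + \ekl{\rf\skl{\beta_\star} + \rf^*\skl{-\X^{\mathsf{T}}\mu_\star} - \inner{-\X^{\mathsf{T}}\mu_\star}{\beta_\star}} = 0,
\end{equation*}
where once more the cross terms cancel, each bracket is nonnegative by Fenchel--Young and hence must vanish separately. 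The vanishing of the first bracket is exactly $\mu_\star \in \partial G\kl{\X\beta_\star}$, and that of the second is exactly $-\X^{\mathsf{T}}\mu_\star \in \partial\rf\skl{\beta_\star}$.

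The main obstacle is the no-gap identity invoked in \emph{(a)}$\Rightarrow$\emph{(b)}. Weak duality and the two Fenchel--Young equalities are elementary, but closing the gap $p_\star = -d_\star$ is the substantive analytic content: it can fail for general proper convex functionals and requires a qualification such as the continuity of $G$ at a point of $\mathcal D(\rf)$ lying in the range of $\X$. I would therefore invoke the corresponding result of \citet{Roc70} rather than reprove it, taking care only to check that the hypothesis stated here matches the qualification used there (and noting that, since $\mu_\star$ is assumed to solve the dual, I need the no-gap conclusion but not the separate attainment claim).
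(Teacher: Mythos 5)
Your proof is correct. It is worth knowing that the paper offers no argument of its own here: its proof consists of the single line \enquote{See \citet[Section~31]{Roc70}}, where this equivalence is precisely the Fenchel duality/Kuhn--Tucker theorem. What you propose is the standard derivation behind that citation, and it buys some insight the bare reference hides: the implication (b)$\Rightarrow$(a) is elementary---Fenchel--Young plus weak duality, valid for arbitrary proper convex $\rf$ and $G$ with no constraint qualification at all---while the hypothesis on $\beta_0$ enters only to close the duality gap in (a)$\Rightarrow$(b); and, as you correctly note, since $\mu_\star$ is \emph{assumed} to solve the dual you need only the no-gap identity, not the dual attainment part of Rockafellar's theorem. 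Two routine points would still need to be written out in a full version. First, \citet{Roc70} formulates strong duality with a relative-interior qualification rather than the continuity hypothesis assumed here; in finite dimensions, continuity of $G$ at $\X\beta_0$ places $\X\beta_0$ in the interior of $\mathcal D(G)$, and sliding $\beta_0$ slightly toward a point of the relative interior of $\mathcal D(\rf)$ produces a point satisfying Rockafellar's condition---you flag this translation but should actually perform it (or invoke a formulation of Fenchel duality stated directly with a continuity qualification, as is common in the infinite-dimensional literature). Second, the bracket rearrangement in (a)$\Rightarrow$(b) requires all four terms to be finite, so no $\infty-\infty$ occurs; this follows because the qualification makes the primal value $<\infty$, attainment makes it $>-\infty$, strong duality then makes the dual value finite, and conjugates of proper functionals never take the value $-\infty$. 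Neither point is a gap in the structure of your argument.
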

\begin{proof}
See~\citet[Section~31]{Roc70}.
\end{proof}
If one takes $G\kl{v} = \frac{1}{2} \norm{ Y - v}_2^2$,
then \cref{lem:kkt} yields the following result:

\begin{theorem}
Let\label{lem:2}  $\rf \colon \R^p \to \R \cup \set{\infty}$  be a proper convex functional and let $Y \in \R^n$. Then,
\begin{equation*}
\argmin_{\beta \in \R^p}\set{ \frac{1}{2} \norm{Y-\X \beta }_2^2 +  \rf\skl{\beta}}
= \argmin_{\beta \in \R^p}\set{ \frac{1}{2} \norm{\X \beta}_2^2  +  \rf^*\skl{
\X^{\mathsf{T}}\skl{Y-\X  \beta }}} \,.
\end{equation*}
\end{theorem}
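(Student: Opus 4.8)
The plan is to specialize Fenchel's duality theorem (\cref{lem:kkt}) to the quadratic data term $G(v) = \tfrac12\norm{Y - v}_2^2$, exactly as suggested in the sentence preceding the statement. First I would record the three facts about this $G$ that make the specialization clean: it is finite and continuous on all of $\R^n$, so the continuity hypothesis of \cref{lem:kkt} holds automatically at $\X\beta_0$ for any $\beta_0 \in \mathcal D(\rf)$ (which exists because $\rf$ is proper); it is differentiable with $\partial G(v) = \set{v - Y}$; and a direct evaluation of the defining supremum gives the conjugate $G^*(\mu) = \tfrac12\norm{\mu}_2^2 + \inner{\mu}{Y}$. Feeding $\partial G(\X\beta_\star) = \set{\X\beta_\star - Y}$ into the Kuhn--Tucker conditions \eqref{eq:kkt1}--\eqref{eq:kkt2} forces $\mu_\star = \X\beta_\star - Y$ and reduces the optimality condition for the primal problem \eqref{eq:primal} to the single inclusion $\X^{\mathsf{T}}(Y - \X\beta_\star) \in \partial\rf(\beta_\star)$, which is equally what the subdifferential sum rule yields for the left-hand functional $\tfrac12\norm{Y - \X\beta}_2^2 + \rf(\beta)$.

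I would prove the forward inclusion (any minimizer of the left-hand functional minimizes the right-hand one) by an elementary Fenchel--Young estimate, rather than by differentiating the right-hand objective. Writing $v_\star = \X^{\mathsf{T}}(Y - \X\beta_\star)$ and $v = \X^{\mathsf{T}}(Y - \X\beta)$, the inclusion $v_\star \in \partial\rf(\beta_\star)$ gives the Fenchel--Young \emph{equality} $\rf^*(v_\star) = \inner{v_\star}{\beta_\star} - \rf(\beta_\star)$, while the definition of the conjugate gives the \emph{inequality} $\rf^*(v) \geq \inner{v}{\beta_\star} - \rf(\beta_\star)$ for every $\beta$. Subtracting these and reinstating the quadratic terms, the difference of the two right-hand objectives collapses to
\begin{equation*}
\Bigl(\tfrac12\norm{\X\beta}_2^2 + \rf^*(v)\Bigr) - \Bigl(\tfrac12\norm{\X\beta_\star}_2^2 + \rf^*(v_\star)\Bigr) \;\geq\; \tfrac12\norm{\X\beta - \X\beta_\star}_2^2 \;\geq\; 0,
\end{equation*}
so $\beta_\star$ minimizes the right-hand functional, with a sharp quadratic margin. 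This is consistent with the evident fact that the right-hand objective depends on $\beta$ only through $\X\beta$.

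For the converse inclusion I would read the optimality of the right-hand objective through the inverse subdifferential rule $v \in \partial\rf(\beta) \Leftrightarrow \beta \in \partial\rf^*(v)$ (valid for closed proper convex $\rf$, using $\rf^{**} = \rf$) together with the chain rule applied to the affine composition $\beta \mapsto \rf^*\kl{\X^{\mathsf{T}}(Y - \X\beta)}$. Stationarity would produce some $\eta \in \partial\rf^*(v_\star)$ with $\X^{\mathsf{T}}\X\beta_\star = \X^{\mathsf{T}}\X\eta$, equivalently $\X\eta = \X\beta_\star$ (since the difference has vanishing $\X^{\mathsf{T}}\X$-seminorm), and inverting the subdifferential then turns $\eta$ into a point satisfying the primal inclusion of the first paragraph. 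I expect this direction to be the main obstacle, for two reasons. Technically, the chain rule for $\rf^*\circ(\text{affine})$ needs a constraint qualification on $\mathcal D(\rf^*)$ that must be checked separately from the hypotheses of \cref{lem:kkt}. More substantively, because the right-hand objective sees $\beta$ only through $\X\beta$, the recovered stationary point $\eta$ agrees with $\beta_\star$ only up to $\ker\X$; passing from "$\eta$ minimizes the left-hand functional" to "$\beta_\star$ does" requires that $\beta_\star$ be the $\rf$-smallest element of its fibre $\set{\beta : \X\beta = \X\beta_\star}$, and it is precisely here that injectivity of $\X$ (or an equivalent identifiability assumption) is what guarantees the two argmin sets coincide literally rather than only after applying $\X$.
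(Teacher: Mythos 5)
Your proof of the forward inclusion (left argmin contained in right argmin) is correct and takes a genuinely different route from the paper's. The paper specializes \cref{lem:kkt}: it invokes attainment of the dual problem \eqref{eq:dual}, identifies $\mu_\star = \X\beta_\star - Y$ from \eqref{eq:kkt1}, computes $G^*(\mu_\star) = \tfrac12\norm{\X\beta_\star}_2^2 - \tfrac12\norm{Y}_2^2$, and concludes by observing that the dual objective evaluated along $\mu = \X\beta - Y$ coincides with the right-hand objective up to the constant $-\tfrac12\norm{Y}_2^2$. You bypass dual attainment altogether: Fermat's rule gives $\X^{\mathsf{T}}(Y - \X\beta_\star) \in \partial\rf(\beta_\star)$, and your Fenchel--Young equality/inequality pair collapses the difference of right-hand objectives to at least $\tfrac12\norm{\X\beta - \X\beta_\star}_2^2$; the computation checks out. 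Your argument is more elementary, self-contained, and even quantitative; both are sound for this direction.

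Your suspicion about the converse is not a defect of your attempt --- it exposes a genuine gap in the statement and in the paper's proof, which dismisses this direction with the single sentence \enquote{Similar arguments show that the converse relation also holds.} Without injectivity of $\X$ the asserted \emph{equality} of argmin sets is false. Counterexample: $n=1$, $p=2$, $\X = (1 \;\; 0)$, $\rf(\beta) = \tfrac12\norm{\beta}_2^2$, so that $\rf^* = \rf$. The left-hand objective $\tfrac12(Y-\beta_1)^2 + \tfrac12\beta_1^2 + \tfrac12\beta_2^2$ has the unique minimizer $(Y/2,\, 0)$, while the right-hand objective equals $\tfrac12\beta_1^2 + \tfrac12(Y-\beta_1)^2$, independent of $\beta_2$, so its argmin is the whole line $\set{(Y/2,\, t) : t \in \R}$. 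This is exactly the structural obstruction you name: the right-hand functional sees $\beta$ only through $\X\beta$, hence its argmin is a union of fibres of $\X$, whereas the left-hand argmin consists only of those points of the optimal fibre at which $\X^{\mathsf{T}}(Y - \X\beta) \in \partial\rf(\beta)$. What is true in general is your inclusion, together with equality of the optimal fits $\X\beta_\star$ (the right-hand argmin is the full preimage under $\X$ of the unique optimal fit, by strict convexity in $\X\beta$); under injectivity of $\X$ --- plus closedness of $\rf$ for the subdifferential inversion $v \in \partial\rf(\beta) \Leftrightarrow \beta \in \partial\rf^*(v)$ and a qualification condition for the chain rule, both of which you also correctly flag as hypotheses absent from the statement --- your sketch completes the converse and the theorem holds as written. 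In short, your proposal establishes everything that is actually true here; the remaining gap lies in the statement and the paper's own proof, not in your argument.
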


\begin{proof}
If  $\beta_\star$ is a minimizer  of $\frac{1}{2} \snorm{Y-\X \beta }_2^2 +  \rf\skl{\beta}$, then it  is a solution of the primal problem in \eqref{eq:primal}
corresponding to $G\kl{v} = \frac{1}{2} \snorm{ Y - v}_2^2$.
In this case, one easily computes 
$ G^*\kl{\mu} = \frac{1}{2} \snorm{\mu}_2^2  + \sinner{Y}{ \mu}
$, see \citet{Roc70}. Now let  $\mu_\star \in \R^n$ denote a solution of the dual problem in \eqref{eq:dual}.
Since $\rf$ is proper there exists  $\beta_0 \in \R^p$
such that $\rf\kl{\beta_0} < \infty$ and obviously $G$ is continuous
at $\X\beta_0$. Hence we can apply \cref{lem:kkt} which
implies that  the Kuhn-Tucker conditions  in \eqref{eq:kkt1} and \eqref{eq:kkt2} are satisfied.
In particular, \eqref{eq:kkt1} implies  $\mu_\star =  \X \beta_\star - Y $, and therefore
\begin{align*}
	G^* \skl{\mu_\star}
	&=
	G^*\skl{\X \beta_\star - Y}
	\\
	&=
	\frac{1}{2} \snorm{\X \beta_\star - Y}_2^2
	+ \sinner{Y}{ \X \beta_\star - Y}
	\\
	&= \frac{1}{2} \snorm{\X \beta_\star}_2^2 -\frac{1}{2} \norm{Y}_2^2 \,.
\end{align*}
Because  $\mu_\star$ is a solution of the dual problem in \eqref{eq:dual}, this shows that $\beta_\star$ minimizes  $\frac{1}{2} \snorm{\X \beta}_2^2  +  \rf^*\skl{
\X^{\mathsf{T}}\skl{Y-\X  \beta }}$.  Similar arguments  show that the converse relation also holds.
\end{proof}

\section{Proofs}
Here we provide proofs for the assertions in the paper for the sake of completeness, noting that many of them are known but scattered in the literature. These assertions can be seen as special cases of the results obtained in \cref{s:rlvf}. But we favor simple and direct proofs whenever it is possible.

\subsection{Proofs in \cref{sec:threshold}}
In this subsection, we give proofs for the assertions in \cref{sec:threshold} in the paper. Note that the following proofs do not rely on any results from \cref{s:rlvf}. 

\begin{proof}[Proof of \cref{thm:soft}]
Because $\kl{ \phi_\la}_{\la \in \La}$ is an orthonormal basis, we can uniquely write any element in $\R^n$  as linear combination $ \beta  =  \sum x_{\la} \phi_\la$ with coefficients $x_{\la} = \inner{\phi_\la}{\beta}$.   Hence $ \hat \beta  =  \sum \hat x_{\la} \phi_\la$ is a solution of the stated optimization problem if and only if every coefficient $\hat x_{\la}$ is a solution of the one-dimensional optimization problem
 \begin{equation*} 
	\left\{
	\begin{aligned}
	&\min_{x\in \R}  &&   \abs{x_\la}^r   \\
	&\st &&
	\abs{\inner{\phi_\la}{Y} - x_\la }
	\leq   q_\la
	\;.
	\end{aligned}
	\right.
\end{equation*}
The unique minimizer is given by the soft-thresholding
$\hat x _\la  = \eta^{\mathrm{(soft)}} \kl{ \inner{\phi_\la}{Y}, q_\la }$,
which yields the desired characterization using the objective
$\sum_{\la\in \La}\abs{\inner{\phi_\la}{\beta}}^r$ for any $r>0$. 

In the special
case $r=2$, we  have  $\sum_{\la\in \La} \abs{\inner{\phi_\la}{\beta}}^2
= \sum_{i=1}^n \beta_i^2$, which shows the second claim.
\end{proof}

\begin{proof}[Proof of \cref{thm:nng}]
This immediately follows from the variational characterization of soft-thresholding given in \cref{thm:soft} and the relation
\begin{equation*}
	\eta^{\mathrm{(JS)}} \kl{x,q}
	=
	\eta^{\mathrm{(soft)}} \kl{x, \frac{q^2}{\sabs{x}} }
	\, \text{ for }  x \neq 0  \,.
\end{equation*}
\end{proof}

\begin{proof}[Proof of \cref{thm:soft-block}]
The proof is elementary and similar  to the one of  \cref{thm:soft}.
\end{proof}

\begin{proof}[Proof of \cref{thm:wvd}]
By expanding any function $f \in L^2\kl{\Om}$ in a wavelet series $f = \sum_{\la \in \La} x_\la \phi_\la$, with uniquely determined  coefficients $x_\la \in \R$, the stated minimization is equivalent to
\begin{equation*}
	\left\{
	\begin{aligned}
	&\min_x  &&   \sum_{\la\in \La}
	\abs{x_\la}^r   \\
	&\st &&
	\max_{\la \in \La }
	\frac{ \sabs{ \kappa_\la^{-1} \inner{v_\la}{  g  } -  x_\la} }{q_\la}
	\leq   1
	\,.
	\end{aligned}
	\right.
\end{equation*}
The solution of the latter optimization problem is given by component-wise soft-thresholding of $\kappa_\la^{-1}\inner{v_\la}{  g  }$ with thresholds $q_\la$ (see the proof of \cref{thm:soft}). This results in $\hat f^{\rm (WV)}$.
\end{proof}

\subsection{Proofs in \cref{sec:variational}}

In this subsection, we provide proofs for the assertions in \cref{sec:variational} in the paper. They rely on the results from \cref{s:rlvf}. 

To establish \cref{thm:group-lasso}, we apply \cref{lem:2} to the case where $\rf$ is the block $\ell^1$-penalty. To that end we first compute its Fenchel conjugate. Recall that subsets $\La_\kk$ are assumed to be disjoint. 

\begin{lemma} The\label{lem:3} Fenchel conjugate of  the functional
$\rf\skl{\beta}
	=
	\sum_{\kk \in \KK}
	w_\kk \norm{\base_\kk\beta}_2 $
is given by
\begin{equation*}
\rf^*\kl{\mu} =
\begin{cases}
	0  & \text{ if }
	\displaystyle \max_{\kk\in \KK}
	\frac{\norm{\base_\kk\mu}_2}{w_\kk} \leq 1
	\\
	\infty & \text{ otherwise}\,.
\end{cases}
\end{equation*}
\end{lemma}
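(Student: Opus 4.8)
The plan is to compute the Fenchel conjugate straight from its definition $\rf^*(\mu) = \sup_{\beta \in \R^p}\set{\inner{\mu}{\beta} - \rf(\beta)}$ by passing to coordinates in the orthonormal basis $(\phi_\la)_{\la\in\La}$ and then exploiting the block structure. First I would set $x_\la = \inner{\phi_\la}{\beta}$ and $y_\la = \inner{\phi_\la}{\mu}$. Since the basis is orthonormal, Parseval's identity gives $\inner{\mu}{\beta} = \sum_{\la\in\La} x_\la y_\la$ and $\norm{\base_\kk\beta}_2^2 = \sum_{\la\in\La_\kk} x_\la^2$. As $\beta$ ranges over $\R^p$, the coefficient vector $x = (x_\la)_{\la\in\La}$ ranges freely over $\R^\La$, so the supremum may equivalently be taken over $x$.

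Next, because $\La = \bigcup_{\kk\in\KK}\La_\kk$ is a \emph{disjoint} union that covers all of $\La$, the coordinates partition into independent groups indexed by $\kk$, and both $\inner{\mu}{\beta}$ and $\rf(\beta)$ decompose as sums over $\kk$. Hence the supremum factorizes into a sum of independent block suprema,
$$\rf^*(\mu) = \sum_{\kk\in\KK} \sup_{z\in\R^{\La_\kk}}\set{\inner{\base_\kk\mu}{z} - w_\kk\norm{z}_2},$$
where $\base_\kk\mu = (y_\la : \la\in\La_\kk)$. This separation is the one place where disjointness of the $\La_\kk$ (together with the fact that they exhaust $\La$) is genuinely used; without it the coordinates would be coupled and the block-wise factorization would fail. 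I regard this bookkeeping step — justifying the free ranging of $x$ and the clean split across blocks — as the only real subtlety, and I would state it explicitly before passing to the factorized form.

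Then I would evaluate each block supremum, which is precisely the Fenchel conjugate of the scaled Euclidean norm $z \mapsto w_\kk\norm{z}_2$. If $\norm{\base_\kk\mu}_2 \le w_\kk$, then Cauchy--Schwarz gives $\inner{\base_\kk\mu}{z} \le \norm{\base_\kk\mu}_2\norm{z}_2 \le w_\kk\norm{z}_2$, so the bracketed quantity is nonpositive and attains the value $0$ at $z = 0$, whence the block supremum equals $0$. If instead $\norm{\base_\kk\mu}_2 > w_\kk$, choosing $z = t\,\base_\kk\mu/\norm{\base_\kk\mu}_2$ and letting $t\to\infty$ makes the expression equal to $t\kl{\norm{\base_\kk\mu}_2 - w_\kk}\to\infty$, so the block supremum is $+\infty$.

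Finally I would assemble the pieces. The sum over $\kk$ equals $0$ exactly when every block satisfies $\norm{\base_\kk\mu}_2 \le w_\kk$, i.e.\ when $\max_{\kk\in\KK}\norm{\base_\kk\mu}_2/w_\kk \le 1$, and it equals $+\infty$ as soon as a single block violates this bound. This yields the stated dichotomy for $\rf^*(\mu)$ and completes the argument.
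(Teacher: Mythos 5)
Your proof is correct, and it is precisely the ``straightforward computation'' that the paper's one-line proof of \cref{lem:3} leaves to the reader: pass to coordinates in the orthonormal basis, use disjointness of the $\La_\kk$ to factor the supremum over blocks, and evaluate each block conjugate of $z \mapsto w_\kk\norm{z}_2$ via Cauchy--Schwarz. Your explicit attention to where disjointness and the exhaustion of $\La$ enter is a welcome addition, but there is no methodological difference to report.
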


\begin{proof}
This can be verified by straightforward computation.
\end{proof}

We are now ready to prove  \cref{thm:group-lasso}.

\begin{proof}[Proof of \cref{thm:group-lasso}]
\cref{lem:2} implies that
$\beta_\star\in \R^p$ is a minimizer of the adaptive group lasso in
\eqref{eq:group-lasso} if and only if it is a minimizer of the functional
$ \frac{1}{2} \snorm{\X \beta}_2^2  +  \rf^*\skl{
\X^{\mathsf{T}}\skl{Y-\X  \beta }}$ with the particular  choice
$\rf \skl{\beta}
	=
	\gamma \sum_{\kk \in \KK}
	w_\kk \snorm{\base_\kk\beta}_2 $.
According to \cref{lem:3} this is equivalent to the fact that
$\beta_\star$ minimizes $ \frac{1}{2} \snorm{\X  \beta}_2^2$
under the constraint
\begin{equation*} \label{eq:block-dual}
	\max_{\kk\in\KK}
	\frac{\snorm{ \base_\kk \X^{\mathsf{T}} \kl{Y- \X  \beta}  }_2}{w_\kk}
	\leq   \gamma
	\;.
\end{equation*}
This means that $\beta_\star$ is the unique
minimizer of \eqref{eq:mre-gl} as we intended to show.
\end{proof}

\begin{proof}[Proof of \cref{cor:lasso-dual}]
By specializing $\kl{\phi_\la}_{\la \in \La}$ to the standard basis in $\R^n$, letting the groups consist of single elements, and taking all weights  equal to one, the constraint in \cref{thm:group-lasso} reduces to $\snorm{\X^{\mathsf{T}} \skl{Y-\X\beta}}_\infty \leq \gamma$.  Further, in such a situation the adaptive group lasso reduces to the standard lasso. Hence the claim follows from \cref{thm:group-lasso}.
\end{proof}

\begin{proof}[Proof of \cref{thm:bt-gl}]
According to \cref{thm:soft-block}, the estimators in~\eqref{eq:block-st2} and \eqref{eq:mre-gl} coincide. According to \cref{thm:group-lasso}, the estimators in \eqref{eq:mre-gl} and \eqref{eq:group-lasso} coincide.
\end{proof}

\begin{proof}[Proof of \cref{th:rvnem}]
It follows immediately from \cref{thm:lagrange,thm:lagrange-b}.
\end{proof}


\end{document}